%% file: main.tex
\documentclass[journal,twoside,web]{ieeecolor}
\usepackage{silence}
\input{Files/preamble}

\newcommand{\lha}[1]{\textcolor{black}{#1}} 
\newcommand{\lhb}[1]{\textcolor{black}{#1}} 
\newcommand{\lhc}[1]{\textcolor{black}{#1}} 
\newcommand{\lhe}[1]{\textcolor{black}{#1}} 
\newcommand{\lhf}[1]{\textcolor{black}{#1}} 
\newcommand{\lhg}[1]{\textcolor{black}{#1}} 
\newcommand{\lhh}[1]{\textcolor{black}{#1}} 

\title{
Decentralised Plug-and-Play Stability Conditions for AC Grids---Part I}

\author{Liam Hallinan and Ioannis Lestas %
    \thanks{L. Hallinan and I. Lestas are with the Department of Engineering, University of Cambridge, Trumpington Street, Cambridge, CB2 1PZ, United Kingdom. Emails:
        {\tt\small <lh706, icl20>@cam.ac.uk}.}%
}

\begin{document}
\maketitle

\begin{abstract}
    The rise of renewable generation in AC power systems calls for small-signal stability assessment methods that do not rely on centralised, system-wide models. In this two-part paper, we present a decentralised stability framework based on \lhh{frequency-domain} \lhg{conditions} imposed on local subsystems (many admitting equivalent graphical interpretations) which collectively imply \lhg{the stability of} the entire grid. The framework generalises many previously reported results, with added flexibility \lhf{and plug-and-play compatibility} provided by the additional degrees of freedom of the \lhg{imposed conditions}, the ability to mix conditions across frequency ranges, conditions on extended subsystems that couple buses to their connecting lines and loads, \lhg{and a method to incorporate unstable subsystems}. Detailed, heterogeneous device and line models are accommodated in the frequency domain \lhg{using} impedance, admittance, \lhf{and power-flow models}, making the framework suitable as a foundation for grid codes that enable plug-and-play functionality in inverter-dominated grids. \lhf{Part I focuses on stable impedance and admittance representations} \lhg{and Part II extends the results to unstable subsystems, which are shown to be fundamentally present in many grid models}. The results \lhg{of Part I} are validated on a modified 9-bus case study, demonstrating the reduced conservatism \lhh{and significance} of the proposed approach. 
\end{abstract}

\section{Introduction}
\IEEEPARstart{P}{ower} systems are undergoing a profound transformation, driven by global decarbonisation targets, volatile fossil fuel markets, and rapid technological advancements.
As synchronous generation is replaced by inverter-based resources, the rotational inertia inherent to large turbines is disappearing, with significant consequences for how the future grid is stabilised and controlled \cite{milano_FoundationsChallenges_18}. 
Moreover, the highly distributed nature of the evolving grid means that classical small-signal techniques, such as eigenvalue analysis applied to fixed system-wide models, are no longer scalable. 

These challenges have motivated research into stability assessment methods that are scalable and compatible with decentralised grid operation \cite{spanias_SystemReference_19, watson_ScalableControl_21, strehle_UnifiedPassivityBased_21,  dey_PassivityBasedDecentralized_23, devane_PrimaryFrequency_17, 
vorobev_DecentralizedStability_19, chen_ExtendedFrequencyDomain_24, huang_GainPhase_24, huang_GeometricDecentralized_25, leng_GeometricDecentralized_26, gorbunov_DynamicPassivity_26, cifelli_DecentralizedSmall_25, haberle_DecentralizedParametric_25}.
Rather than performing a macroscopic centralised study, such approaches conclude stability by imposing constraints on the input--output properties or control parameters of individual grid subsystems. \lhf{These constraints must be sufficiently non-conservative to permit decentralised implementation while still guaranteeing network-wide stability}. In practice, these constraints take the form of an interconnection standard or grid code that every device must meet. If designed well, such standards can guarantee a level of plug-and-play compatibility, meaning that device-level controllers need not be redesigned whenever the network is altered.

However, previous work in this area either imposes overly conservative conditions on local subsystems, such as passivity \cite{spanias_SystemReference_19,watson_ScalableControl_21, strehle_UnifiedPassivityBased_21,  dey_PassivityBasedDecentralized_23} (which is difficult to satisfy at low frequencies \cite{chen_LimitationsUsing_24}), or simplifies the network dynamics by excluding voltage interactions \cite{devane_PrimaryFrequency_17}, ignoring frequency synchronisation \cite{watson_ScalableControl_21, strehle_UnifiedPassivityBased_21}, assuming lossless networks \cite{devane_PrimaryFrequency_17, chen_ExtendedFrequencyDomain_24} or homogeneous line parameters \cite{haberle_DecentralizedParametric_25,huang_GeometricDecentralized_25, leng_GeometricDecentralized_26}, or neglecting high-frequency bus dynamics \cite{haberle_DecentralizedParametric_25}. In addition, many approaches require centralised knowledge of the network parameters and interconnection structure \cite{huang_GainPhase_24, haberle_DecentralizedParametric_25, huang_GeometricDecentralized_25, leng_GeometricDecentralized_26, cifelli_DecentralizedSmall_25, gorbunov_DynamicPassivity_26}, meaning they are not robust to topological changes and fall short of true plug-and-play operation.

\lhf{Furthermore, the definition of a subsystem and the corresponding network coupling variables is not unique, with previous studies employing impedance-based \cite{spanias_SystemReference_19,watson_ScalableControl_21, strehle_UnifiedPassivityBased_21,dey_PassivityBasedDecentralized_23}, admittance-based \cite{vorobev_DecentralizedStability_19, chen_ExtendedFrequencyDomain_24, huang_GainPhase_24, huang_GeometricDecentralized_25, leng_GeometricDecentralized_26, gorbunov_DynamicPassivity_26}, and power-based \cite{dey_PassivityBasedDecentralized_23, cifelli_DecentralizedSmall_25, haberle_DecentralizedParametric_25} models of the bus subsystems. 
However, these representations may exhibit unstable open-loop poles in many common scenarios, further complicating the application of decentralised stability methods. These instabilities are often overlooked in the literature, an issue we address directly in Part II.}

To address these challenges, this two-part paper proposes a comprehensive small-signal stability certification framework for AC grids that \lhf{accounts for unstable subsystem poles,} generalises many previous results in the literature, and extends them to alternative subsystem representations, offering reduced conservatism. The framework takes the form of a set of quadratic constraints on the input--output properties of local grid subsystems in the frequency domain, which collectively ensure that the multivariable Nyquist criterion holds and that the grid is therefore stable \cite{desoer_GeneralizedNyquist_80}.
Although similar certificates were recently demonstrated for DC grids \cite{laib_DecentralizedStability_23}, extending the methods to AC networks \lhf{and deriving sufficiently non-conservative constraints} is more involved due to their MIMO nature and the need to regulate frequency dynamics. 

The proposed approach builds on integral quadratic constraint (IQC) theory \cite{megretski_SystemAnalysis_97}, extending earlier results for heterogeneous network systems \cite{laib_DecentralizedStability_23, lestas_NetworkStability_11, lestas_LargeScale_12, pates_ScalableDesign_17, khong_UnifyingFramework_16}. The results here are presented in a Nyquist-based framework so that homotopies linking alternative system configurations can be constructed \lhg{--- a key feature in the analysis that reduces the conservatism of the formulated specifications}. 

The advantageous features and contributions of the framework \lhf{developed across both parts of this paper} are as follows:
\begin{itemize}
    \item Constraints can be imposed on multiple representations of the bus subsystems --- the bus impedance (mapping input current to voltage), the bus admittance (its inverse), \lhf{or a power representation (mapping active and reactive power to phase angle and voltage magnitude)} --- and, within a given representation, different criteria can be mixed across regions of the frequency domain, reducing conservatism relative to classical approaches.

    \item In the impedance \lhf{and power} representations, the sparsity of the network can be further exploited to impose constraints on extended decentralised subsystems comprising a bus together with all the lines and loads connected to it. Bus-level and extended-subsystem conditions can then be mixed across different frequency ranges, further reducing conservatism.

    \item If designed appropriately, the decentralised constraints can be verified locally without centralised knowledge of the network parameters or topology. They therefore scale with network size and can enable true plug-and-play integration, with minimal disruption to non-local subsystems when a new bus or line is added.

    \item \lhf{When open-loop unstable poles are present in either the impedance or power representations, a dynamic loop transformation (a frequency-dependent change of the port variables) can be used to create a stable hybrid impedance/power representation, allowing plug-and-play compatible conditions to be specified that exploit the beneficial properties of each representation.}

    \item For fixed network topologies, the framework recovers many previous decentralised stability conditions as special cases and allows conic combinations of them to be applied across different frequency ranges. Moreover, the full-block multipliers in the quadratic constraints offer additional degrees of freedom for constructing more advanced \lhh{and less restrictive} conditions.

    \item The framework accommodates detailed, heterogeneous, dynamic line models together with full-order bus models. As the constraints are specified in the frequency domain, they are compatible with black-box device models.

    \item Many of the quadratic constraints admit graphical interpretations, similar to \cite{lestas_NetworkStability_11, lestas_LargeScale_12, pates_StabilityCertificates_12,devane_PrimaryFrequency_17,huang_GainPhase_24,huang_GeometricDecentralized_25,leng_GeometricDecentralized_26}, that restrict the numerical range of the subsystem frequency response \lhg{and its higher-dimensional generalisation, via the Davis--Wielandt (DW) shell,} to a specified region, aiding the selection \lhg{of appropriate local specifications}.

\end{itemize}

\lhf{To facilitate readability}, this paper is split into two parts. Part I develops the stability framework under the assumption that either the impedance or admittance representation of the grid is stable.
\lhf{The framework is then demonstrated on a modified 9-bus case study, where we show that the proposed full-block conditions hold in frequency ranges where previously reported conditions fail.} 

\lhf{However, this case study highlights \lhg{some} limitations of the impedance and admittance representations which motivate the extensions presented in Part II. \lhg{In particular},} the open-loop bus impedance is unstable, \lhf{limiting the ability to utilise conditions that allow for plug-and-play operation. In Part II, we show that these \lhg{subsystem instabilities are fundamental features} arising in many common scenarios,}
\lhg{and that they may be stabilised using a hybrid impedance/power representation, which allows us to derive local conditions that enable plug-and-play operation in this more involved setting.}

The rest of Part I is structured as follows. \Cref{sec:background} presents the necessary mathematical background and Nyquist stability theory, and \cref{sec:model} introduces the grid model. \Cref{sec:stab} then details the small-signal stability framework: results are presented at three levels of locality in \crefrange{sec:stab_decentralised}{sec:stab_fixed} before being combined in \cref{sec:stab_comb}. The case study is presented in \cref{sec:case_study}, and \cref{sec:conclusion} concludes Part I.

\section{Mathematical Background} \label{sec:background}
\subsection{Notation and Definitions}
\lhe{
The sets $\mathbb{R}$, $\mathbb{C}$, $j\mathbb{R}$, $\mathbb{C}_+$,  $\bar{\mathbb{C}}_+$, and $\mathbb{H}^n$ denote the real numbers, complex numbers, imaginary axis, open right half-plane, closed right half-plane, and Hermitian matrices, respectively. $I_n$ and $0_n$ denote the identity and zero matrices, respectively (the dimension will be omitted when it is clear from the context). Define the matrix $J = \begin{bsmallmatrix} 0 & -1 \\ 1 & 0\end{bsmallmatrix}$. For matrices $A$, $B$, we let $A \otimes B$ denote the Kronecker product and $\sigma(A)$ denote the spectrum. For $A \in \mathbb{H}^n$, $A > 0$ ($A \geq 0$) denotes positive (semi)definiteness. We let $\sqrt{A}$ denote any matrix such that $(\sqrt{A})^2 = A$ (where it exists), and we define $\sqrt{A^\ast} := \left( \sqrt{A} \right)^\ast$. 
}
\lhe{
For an ordered indexed set $\mathcal{K}=\{\kappa_1, \ldots, \kappa_{|\mathcal{K}|}\}$, define $\oplus_{\kappa_k \in \mathcal{K}} B_k = \mathrm{diag} (B_1, \ldots, B_{|\mathcal{K}|})$ and $[a_k]_{\kappa_k \in \mathcal{K}} =[a_1^T, \ldots,a_{|\mathcal{K}|}^T]^T$ for matrices $B_k \in \mathbb{C}^{m_k \times n_k}$ and vectors $a_k \in \mathbb{C}^{n_k}$ associated with each $\kappa_k \in \mathcal{K}$.
}
\begin{definition} \label{def:numerical_range}
    The \textit{numerical range} or \textit{field of values} \cite{horn_TopicsMatrix_94} of a matrix $A \in \mathbb{C}^{n \times n}$ is defined as 
    \begin{equation*} 
        \mathcal{W}(A):= \{x^\ast A x \  | \ x \in \mathbb{C}^n, \lVert x \rVert = 1 \}.
    \end{equation*}
\end{definition}
We note that $\mathcal{W}(A)$ is a convex, compact subset of $\mathbb{C}$ which contains the spectrum of $A$ \cite{horn_TopicsMatrix_94}. We also have the following result \cite[Property 1.2.5]{horn_TopicsMatrix_94}. 
\begin{lemma} \label{lemma:numerical_range_pos_def}
    Let $A \in \mathbb{C}^{n\times n}$. Then $A+A^\ast$ is positive definite (resp. positive semidefinite) if and only if $\mathcal{W}(A) \subset \mathbb{C}_+$ (resp. $\mathcal{W}(A) \subset \bar{\mathbb{C}}_+$). 
\end{lemma}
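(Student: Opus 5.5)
The plan is to reduce both conditions to a statement about the real parts of the quadratic forms $x^\ast A x$. For any unit vector $x \in \mathbb{C}^n$, the scalar $x^\ast A^\ast x$ is the complex conjugate of $x^\ast A x$. Hence
\begin{equation*}
x^\ast (A + A^\ast) x = x^\ast A x + \overline{x^\ast A x} = 2\,\mathrm{Re}\,(x^\ast A x).
\end{equation*}
Since $A + A^\ast \in \mathbb{H}^n$, its quadratic form is real-valued. Positive definiteness of $A+A^\ast$ means exactly that $x^\ast(A+A^\ast)x > 0$ for every nonzero $x$, or equivalently for every unit $x$ after rescaling by $\lVert x\rVert^2 > 0$.

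For the forward direction, assume $A + A^\ast > 0$ and take any $w \in \mathcal{W}(A)$. By Definition~\ref{def:numerical_range}, $w = x^\ast A x$ for some unit $x$. The identity above then gives $\mathrm{Re}\, w = \tfrac12 x^\ast (A+A^\ast) x > 0$, so $w \in \mathbb{C}_+$.

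For the converse, assume $\mathcal{W}(A) \subset \mathbb{C}_+$ and take any nonzero $y \in \mathbb{C}^n$. Set $x = y/\lVert y\rVert$. Then
\begin{equation*}
y^\ast (A+A^\ast) y = 2\lVert y\rVert^2\, \mathrm{Re}\,(x^\ast A x) > 0,
\end{equation*}
because $x^\ast A x \in \mathcal{W}(A) \subset \mathbb{C}_+$.

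The semidefinite case follows by the same argument, replacing the strict inequalities with non-strict ones and $\mathbb{C}_+$ with $\bar{\mathbb{C}}_+$. The case $y=0$ is then trivially included. No step is a real obstacle; the only point needing care is the rescaling between unit and arbitrary vectors. Compactness of $\mathcal{W}(A)$ is not needed for this argument.
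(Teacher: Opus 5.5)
Your proof is correct and is essentially the argument the paper relies on. The paper gives no proof of its own; it cites Horn and Johnson (Property 1.2.5), which amounts to $\mathrm{Re}\,\mathcal{W}(A) = \mathcal{W}\bigl(\tfrac{1}{2}(A+A^\ast)\bigr)$ — precisely your identity $x^\ast(A+A^\ast)x = 2\,\mathrm{Re}(x^\ast A x)$ combined with rescaling to unit vectors, for both the strict and non-strict cases.
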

\lhf{We also have the following convex three-dimensional generalisation of the numerical range \cite{lestas_LargeScale_12}.}
\begin{definition} \label{def:dw_shell}
    \lhf{The \textit{Davis--Wielandt (DW) shell} of a matrix $A \in \mathbb{C}^{n \times n}$ is defined as} 
    \begin{equation*} 
        \lhf{\mathcal{DW}(A):= \{(x^\ast A x,\lVert Ax\rVert^2) \  | \ x \in \mathbb{C}^n, \lVert x \rVert = 1 \}}.
    \end{equation*}
\end{definition}

\subsection{Nyquist Theory}
Let $\mathbf{R}^{m\times n}$ be the set of proper, \lhf{real}-rational transfer-function matrices of dimension $m \times n$.
Let $\mathbf{RH}_\infty^{m \times n} \subseteq \mathbf{R}^{m \times n}$ and \lhg{$\mathbf{RH}_{\infty,0}^{m \times n} \subseteq \mathbf{R}^{m\times n}$} be the sets of transfer functions whose elements have no poles in $\bar{\mathbb{C}}_+$ and \lhg{in $\bar{\mathbb{C}}_+ \setminus \{0\}$, respectively}.
\begin{definition} \label{def:positive_real}
    \lhe{A real-rational transfer-function matrix} $G(s)$ is \textit{positive-real} (\textit{resp.} \textit{strictly positive-real}) if \lhe{it has no poles in $\mathbb{C}_+$,} $G(s) + G(s)^\ast \geq 0$ (\textit{resp.} $G(s) + G(s)^\ast > 0$) for all $s \in \mathbb{C}_+$ \lhe{and at all points of analyticity on $j\mathbb{R}$, and any imaginary-axis poles are simple with positive-semidefinite residue matrices~\cite{khalil_NonlinearSystems_13}}. 
\end{definition}

Let $P(s)$ and $K(s)$ be two transfer-function matrices \lhe{of dimension $n \times m$ and $m \times n$, respectively, where $L(s):=P(s)K(s) \in \mathbf{R}^{n\times n}$} and let $[P(s),K(s)]$ denote their negative-feedback interconnection. We say $[P(s),K(s)]$ is \textit{well-posed} if all closed-loop transfer matrices are well defined and proper \cite{zhou_RobustOptimal_96}. For the purposes of this paper, we use the following definition for stability.
\begin{definition} \label{def:stability}
    We say the interconnection is \textit{stable} if $(I+P(s)K(s))^{-1} \in \mathbf{RH}_{\infty,0}^{n\times n}$ \lhh{and, at $s=0$, has} \lhg{at most one pole.}
\end{definition}

\lhg{Poles are counted with multiplicity, so this definition allows the closed-loop system to have a single pole} at the origin, \lha{which is common in power systems stability studies \cite[Section 12.7]{kundur_PowerSystem_22}}. 
This also implies internal stability \lhe{(in the sense \lhg{of \cref{def:stability}})} if there are no pole cancellations in $\bar{\mathbb{C}}_+$ in the product $P(s)K(s)$.

The generalised Nyquist stability theorem \cite{maciejowski_MultivariableFeedback_89, desoer_GeneralizedNyquist_80} gives a necessary and sufficient condition for the feedback configuration $[P(s),K(s)]$ to be stable. We define the \textit{Nyquist contour} \lhe{with an indentation around the origin}, illustrated in \cref{fig:tikz_nyquist_mod}, as the closed curve on the complex plane consisting of the imaginary axis \lhe{with the segment $(-j\epsilon,j\epsilon)$ replaced by a semicircular indentation into $\mathbb{C}_+$}, and an infinite\footnote{The limit $R\to\infty$ in \eqref{eq:nyquist_contour_mod} is considered in the compactified complex plane.} semicircle in the right half-plane joining the positive imaginary axis to the negative imaginary axis \cite{skogestad_MultivariableFeedback_05}:
\begin{equation} \label{eq:nyquist_contour_mod}
    \Gamma_N^{\,\epsilon} = \lim_{R \rightarrow \infty}
    \left\{ \Gamma_{j\omega}^{\epsilon-} \cup \Gamma_\epsilon \cup \Gamma_{j\omega}^{\epsilon+} \cup \Gamma_R \right\},
\end{equation}
where
    $ \Gamma_{j\omega}^{\epsilon-} = \left\{ s = j\omega \ | \ \omega \in \left[-R,-\epsilon \right] \right\}$,
    $ \Gamma_\epsilon = \left\{s =  \epsilon e^{j \varphi} \ | \ \varphi \in \left[ -\frac{\pi}{2}, \frac{\pi}{2} \right] \right\}$,
    $ \Gamma_{j\omega}^{\epsilon+} = \left\{ s = j\omega \ | \ \omega \in \left[\epsilon,R \right] \right\}$, 
    $\Gamma_R = \left\{s = Re^{j\varphi} \ | \ \varphi \in [\frac{\pi}{2}, -\frac{\pi}{2}]\right\}$,
and $\epsilon>0$ is sufficiently small.

\begin{figure}[t]
\centering
\input{Figures/tikz_nyquist_mod}
\caption{Illustration of the indented Nyquist contour.}
\label{fig:tikz_nyquist_mod}
\end{figure}

\begin{theorem} \label{thm:nyquist}
    Assume the closed-loop system $[P(s),K(s)]$ is well-posed. Let $L(s) = P(s)K(s) \in \mathbf{R}^{n\times n}$ be the open-loop transfer function and let $L(s)$ have $P_{ol}$ poles in $\mathbb{C}_+$ \lhe{and} no poles on \lhg{$j\mathbb{R} \setminus \{0\}$}. Let $N_{\Gamma}$ denote the net number of clockwise encirclements made by the plots of $\sigma(L(s))$ around the point $-1$ as $s$ traverses $\Gamma_N^{\, \epsilon}$ with $\epsilon\to0$.
    Then,
    \begin{equation} \label{eq:nyquist_pole_count}
        N_\Gamma = P_{cl} - P_{ol},
    \end{equation}
    where $P_{cl}$ gives the number of poles in $\mathbb{C}_+$ of the closed-loop transfer function $(I+P(s)K(s))^{-1}$.
\end{theorem}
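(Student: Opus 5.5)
The plan is to reduce the statement to the argument principle applied to the scalar rational function $\phi(s) := \det(I+L(s))$ along the indented contour $\Gamma_N^{\,\epsilon}$, and then to relate the winding of $\phi$ to the encirclements of $-1$ by the characteristic loci $\sigma(L(s))$. This is the classical route of \cite{desoer_GeneralizedNyquist_80, maciejowski_MultivariableFeedback_89}, adapted to the indentation at the origin.

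\emph{Step 1: pole and zero bookkeeping.} Write $L = N M^{-1}$ as a right coprime factorisation over the ring of proper rational functions with poles only in the interior of $\Gamma_N^{\,\epsilon}$ (equivalently, work with a minimal realisation $(A,B,C,D)$ of $L$). Well-posedness gives $\det(I+D)\neq 0$. In realisation form,
\[
\det(I+L(s)) = \det(I+D)\,\frac{\det(sI - A_{cl})}{\det(sI-A)},
\]
where $A_{cl} = A - B(I+D)^{-1}C$ is the closed-loop state matrix. The realisation $A_{cl}$ inherits minimality from that of $L$, since output feedback preserves controllability and observability. Consequently the poles of $(I+L)^{-1}$ are exactly the eigenvalues of $A_{cl}$, and the poles of $L$ are exactly the eigenvalues of $A$. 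It follows that the zeros of $\phi$ in the interior of the contour, counted with multiplicity, number $P_{cl}$, and its poles there number $P_{ol}$.

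For $\epsilon$ small, the interior is $\mathbb{C}_+$ minus a small half-disc about $0$, so any pole of $L$ or of the closed loop at the origin is excluded. By hypothesis $L$ has no poles on $j\mathbb{R}\setminus\{0\}$, so $\phi$ has no poles on the contour. We assume $\phi$ has no zeros on $\Gamma_N^{\,\epsilon}$, since otherwise the encirclement count $N_\Gamma$ is undefined. Along $\Gamma_R$ with $R\to\infty$, $\phi(s)\to\det(I+D)\neq0$.

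\emph{Step 2: argument principle.} By the argument principle, the net number of clockwise encirclements of the origin by $\phi(\Gamma_N^{\,\epsilon})$ equals $P_{cl}-P_{ol}$. The large arc contributes nothing in the limit, because $\phi$ tends to a nonzero constant there.

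\emph{Step 3: from $\det$ to the characteristic loci.} Since $\phi(s)=\prod_i(1+\lambda_i(s))$ with $\lambda_i(s)\in\sigma(L(s))$, we have
\[
\arg\phi = \sum_i \arg(1+\lambda_i).
\]
Hence the winding of $\phi$ about $0$ equals the total winding of the eigenvalue loci about $-1$. Establishing this rigorously is the main obstacle. The eigenvalues are branches of an algebraic function and are continuous but not individually analytic or single-valued: they may permute at branch points, and the contour may pass through such points.

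To handle this, I would first note that the unordered multiset $\sigma(L(s))$ depends continuously on $s$. On the compact closed contour one can then select continuous parametrisations of the loci, in general forming closed curves only after traversing the contour several times, since the branches are permuted by monodromy. The total change in $\sum_i\arg(1+\lambda_i)$ over one traversal, with the loci concatenated according to the permutation, therefore still equals the change in $\arg\phi$. This continuity-of-roots argument, avoiding the Riemann-surface construction of \cite{maciejowski_MultivariableFeedback_89}, is the point needing the most care. The remaining pieces, namely the $\epsilon$-indentation and the $R\to\infty$ limit in the compactified plane, are handled by the arguments already given in Steps 1 and 2.

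Combining Steps 2 and 3 yields $N_\Gamma = P_{cl}-P_{ol}$.
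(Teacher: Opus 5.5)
Your proposal is correct and follows the classical argument that the paper relies on; the paper states the theorem without proof and cites \cite{desoer_GeneralizedNyquist_80, maciejowski_MultivariableFeedback_89}. That argument writes $\det(I+L)=\det(I+D)\det(sI-A_{cl})/\det(sI-A)$ from a minimal realisation, applies the argument principle, and passes to the characteristic loci via $\det(I+L)=\prod_i(1+\lambda_i)$; your continuity-of-eigenvalues/monodromy step is a legitimate substitute for the Riemann-surface construction.

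\textbf{Minor fix.} For MIMO $L$, $\det(sI-A)$ and $\det(sI-A_{cl})$ can share roots in $\mathbb{C}_+$. For example, $L=\left[\begin{smallmatrix}0 & 1/(s-1)\\ 0 & 0\end{smallmatrix}\right]$ gives $\det(I+L)\equiv 1$ yet $P_{cl}=P_{ol}=1$. So the zeros and poles of $\phi$ need not number $P_{cl}$ and $P_{ol}$ individually. Apply the argument principle to the numerator and denominator polynomials separately; this gives the net count $P_{cl}-P_{ol}$ directly, which is all the theorem claims.

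\textbf{Coprime-factorisation remark.} In your side remark, the ring should consist of functions with no poles in the closed region enclosed by $\Gamma_N^{\,\epsilon}$ (poles only outside it), not poles only in its interior.
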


Now consider the case where $L(s) \in \mathbf{RH}_{\infty,0}^{n \times n}$, i.e., \lhe{the open-loop transfer function has no \lhg{poles in $\bar{\mathbb{C}}_+ \setminus \{0\}$}}, and where $[P(s),K(s)]$ is well-posed. Then \cref{thm:nyquist} gives that $[P(s),K(s)]$ is stable if and only if, \lhe{in addition to $(I+L(s))^{-1}$ having at most \lhg{one} pole at the origin,} the plots of $\sigma(L(s))$ as $s$ traverses the indented Nyquist contour (the \textit{characteristic loci} of $L(s)$) make no net encirclements of the point $-1$, and do not pass through the point $-1$. The following lemma provides a sufficient condition for this to hold. 

\begin{lemma} \label{lemma:nyquist_sufficient}
    Assume $L(s) = P(s)K(s) \in \mathbf{RH}_{\infty,0}^{n\times n}$, the closed-loop system $[P(s),K(s)]$ is well-posed and, \lhh{at $s=0$,} $(I + L(s))^{-1}$ has at most \lhg{one} pole. 
    \lhf{Let $\sigma(L_\tau(s))$ be a continuous deformation of the characteristic loci $\sigma(L(s))$ parametrised by $\tau \in [0,1]$ such that $\sigma(L_0(s)) = \{0\}$ and $\sigma(L_1(s)) = \sigma(L(s))$.}
    Then, $[P(s),K(s)]$ is stable under \cref{def:stability} if there exists an $\bar{\epsilon}>0$ such that for all $\epsilon \in (0,\bar{\epsilon}]$,
    \begin{equation} \label{eq:nyquist_condition}
        -1 \notin \sigma(L_\tau(s)), \quad \forall s \in \Gamma_{N}^{\, \epsilon}, \ \forall \tau \in [0,1].
    \end{equation}
\end{lemma}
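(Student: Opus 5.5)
The plan is to reduce the statement to \cref{thm:nyquist}. Since $L(s)\in\mathbf{RH}_{\infty,0}^{n\times n}$, the open-loop transfer function has no poles in $\mathbb{C}_+$ and none on $j\mathbb{R}\setminus\{0\}$, so $P_{ol}=0$ and \eqref{eq:nyquist_pole_count} becomes $N_\Gamma = P_{cl}$. Stability in the sense of \cref{def:stability} therefore needs three things: $P_{cl}=0$; no closed-loop poles on $j\mathbb{R}\setminus\{0\}$; and at most one pole at the origin. The last is assumed. So it suffices to show that the characteristic loci of $L$ on $\Gamma_N^{\,\epsilon}$ avoid $-1$ and have zero winding number about $-1$ for all small $\epsilon$.

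For the closed-loop poles on $j\mathbb{R}\setminus\{0\}$, use the identity $\det(I+L(s))=\prod_i(1+\lambda_i(s))$. A closed-loop pole at $s_0=j\omega_0$ with $\omega_0\neq 0$, where $L$ is analytic, requires $\det(I+L(s_0))=0$. That means $-1\in\sigma(L(s_0))=\sigma(L_1(s_0))$, and for $\epsilon<|\omega_0|$ this point $s_0$ lies on $\Gamma_N^{\,\epsilon}$, contradicting \eqref{eq:nyquist_condition} with $\tau=1$. The same argument with $\tau=1$ shows the loci never pass through $-1$.

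For the winding number, I would pass to the scalar function $f_\tau(s)=\det(I+L_\tau(s))=\prod_i(1+\lambda_{i,\tau}(s))$. Here I would interpret $L_\tau$ as a continuous deformation of the loci, whose points vary continuously as multisets, so $f_\tau$ is continuous jointly in $(s,\tau)$ on the compact closed curve $\Gamma_N^{\,\epsilon}$ (compactified at infinity) times $[0,1]$. The net number of clockwise encirclements of $-1$ by the characteristic loci equals the winding number of $f_1(\Gamma_N^{\,\epsilon})$ about $0$: the argument of a product is the sum of the arguments, and loci that swap branches still trace closed curves whose total winding is additive. By \eqref{eq:nyquist_condition}, $f_\tau(s)\neq 0$ for all $(s,\tau)$. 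Hence $\tau\mapsto f_\tau|_{\Gamma_N^{\,\epsilon}}$ is a homotopy of closed curves in $\mathbb{C}\setminus\{0\}$, and the winding number is homotopy invariant. At $\tau=0$ we have $f_0\equiv 1$, so the winding number is zero, and therefore $N_\Gamma=0$ at $\tau=1$.

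Since this holds for every $\epsilon\in(0,\bar\epsilon]$, the limit $\epsilon\to 0$ in \cref{thm:nyquist} gives $P_{cl}=0$. Together with the imaginary-axis argument and the assumption at the origin, $(I+L)^{-1}\in\mathbf{RH}_{\infty,0}$ with at most one pole at $0$, which is stability. I expect the main obstacle to be rigour in the winding-number step. Characteristic loci are branches of an algebraic function and may not be individually continuous. The determinant formulation avoids this, provided $L_\tau$'s spectrum is understood as a continuous multiset-valued map, and I would state that interpretation explicitly. The point at infinity on $\Gamma_R$ also needs care; properness of $L$ makes $L(\infty)$ well-defined, so the curve is closed.
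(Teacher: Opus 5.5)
Your proposal is correct and follows essentially the same route as the paper's proof. Both arguments use $P_{ol}=0$, then homotopy invariance of the encirclement count from the $\tau=0$ loci $\{0\}$ to get $N_\Gamma=0$ and hence $P_{cl}=0$, then the $\tau=1$ case on $\Gamma_{j\omega}^{\epsilon\pm}$ to exclude closed-loop poles on $j\mathbb{R}\setminus\{0\}$, and finally the assumption at the origin. The only difference is that you make the homotopy step rigorous by passing to $\det(I+L_\tau(s))$, which avoids branch-switching of individual characteristic loci, whereas the paper asserts homotopy invariance of the encirclement count of the loci directly.
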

\begin{proof}
    The lemma is proved in Appendix \ref{app:nyquist_sufficient_proof}.
\end{proof}

We now state a sufficient condition that ensures \eqref{eq:nyquist_condition} is satisfied \lhf{for a linear homotopy $L_\tau(s) = \tau L(s), \tau \in [0,1]$}. While the proof here uses a Nyquist argument (similar to \cite{laib_DecentralizedStability_23} and \cite{ringh_GainTypePhaseType_26}), \lhg{an analogous stability result} can be obtained by applying the theory of integral quadratic constraints (IQCs) \cite{megretski_SystemAnalysis_97} or graph-separation methods \cite{iwasaki_WellposednessFeedback_98} to the operators $P(s)$ and $K(s)$.

\begin{lemma} \label{lemma:iqc}
    \lhf{Let $L_\tau(s) = \tau L(s)$ for $\tau \in [0,1]$.} Then, condition \eqref{eq:nyquist_condition} for the feedback system $[P(s),K(s)]$ holds \lhe{at a particular $s \in \Gamma_N^{\, \epsilon}$ \lhh{and $\tau$}} if there exists a matrix (\lhe{referred to as} a \textit{multiplier})
    \begin{equation} \label{eq:multiplier}
        \Pi(s) = \begin{bmatrix}
            \Pi_{11}(s) & \Pi_{12}(s)^\ast \\ \Pi_{12}(s) & \Pi_{22}(s)
        \end{bmatrix},
    \end{equation}
    where $\Pi_{11}(s) \in \mathbb{H}^n$, $\Pi_{12}(s) \in \mathbb{C}^{m \times n}$ and $\Pi_{22}(s) \in \mathbb{H}^m$, such that the following conditions are satisfied: 
    \begin{subequations} \label{eq:iqc_conditions}
         \begin{align}
             \begin{bmatrix}
                P(s) \\ I_m
            \end{bmatrix}^\ast &\Pi(s)
            \begin{bmatrix}
                P(s) \\ I_m
            \end{bmatrix} > 0, \label{eq:iqc_conditions_top}
            \\
             \begin{bmatrix}
                I_n \\ -\tau K(s)
            \end{bmatrix}^\ast &\Pi(s)
            \begin{bmatrix}
                I_n \\ - \tau K(s)
            \end{bmatrix} \leq 0. \label{eq:iqc_conditions_btm}
        \end{align}
    \end{subequations} 
\end{lemma}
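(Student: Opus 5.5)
We argue by contradiction. Fix $s \in \Gamma_N^{\,\epsilon}$ and $\tau \in [0,1]$, and suppose \eqref{eq:iqc_conditions} holds but $-1 \in \sigma(L_\tau(s)) = \sigma(\tau P(s)K(s))$. We will turn this eigenvalue into a vector that makes \eqref{eq:iqc_conditions_top} and \eqref{eq:iqc_conditions_btm} contradict each other. Throughout, $s$ is a point at which $P(s)$ and $K(s)$ are finite; on the $\Gamma_R$ arc this is understood as the limit, which is well defined since the matrices are proper.

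First, if $\tau = 0$ then $\sigma(L_0(s)) = \{0\}$, so $-1 \notin \sigma(L_0(s))$ trivially. Hence assume $\tau \in (0,1]$.

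The eigenvalue assumption gives a nonzero $v \in \mathbb{C}^n$ with $\tau P(s)K(s) v = -v$. Set $w := -\tau K(s) v \in \mathbb{C}^m$. Then
\begin{equation*}
P(s) w = -\tau P(s)K(s) v = v .
\end{equation*}
Since $v \neq 0$, this forces $w \neq 0$.

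Now use the same vector in both stacked forms. On one hand,
\begin{equation*}
\begin{bmatrix} P(s) \\ I_m \end{bmatrix} w = \begin{bmatrix} v \\ w \end{bmatrix}.
\end{equation*}
On the other hand,
\begin{equation*}
\begin{bmatrix} I_n \\ -\tau K(s) \end{bmatrix} v = \begin{bmatrix} v \\ w \end{bmatrix}.
\end{equation*}
Let $z := [v^\ast\ w^\ast]^\ast$. Because $w \neq 0$, the strict inequality \eqref{eq:iqc_conditions_top} gives $z^\ast \Pi(s) z = w^\ast(\cdot)w > 0$. Because $v \neq 0$, the non-strict inequality \eqref{eq:iqc_conditions_btm} gives $z^\ast \Pi(s) z = v^\ast(\cdot)v \le 0$. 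These two conclusions contradict each other, so $-1 \notin \sigma(L_\tau(s))$, which is condition \eqref{eq:nyquist_condition} at this $s$ and $\tau$.

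The only delicate point is checking that the vector used in each quadratic form is nonzero: the strict inequality is only informative when $w \neq 0$, and that is guaranteed by $P(s)w = v \neq 0$. Everything else is direct substitution.
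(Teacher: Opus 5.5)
Your proof is correct and follows the paper's argument. Both take a null vector $v$ of $I_n + \tau P(s)K(s)$, set $w = -\tau K(s)v$, note that $[v^\ast\ w^\ast]^\ast$ equals both $\begin{bmatrix} P(s) \\ I_m \end{bmatrix} w$ and $\begin{bmatrix} I_n \\ -\tau K(s) \end{bmatrix} v$, and let the strict and non-strict quadratic forms contradict each other.

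Your justification that $w \neq 0$ via $P(s)w = v \neq 0$ supplies a step the paper only asserts. The separate $\tau=0$ case and the properness remark are unnecessary: the argument is purely pointwise, and the paper only assumes $P(s)K(s)$ is proper, not $P$ and $K$ individually.
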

\begin{proof}
    The lemma is proved in Appendix \ref{app:iqc_proof}.
\end{proof}
The conditions \eqref{eq:iqc_conditions} therefore define sufficient quadratic constraints on the input--output properties of the systems $P(s)$ and $K(s)$ 
\lhe{under which} the characteristic loci of
\lhe{$\tau L(s)$ avoid $-1$ at a given $s \in \Gamma_{N}^{\, \epsilon}$ \lhh{and $\tau \in [0,1]$}.}
\lhe{If such multipliers exist at every $s \in \Gamma_{N}^{\, \epsilon}$ \lhh{and every $\tau \in [0,1]$} (for all $\epsilon \in (0,\bar\epsilon]$), then $[P(s),K(s)]$ is stable by \cref{lemma:nyquist_sufficient}. However, \eqref{eq:iqc_conditions_btm} requires a search over $\tau \in [0,1]$; we now give a multiplier structure under which it need only be checked at $\tau = 1$.
}
\begin{corollary} \label{cor:iqc_convex}
    Let $\Pi_{11}(s) \leq 0$ and $\Pi_{22}(s) \geq 0$ in the multiplier $\Pi(s)$ given in \eqref{eq:multiplier} and let \eqref{eq:iqc_conditions_btm} hold for $\tau = 1$. Then, \eqref{eq:iqc_conditions_btm} holds for all $\tau \in [0,1]$.  
\end{corollary}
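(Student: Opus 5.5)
Expand the quadratic form in \eqref{eq:iqc_conditions_btm} as a function of $\tau$ and exploit its sign structure. Fix $s$ and write $K = K(s)$. Direct multiplication gives
\begin{equation*}
f(\tau) := \begin{bmatrix} I_n \\ -\tau K\end{bmatrix}^\ast \Pi \begin{bmatrix} I_n \\ -\tau K\end{bmatrix} = \Pi_{11} - \tau\left(\Pi_{12}^\ast K + K^\ast \Pi_{12}\right) + \tau^2 K^\ast \Pi_{22} K,
\end{equation*}
a Hermitian-matrix-valued quadratic in the real parameter $\tau$. The hypotheses say $f(0) = \Pi_{11} \leq 0$ and $f(1) \leq 0$, and $\Pi_{22}\geq 0$ gives that the leading coefficient $K^\ast \Pi_{22} K$ is positive semidefinite.

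The key step is convexity. For any fixed vector $x \in \mathbb{C}^n$, the scalar function $g_x(\tau) := x^\ast f(\tau) x$ is a real quadratic polynomial in $\tau$. Its $\tau^2$ coefficient is $(Kx)^\ast \Pi_{22} (Kx) \geq 0$, so $g_x$ is convex on $\mathbb{R}$. For $\tau \in [0,1]$, write $\tau = (1-\tau)\cdot 0 + \tau \cdot 1$. Convexity then yields
\begin{equation*}
g_x(\tau) \leq (1-\tau) g_x(0) + \tau g_x(1) \leq 0,
\end{equation*}
since $g_x(0) = x^\ast \Pi_{11} x \leq 0$ and $g_x(1) \leq 0$ by \eqref{eq:iqc_conditions_btm} at $\tau=1$. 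Because $x$ is arbitrary, $f(\tau) \leq 0$ for all $\tau\in[0,1]$, which is the claim.

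Equivalently, and without invoking convexity, one can verify the identity
\begin{equation*}
f(\tau) = (1-\tau) f(0) + \tau f(1) - \tau(1-\tau) K^\ast \Pi_{22} K,
\end{equation*}
whose three terms are each negative semidefinite for $\tau\in[0,1]$. The only point needing care is to check this algebra, in particular that the linear cross terms match. That is routine, so no step is a real obstacle. Note also that the assumption $\Pi_{11}\leq 0$ is used only to supply the endpoint $\tau=0$.
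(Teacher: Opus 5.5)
Your proof is correct and follows essentially the same route as the paper. You expand the quadratic form in $\tau$, use $\Pi_{22}\geq 0$ to get convexity in $\tau$ (which you make precise by scalarising with $x^\ast(\cdot)x$, a step the paper leaves implicit), and conclude from negative semidefiniteness at $\tau=0$ and $\tau=1$; your identity $f(\tau)=(1-\tau)f(0)+\tau f(1)-\tau(1-\tau)K^\ast\Pi_{22}K$ checks out and is the same convexity argument written algebraically.
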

\begin{proof}
    The left-hand side of \eqref{eq:iqc_conditions_btm} expands to
    \begin{equation} \label{eq:prf_iqc_eqiv_condition}
        \begin{aligned}
            \Pi_{11}&(s) - \tau \Pi_{12}(s)^\ast K(s) \\ &- \tau K(s)^\ast \Pi_{12}(s) + \tau^2 K(s)^\ast \Pi_{22}(s) K(s).
        \end{aligned}
    \end{equation}
    As $\Pi_{22}(s) \geq 0$, \eqref{eq:prf_iqc_eqiv_condition} is convex in $\tau$. For $\tau = 0$, \eqref{eq:prf_iqc_eqiv_condition} is negative semidefinite (as $\Pi_{11}(s) \leq 0$), and for $\tau = 1$, \eqref{eq:prf_iqc_eqiv_condition} is negative semidefinite by assumption. Therefore, \eqref{eq:iqc_conditions_btm} holds for all $\tau \in [0,1]$. 
\end{proof}

\begin{remark}
    Similar results can be obtained using IQC-based methods \cite{megretski_SystemAnalysis_97}. However, the approach taken in \cref{lemma:iqc} can be extended to improper systems, which often appear in power systems, while classical IQC results apply only to bounded systems. 
    \lhe{Furthermore, the homotopy in \eqref{eq:nyquist_condition} can be replaced by any continuous deformation of the loci, allowing alternative system representations with homotopies in different geometries. \lhg{It will be shown in \cref{sec:stab} that} these reconfigurations and multipliers can be mixed over different frequency ranges.}
\end{remark}

\section{Grid Model} \label{sec:model}
\subsection{Topology and Reference Frames} \label{sec:model_topology}
The three-phase AC grid is composed of $N_B$ buses, forming the node set $\mathcal{V}_B := \{\nu_1, \ldots, \nu_{N_B}\}$, and $N_P$ power lines, assigned to the edge set $\mathcal{E}_P \subseteq \mathcal{V}_B \times \mathcal{V}_B$, with $(\nu_i,\nu_j) \in \mathcal{E}_P$ indicating a line from $\nu_j$ (source node) to $\nu_i$ (sink node).
In addition, we include a node representing ground, denoted $\nu_0$. Loads and other shunt connections link elements of $\mathcal{V}_B$ to $\nu_0$, and are assigned to the set $\mathcal{E}_0 \subseteq \mathcal{V}_B \times \{\nu_0\}$.
Together, the power grid is represented by the directed graph $(\mathcal{V}_N,\mathcal{E}_N)$, where $\mathcal{V}_N = \mathcal{V}_B \cup\{\nu_0\}$ and $\mathcal{E}_N = \mathcal{E}_P \cup \mathcal{E}_0 $.
\begin{assumption} \label{assump:connected}
    The subgraph $(\mathcal{V}_B,\mathcal{E}_P) \subseteq (\mathcal{V}_N,\mathcal{E}_N)$ is connected.
\end{assumption}

For each edge $(\nu_i,\nu_j) \in \mathcal{E}_P$ (i.e., the power lines only), assign an index $k$ so that $\varepsilon_k \equiv (\nu_i,\nu_j)$ and order the set so that $\mathcal{E}_P = \{\varepsilon_1,\ldots,\varepsilon_{N_P}\}$. The $(i,k)^{\mathrm{th}}$ entry of the incidence matrix $B_N\in\mathbb{R}^{N_B \times N_P}$ for the bus--line subgraph $(\mathcal{V}_B,\mathcal{E}_P)$ is given by
\begin{equation} \label{eq:incidence_matrix}
    B_N^{(i,k)} = \begin{cases}
        1, & \text{if $\varepsilon_k \equiv (\nu_i,\nu_j) \in \mathcal{E}_P$, for some $\nu_j$}, \\
        -1, & \text{if $\varepsilon_k \equiv (\nu_j,\nu_i) \in \mathcal{E}_P$, for some $\nu_j$}, \\
        0, & \text{otherwise.}
    \end{cases}
\end{equation}

At each bus $\nu_i \in \mathcal{V}_N$, we associate a balanced three-phase voltage $v_i(t) \in \mathbb{R}^3$ and net current injection $i_i(t) \in \mathbb{R}^3$.
Since all signals are balanced, they can be projected into a set of synchronously rotating $DQ$ reference frames
\cite{kundur_PowerSystem_22, schiffer_SurveyModeling_16, spanias_SystemReference_19}.

We first define a \textit{common} $DQ$ reference frame rotating at constant angular frequency $\omega_0 >0$.
A balanced three-phase signal $x(t) \in \mathbb{R}^3$ evaluated in this frame is denoted $x^{DQ}(t)=[x^D(t),x^Q(t)]^T \in \mathbb{R}^2$.
Similarly, with each bus $\nu_i \in \mathcal{V}_B$, we associate a \textit{local} $dq$ reference frame rotating at the dynamic local angular frequency $\omega_i(t)>0$,
and the same signal
\lhe{in this frame is denoted} $x^{dq}(t)=[x^d(t),x^q(t)]^T \in \mathbb{R}^2$.
The angular difference between the local 
and common reference frames \lhe{satisfies}
\begin{equation} \label{eq:delta_dot}
    \dot{\delta}_i(t) = \omega_i(t) - \omega_0,
\end{equation}
\lhf{which implies that $\omega_i(t) = \omega_0$ at equilibrium.}
Therefore, we can map between the reference frames as
$ 
    x^{DQ}(t) = T(\delta_i(t))x^{dq}(t),
$ 
where
\begin{equation} \label{eq:dq_transform}
    T(\delta_i(t)) = \begin{bmatrix}
        \cos(\delta_i(t)) & -\sin(\delta_i(t)) \\ \sin(\delta_i(t)) & \cos(\delta_i(t))
    \end{bmatrix}.
\end{equation}
Without loss of generality, we define the local $d$-axis to be aligned with the bus voltage vector so that $v_i^{dq}(t) = [V_i(t), 0]^T$, where $V_i(t) > 0$ is the voltage magnitude. 
Unless it is otherwise ambiguous, we henceforth omit the argument $t$ from time-domain signals.

\subsection{Buses} \label{sec:model_buses}
At each bus $\nu_i \in \mathcal{V}_N$, we consider a dynamical system whose output is the bus voltage $v_i^{DQ} \in \mathbb{R}^2$ and whose input is the negative net current injection $-i_i^{DQ} \in \mathbb{R}^2$, both evaluated in the common reference frame. Kirchhoff's current law gives
\begin{equation} \label{eq:kcl}
    i_i^{DQ} = \textstyle \sum_{\nu_j \in \mathcal{N}_i^+} i_{ij}^{DQ} - \sum_{\nu_j \in \mathcal{N}_i^-} i_{ji}^{DQ},
\end{equation}
where $i_{ij}^{DQ} \in \mathbb{R}^2$ is the current through the branch $(\nu_i,\nu_j) \in \mathcal{E}_N$ flowing from $\nu_i$ to $\nu_j$, \lhe{and $\mathcal{N}_i^+$ and $\mathcal{N}_i^-$ are the sets of in- and out-neighbours of $\nu_i \in \mathcal{V}_B$ in $\mathcal{V}_N$, respectively}.
Note that a double index on $i_{ij}^{DQ}$ denotes a branch current associated with the line/load $(\nu_i,\nu_j) \in \mathcal{E}_N$, while a single index on $i_i^{DQ}$ denotes a net nodal current injection associated with $\nu_i \in \mathcal{V}_N$.

The relationship between $-i_i^{DQ}$ and $v_i^{DQ}$ is determined by the dynamics of the connected device, such as a synchronous generator or a grid-forming inverter. As we are interested in the small-signal stability of the grid, we linearise each system, \lhe{together with the frame transformation,} about \lhe{the network} equilibrium
\lhe{to get} the $2 \times 2$ bus impedance $Z_i(s)$
\lhe{in the} Laplace domain,
\lhe{defined by}
\begin{equation} \label{eq:bus_iv_relationship}
    \Delta v_i^{DQ}(s) = - Z_i(s) \Delta i_i^{DQ}(s).
\end{equation}
Here, $\Delta v_i^{DQ}(s)$ and $\Delta i_i^{DQ}(s)$ are the Laplace-domain deviations of $v_i^{DQ}$ and $i_i^{DQ}$ about their operating points, respectively. As a special case, we model the ground node $\nu_0$ as a constant voltage source with zero output, meaning $\Delta v_0^{DQ}(s) = [0, 0]^T$.

\subsection{Network} \label{sec:model_network}
We associate with each line or load $(\nu_i,\nu_j) \in \mathcal{E}_N$ a subsystem whose input is the potential difference across the element in the common $DQ$ reference frame, given by the difference in the voltages of the adjacent buses $v_i^{DQ} - v_j^{DQ}$, and whose output is the associated branch current, $i_{ij}^{DQ}$. In the Laplace domain, we obtain
\begin{equation} \label{eq:line_dynamics_s}
    \Delta i_{ij}^{DQ}(s) = Y_{ij}(s) \left( \Delta v_i^{DQ}(s) - \Delta v_j^{DQ}(s) \right),
\end{equation}
where $Y_{ij}(s)$ is the branch admittance for the line $(\nu_i,\nu_j)\in\mathcal{E}_N$. For lines \lhf{and constant-impedance} loads represented by series $RL$ components\footnote{\lhb{We note that other line/load models can be used. In such cases, $Y_{ij}(s)$ is the transfer function obtained by linearising the branch model about the system equilibrium.}}, we have
\begin{equation} \label{eq:line_admittance}
    Y_{ij}(s) = \begin{bmatrix}
        R_{ij}+X_{ij}\frac{s}{\omega_0} & - X_{ij} \\ X_{ij} & R_{ij}+X_{ij}\frac{s}{\omega_0}
    \end{bmatrix}^{-1},
\end{equation}
where $R_{ij} \geq  0$ is the line/load resistance and $X_{ij} \geq 0$ is the line/load reactance, both measured in p.u. \lhe{For a shunt capacitive branch, we have
\begin{equation} \label{eq:shunt_admittance}
    Y_{i0}(s) = \begin{bmatrix}
        B_{i0}\frac{s}{\omega_0} & -B_{i0} \\ B_{i0} & B_{i0}\frac{s}{\omega_0}
    \end{bmatrix},
\end{equation}
where $B_{i0} \geq 0$ is the shunt susceptance in p.u.}

For a load or shunt $(\nu_i,\nu_0) \in \mathcal{E}_0$, $\Delta v_0^{DQ}(s) = [0,0]^T$, and we write
\begin{equation} \label{eq:shunt_admittance_simp}
    Y_i(s) \equiv Y_{i0}(s),
\end{equation}
setting $Y_{i0}(s)$ to $0_2$ if no load/shunt is present.

Now, let $\Delta v_B^{DQ}(s) = [\Delta v_i^{DQ}(s)]_{\nu_i \in \mathcal{V}_B}$ and $\Delta i_B^{DQ}(s) = [\Delta i_i^{DQ}(s)]_{\nu_i \in \mathcal{V}_B}$. Using \eqref{eq:line_dynamics_s} in Kirchhoff's current law \eqref{eq:kcl} at each node $\nu_i \in \mathcal{V}_B$ and using the fact that $Y_{ji}(s) \equiv Y_{ij}(s)$ (by symmetry in the line model) allows us to write
\begin{equation} \label{eq:kcl_matrix_form}
    \Delta i_B^{DQ}(s) = Y_N(s) \Delta v_B^{DQ}(s),
\end{equation}
where $Y_N(s)$ is the network admittance matrix. The matrix $Y_N(s)$ is composed of $2 \times 2$ blocks, where the $(i,j)^{\mathrm{th}}$ block is
\begin{equation} \label{eq:network_admittance_blocks}
    Y_N^{(i,j)}(s) = \begin{cases}
          Y_i(s) + \sum\limits_{\nu_k \in \mathcal{N}_i}  Y_{ik}(s), & i=j \\
        -Y_{ij}(s), &  \lhh{\nu_j \in \mathcal{N}_i} \\
        0_2, & \text{otherwise,}
    \end{cases}
\end{equation}
where $Y_i(s)$ is given by \eqref{eq:shunt_admittance_simp} \lhe{and $\mathcal{N}_i$ is the set of buses in $\mathcal{V}_B$ neighbouring $\nu_i \in \mathcal{V}_B$}. Equivalently,
\begin{equation} \label{eq:network_admittance}
    Y_N(s) = \mathcal{B}_N Y_P(s) \mathcal{B}_N^T + Y_0(s),
\end{equation}
where $\mathcal{B}_N = B_N \otimes I_2$ (with $B_N$ defined in \eqref{eq:incidence_matrix}), $Y_P(s) = \oplus_{(\nu_i,\nu_j) \in \mathcal{E}_P} Y_{ij}(s)$ (with block order following that used for $B_N$ in \eqref{eq:incidence_matrix}), and $Y_0(s) = \oplus_{\nu_i \in \mathcal{V}_B} Y_i(s)$, with $Y_i(s)$ as in \eqref{eq:shunt_admittance_simp}.

\subsubsection*{Properties} 
It is well known that $Y_{ij}(s)$ in \eqref{eq:line_admittance} and \eqref{eq:shunt_admittance} is positive-real (see \cref{def:positive_real}) \lhe{and strictly positive-real for $R_{ij}>0$ in \eqref{eq:line_admittance}}. Therefore, $Y_N(s)$ \lhf{composed of blocks modelled as \eqref{eq:line_admittance} and \eqref{eq:shunt_admittance}}, is positive-real \cite{watson_ScalableControl_21,chen_ExtendedFrequencyDomain_24,hallinan_PerformanceStability_25} and \lhf{has no poles in $\mathbb{C}_+$}. \lhf{This property will be exploited in \cref{sec:stab_decentralised}. Note that the inclusion of loads not modelled as \eqref{eq:line_admittance} may make $Y_N(s)$ non-positive-real; however, in such cases, the dynamics of a non-positive-real load at bus $\nu_i \in \mathcal{V}_B$ may be included in the bus impedance $Z_i(s)$, meaning the positive-real property of $Y_N(s)$ can still be utilised}. 

We also have the following property of $Y_{ij}(s)$ in \eqref{eq:line_admittance} \cite{vorobev_DecentralizedStability_19,chen_ExtendedFrequencyDomain_24}. 
\begin{lemma} \label{lemma:rotated_admittance}
    Consider $Y_{ij}(s)$ as given in \eqref{eq:line_admittance}, with $X_{ij} \geq 0$. For $|\omega| \leq \omega_0$, $JY_{ij}(j\omega) - Y_{ij}(j\omega)^\ast J\geq 0$, where $J = \begin{bsmallmatrix}
        0 & -1 \\ 1 & 0 
    \end{bsmallmatrix}$. 
\end{lemma}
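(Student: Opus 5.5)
The plan is to reduce the claim to a statement about the series impedance $Z_{ij}(s) := Y_{ij}(s)^{-1}$ by a congruence transformation, since $Z_{ij}(s)$ has a very simple algebraic structure. With $a(s) := R_{ij} + X_{ij}\tfrac{s}{\omega_0}$, \eqref{eq:line_admittance} gives
\begin{equation*}
Z_{ij}(s) = a(s) I_2 + X_{ij} J .
\end{equation*}
Thus $Z_{ij}$ lies in the commutative algebra generated by $I_2$ and $J$, and so does $Y_{ij}$. Throughout, I restrict attention to those $\omega$ with $|\omega|\le\omega_0$ at which $Y_{ij}(j\omega)$ is defined. The determinant $a^2+X_{ij}^2$ can only vanish when $R_{ij}=0$ and $|\omega|=\omega_0$ (a lossless-line pole), or when $R_{ij}=X_{ij}=0$. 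At such points the statement should be read as holding wherever $Y_{ij}(j\omega)$ exists.

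First, since $J^\ast = J^T = -J$, we have $(JY_{ij})^\ast = -Y_{ij}^\ast J$. Hence
\begin{equation*}
JY_{ij} - Y_{ij}^\ast J = JY_{ij} + (JY_{ij})^\ast ,
\end{equation*}
which is Hermitian. Positive semidefiniteness is preserved under congruence by the invertible matrix $Z_{ij}(j\omega)$. So I would multiply on the left by $Z_{ij}^\ast$ and on the right by $Z_{ij}$, and use $Y_{ij}Z_{ij}=I$ and $Z_{ij}^\ast Y_{ij}^\ast = I$. This gives
\begin{equation*}
Z_{ij}^\ast\bigl(JY_{ij}-Y_{ij}^\ast J\bigr)Z_{ij} = Z_{ij}^\ast J - J Z_{ij}.
\end{equation*}

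Next, I would substitute $Z_{ij} = aI + X_{ij}J$ and $Z_{ij}^\ast = \bar a I - X_{ij}J$, together with $J^2=-I$. This yields
\begin{equation*}
Z_{ij}^\ast J - JZ_{ij} = (\bar a - a)J + 2X_{ij}I .
\end{equation*}
At $s=j\omega$ we have $\bar a - a = -2jX_{ij}\omega/\omega_0$. The matrix therefore equals
\begin{equation*}
2X_{ij}\bigl(I - \tfrac{\omega}{\omega_0}\, jJ\bigr).
\end{equation*}

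Finally, $jJ$ is Hermitian with eigenvalues $\pm 1$, because $J$ has eigenvalues $\pm j$. The eigenvalues of the transformed matrix are therefore $2X_{ij}(1\mp\omega/\omega_0)$. These are nonnegative precisely when $|\omega|\le\omega_0$, given $X_{ij}\ge 0$, and the claim follows by undoing the congruence.

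The computations are routine. I expect the only real obstacle to be bookkeeping about invertibility of $Z_{ij}(j\omega)$ at the boundary frequencies $|\omega|=\omega_0$ in the lossless case, which I would dispose of by the remark above.
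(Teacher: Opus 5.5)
Your proposal is correct and is essentially the paper's proof. The paper likewise computes $\tfrac{1}{2}\bigl(Z_{ij}(j\omega)^\ast J - JZ_{ij}(j\omega)\bigr) = X_{ij}I_2 - jX_{ij}\tfrac{\omega}{\omega_0}J$ with $Z_{ij} = Y_{ij}^{-1}$, reads off the eigenvalues $X_{ij}(1\pm\omega/\omega_0)$, and transfers positive semidefiniteness back to $JY_{ij}(j\omega) - Y_{ij}(j\omega)^\ast J$ by congruence with $Y_{ij}(j\omega)$; your decomposition $Z_{ij} = aI_2 + X_{ij}J$ and your invertibility remark at $|\omega| = \omega_0$ simply make explicit what the paper leaves as ``direct calculation.''
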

\begin{proof}
    By direct calculation, $\frac{1}{2}({Y_{ij}(j\omega)^{\ast}}^{-1}J - JY_{ij}(j\omega)^{-1}) = X_{ij}I_2 - jX_{ij}\frac{\omega}{\omega_0}J$, 
    which has eigenvalues $X_{ij} \pm X_{ij}\frac{\omega}{\omega_0}$. Therefore, ${Y_{ij}(j\omega)^{\ast}}^{-1}J - JY_{ij}(j\omega)^{-1} \geq 0$ for $|\omega| \leq \omega_0$ and $X_{ij} \geq 0$. By post-multiplying by $Y_{ij}(j\omega)$ and pre-multiplying by $Y_{ij}(j\omega)^\ast$, we arrive at the result. 
\end{proof}

\lhe{While this property holds for branches modelled by \eqref{eq:line_admittance}, it does not hold for the capacitive shunts \eqref{eq:shunt_admittance}, which appear as blocks of $Y_0(s)$ in \eqref{eq:network_admittance}.}
Therefore, the sign definiteness of the matrix $(I_{N_B} \otimes J)Y_N(j\omega) - Y_N(j\omega)^\ast (I_{N_B} \otimes J)$ will depend on the exact network configuration. 
\lhf{However, if there exists an $\omega_r$ such that this property is known to hold for all $|\omega| \leq \omega_r$, it can be exploited, as shown in \cref{sec:stab_fixed}.}

\subsection{Closed-Loop Grid} \label{sec:model_closed_loop}
Let
    $ Z_B(s) = \oplus_{\nu_i \in \mathcal{V}_B} Z_i(s) $.
Then \eqref{eq:bus_iv_relationship} can be written in matrix form as
\begin{equation} \label{eq:kvl_matrix_form}
    \Delta v_B^{DQ} = - Z_B(s) \Delta i_B^{DQ}.
\end{equation}
Together, \eqref{eq:kcl_matrix_form} and \eqref{eq:kvl_matrix_form} form the negative-feedback interconnection between the bus dynamics and the network admittance matrix $[Z_B(s),Y_N(s)]$, as illustrated in \cref{fig:tikz_model_iv}.

\begin{figure}[t]
\centering
\input{Figures/tikz_model_iv}
\caption{Negative-feedback interconnection of the bus impedances in $Z_B(s)$ and the network admittance $Y_N(s)$.}
\label{fig:tikz_model_iv}
\end{figure}

\lhc{The small-signal stability of the grid can therefore be concluded by analysing the feedback system $[Z_B(s),Y_N(s)]$, \lhe{which we refer to as the \textit{impedance
representation} of the grid}. Alternatively, by inverting both systems, \lhe{we obtain the \textit{admittance representation}} $[\widehat{Y}_B(s), \widehat{Z}_N(s)]$, where}%
\begin{equation} \label{eq:inverse_systems}
    \widehat{Y}_B(s) = Z_B(s)^{-1}, \quad \widehat{Z}_N(s) = Y_N(s)^{-1}.
\end{equation}
\lhe{We note that $\widehat{Z}_N(s)$ exists provided $Y_0(s) \not\equiv 0$, i.e., at least one load or shunt is present, \lhf{which typically holds in practice} (otherwise $Y_N(s) = \mathcal{B}_N Y_P(s) \mathcal{B}_N^T$ is identically singular by \cref{assump:connected}). Furthermore, we note that \lhf{if $Y_N(s)$ is positive-real}, $\widehat{Z}_N(s)$ has no poles in $\mathbb{C}_+$ \cite{hallinan_PerformanceStability_25}.}

We note that \lhf{without a stiff voltage source in the network, there is no} absolute angle reference when defining $\omega_0$ in \eqref{eq:delta_dot}. \lhf{Therefore,} the feedback systems $[Z_B(s), Y_N(s)]$ \lha{and $[\widehat{Y}_B(s),\widehat{Z}_N(s)]$} \lhf{may} exhibit a pole at the origin \lhf{(\lhg{there is a single such pole} provided frequency control or damping is included in the bus model)} \cite[Section 12.7]{kundur_PowerSystem_22}. To conclude stability, we apply the definition presented in \cref{def:stability}. 
\lhf{In Part I of this paper, we use the following assumption, which is frequently used in the literature, so that \cref{lemma:nyquist_sufficient,lemma:iqc} can be directly applied.}
\begin{assumption} \label{assump:open_loop}
\lhc{Either\footnote{
\lhe{Lossless branches with $R_{ij}=0$ in \eqref{eq:line_admittance} may introduce imaginary-axis poles \lhf{at $s = \pm j\omega_0$} into $Y_N(s)$; in such cases, the results of \cref{sec:background} can be extended by adding semicircular indentations into $\mathbb{C}_+$ to \eqref{eq:nyquist_contour_mod} at each imaginary-axis pole. Capacitive shunts \eqref{eq:shunt_admittance} render $Y_N(s)$ improper; this is accommodated since only the open-loop products in \cref{assump:open_loop} are required to be proper.}}:
    \begin{enumerate}[A.]
        \item $L_N(s) = Z_B(s)Y_N(s) \in \mathbf{RH}_{\infty}^{2N_B \times 2N_B}$ and, \lhh{at $s=0$,} the system $(I + Z_B(s)Y_N(s))^{-1}$ has at most \lhg{one} pole; or
        \item $Y_N(s)$ is nonsingular, $\widehat{L}_N(s) = \widehat{Y}_B(s)\widehat{Z}_N(s) \in \mathbf{RH}_{\infty}^{2N_B \times 2N_B}$ and, \lhh{at $s=0$,} the system $(I + \widehat{Y}_B(s)\widehat{Z}_N(s))^{-1}$ has at most \lhg{one} pole. 
    \end{enumerate}
}
\end{assumption}

\lhc{As we will see, both the impedance and admittance representations have \lhf{complementary} advantages \lhf{and disadvantages} for stability analysis:}
\begin{itemize}
    \item \lhf{As we show in \cref{sec:stab_extended},} \lhe{the impedance representation} allows us to exploit the sparsity of $Y_N(s)$ to generate decentralised stability conditions \lhf{on extended subsystems containing buses and local lines and loads. These conditions can be designed to allow for plug-and-play operation}. However, as we show in Part II of this paper, $Z_B(s)$ contains unstable poles in many common scenarios, \lhf{which violates \cref{assump:open_loop}.A}.

    \item \lhf{Alternatively}, if $\widehat{Y}_B(s)$ is stable, \lhe{the admittance representation can be analysed instead}. However, the inverted $\widehat{Z}_N(s)$ is in general non-sparse and introduces non-local coupling, meaning \lhf{plug-and-play compatible conditions on} extended subsystems composed of local dynamics cannot easily be formed. 
\end{itemize} 

\lhf{\Cref{assump:open_loop} is used here to facilitate readability \lhg{when formulating} the stability framework. In Part II of this paper, we relax \cref{assump:open_loop}.A so that plug-and-play compatible stability conditions can be derived on extended subsystems for networks with unstable $Z_B(s)$.}

\section{Stability Framework} \label{sec:stab}
We now introduce a stability framework
\lhe{for certifying that the grid}
is small-signal stable. 
\lhc{Throughout this section, we present the results using the
\lhe{impedance representation,} i.e., the open-loop transfer function $L_N(s) = Z_B(s) Y_N(s)$, and discuss in \cref{sec:stab_comb} how they apply to
\lhe{the admittance representation with} $\widehat{L}_N(s) = \widehat{Y}_B(s) \widehat{Z}_N(s)$}. 

\lhe{Under \cref{assump:open_loop}.A, \cref{lemma:nyquist_sufficient} guarantees that $[Z_B(s),Y_N(s)]$ is stable if, at every $s \in \Gamma_N^{\,\epsilon}$
(for all $\epsilon \in (0,\bar{\epsilon}]$),}
\begin{equation} \label{eq:loci_condition}
    -1 \notin \sigma(\tau L_N(s)), \quad \forall \tau \in [0,1].
\end{equation}
The framework consists of a set of quadratic constraints on the input--output properties of local grid subsystems, 
\lhe{each of which is sufficient for \eqref{eq:loci_condition} to hold at a given $s \in \Gamma_N^{\,\epsilon}$. Different conditions can be applied over different regions of the Nyquist contour, and stability is certified when these regions collectively cover the entire contour.}

To facilitate the analysis, we classify the stability conditions according to the level of locality of the subsystem to which they apply:
\begin{enumerate}
    \item \textbf{Decentralised bus conditions (\cref{sec:stab_decentralised}):} can be verified independently at each bus in the network and require \lhb{no global network information}.
    \item \textbf{Decentralised extended-subsystem conditions (\cref{sec:stab_extended}):} involve subsystems that couple each bus to its neighbouring lines along with any local loads.
    \item \lhb{\textbf{Local bus conditions for fixed networks (\cref{sec:stab_fixed}):} conditions (previously reported in the literature) that depend on centralised network information.}
\end{enumerate}

\lhf{The results in \crefrange{sec:stab_decentralised}{sec:stab_fixed} are presented as pointwise conditions over the Nyquist contour $\Gamma_N^{\, \epsilon}$.}
In \cref{sec:stab_comb}, we
\lhe{combine conditions from \lhg{each} level of locality into formal
stability certificates} \lhf{over the frequency domain}
and discuss how the framework can be practically implemented to design local grid codes that enable plug-and-play functionality. 

\subsection{Decentralised Bus Conditions} \label{sec:stab_decentralised}
We first apply \cref{lemma:iqc} to yield quadratic constraints on the local bus dynamics whenever a corresponding constraint on the network dynamics $Y_N(s)$ holds. The result exploits the block-diagonal structure of $Z_B(s)$. 

\begin{lemma} \label{lemma:iqc_decentralised}
    Consider the grid system $[Z_B(s),Y_N(s)]$ \lhe{described in \cref{sec:model} and} illustrated in \cref{fig:tikz_model_iv}. Let $\Pi(s) \in \mathbb{C}^{4N_B \times 4N_B}$ be a multiplier with the following block structure:
    \begin{equation} \label{eq:multiplier_decentralised}
        \Pi(s) = \begin{bmatrix}
            \oplus_{\nu_i \in \mathcal{V}_B} \Pi_{11}^i(s) & \oplus_{\nu_i \in \mathcal{V}_B} \Pi_{12}^i(s)^\ast \\
            \oplus_{\nu_i \in \mathcal{V}_B} \Pi_{12}^i(s) & \oplus_{\nu_i \in \mathcal{V}_B} \Pi_{22}^i(s)
        \end{bmatrix},
    \end{equation}
    where for each $\nu_i \in \mathcal{V}_B$, $\Pi_{11}^i(s) \in \mathbb{H}^2$, $\Pi_{12}^i(s) \in \mathbb{C}^{2\times 2}$, and $\Pi_{22}^i(s)\in \mathbb{H}^2$. 
    Then, for a particular $s \in \Gamma_N^{\,\epsilon}$, condition \eqref{eq:loci_condition} holds if, for every $\tau \in [0,1]$, there exists a $\Pi(s)$ of the form \eqref{eq:multiplier_decentralised} such that 
    \begin{subequations}    
    \begin{equation} \label{eq:iqc_decentralised_network}
        \begin{bmatrix} 
            I_{2N_B} \\ -\tau Y_N(s)
        \end{bmatrix}^\ast \Pi(s) \begin{bmatrix}
            I_{2N_B} \\ -\tau Y_N(s)
        \end{bmatrix} \leq 0,
    \end{equation}
    \begin{equation} \label{eq:iqc_decentralised_buses}
        \begin{bmatrix}
            Z_i(s) \\ I_2
        \end{bmatrix}^\ast \begin{bmatrix}
            \Pi_{11}^i(s) & \ \Pi_{12}^i(s)^\ast \\
            \Pi_{12}^i(s) &  \Pi_{22}^i(s)
        \end{bmatrix} \begin{bmatrix}
            Z_i(s) \\ I_2
        \end{bmatrix} > 0,
    \end{equation}
    \end{subequations}
    \lhh{where \eqref{eq:iqc_decentralised_buses} holds at every $\nu_i \in \mathcal{V}_B$.}
\end{lemma}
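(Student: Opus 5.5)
The plan is to reduce the statement directly to \cref{lemma:iqc}, applied with $P(s) = Z_B(s)$ and $K(s) = Y_N(s)$, so that $n = m = 2N_B$ and $L_N(s) = Z_B(s)Y_N(s)$. Fix $s \in \Gamma_N^{\,\epsilon}$ and $\tau \in [0,1]$, and take the multiplier $\Pi(s)$ of the form \eqref{eq:multiplier_decentralised} provided by the hypothesis. Note that $\Pi(s)$ has the general form \eqref{eq:multiplier} with $\Pi_{11} = \oplus_i \Pi_{11}^i$, $\Pi_{12} = \oplus_i \Pi_{12}^i$ and $\Pi_{22} = \oplus_i \Pi_{22}^i$. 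These blocks are Hermitian or of the correct dimension because each $\Pi_{11}^i, \Pi_{22}^i \in \mathbb{H}^2$.

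Condition \eqref{eq:iqc_decentralised_network} is literally \eqref{eq:iqc_conditions_btm} with $K = Y_N$, so it needs no further work. The remaining task is to show that \eqref{eq:iqc_conditions_top} follows from the bus-level conditions \eqref{eq:iqc_decentralised_buses}. The key step is a block-diagonal factorisation. Since $Z_B(s) = \oplus_i Z_i(s)$, I would expand
\begin{equation*}
\begin{bmatrix} Z_B \\ I \end{bmatrix}^\ast \Pi \begin{bmatrix} Z_B \\ I \end{bmatrix} = Z_B^\ast \Pi_{11} Z_B + Z_B^\ast \Pi_{12}^\ast + \Pi_{12} Z_B + \Pi_{22}.
\end{equation*}
Each of the four terms is a product or sum of block-diagonal matrices with conformal $2\times 2$ blocks, and is therefore block diagonal. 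The $i$th diagonal block equals $Z_i^\ast \Pi_{11}^i Z_i + Z_i^\ast \Pi_{12}^{i\ast} + \Pi_{12}^i Z_i + \Pi_{22}^i$, which is exactly the left-hand side of \eqref{eq:iqc_decentralised_buses}. A block-diagonal Hermitian matrix is positive definite if and only if each diagonal block is, so \eqref{eq:iqc_decentralised_buses} at every bus gives \eqref{eq:iqc_conditions_top}. A permutation-based argument could be used instead, but the direct expansion suffices.

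With both \eqref{eq:iqc_conditions_top} and \eqref{eq:iqc_conditions_btm} established at this $s$ and $\tau$, \cref{lemma:iqc} gives $-1 \notin \sigma(\tau L_N(s))$. Since $\tau \in [0,1]$ was arbitrary and the hypothesis supplies a multiplier for each $\tau$, \eqref{eq:loci_condition} holds at $s$.

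The one point needing care is that $Z_i(s)$ must be finite at the chosen $s$ for the expressions to make sense. This is implicit in the hypothesis that \eqref{eq:iqc_decentralised_buses} holds, so it is not a real obstacle. I also need to confirm that \cref{lemma:iqc} only requires the multiplier to exist pointwise in $s$ and $\tau$, which matches how this statement is phrased. The proof is therefore mostly bookkeeping, and the block-diagonal reduction is its only substantive step.
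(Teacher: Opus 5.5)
Your proposal is correct and matches the paper's proof, which simply invokes \cref{lemma:iqc} with $P = Z_B$ and $K = Y_N$. The paper notes that the block-diagonal structure of $Z_B(s)$ and of the multiplier reduces \eqref{eq:iqc_conditions_top} to the bus-wise inequalities \eqref{eq:iqc_decentralised_buses}, and your explicit block expansion just writes out that step.
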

\begin{proof}
    This follows directly from \cref{lemma:iqc} with the block-diagonal structure of $Z_B(s)$ and the multiplier \eqref{eq:multiplier_decentralised}. 
\end{proof}

Consequently, if a multiplier $\Pi(s)$ of the form \eqref{eq:multiplier_decentralised} satisfies the network constraint \eqref{eq:iqc_decentralised_network}, then it is sufficient to verify the corresponding set of local conditions \eqref{eq:iqc_decentralised_buses} at each bus $\nu_i \in \mathcal{V}_B$ to ensure \eqref{eq:loci_condition} holds. Furthermore, if the multiplier $\Pi(s)$ is independent of the dynamics of other buses and network-level parameters, then the condition \eqref{eq:iqc_decentralised_buses} is fully decentralised and can be used for plug-and-play functionality. 

\lhf{The following result provides a graphical interpretation of a special case of the conditions \eqref{eq:iqc_decentralised_network}--\eqref{eq:iqc_decentralised_buses}.}
\begin{corollary} \label{cor:dw_shell}
    \lhf{Assume $Y_N(s)$ is invertible.\footnote{\lhf{\Cref{cor:dw_shell} can also be \lhg{extended to the case} when $Y_N(s)$ is singular~\cite{lestas_LargeScale_12}.}} Then, the conditions \eqref{eq:iqc_decentralised_network}--\eqref{eq:iqc_decentralised_buses} hold at $s$ and $\nu_i$ \lhh{for all $\tau \in [0,1]$} with}
    \begin{equation} \label{eq:iqc_dw_shell}
        \lhf{\Pi(s) = \begin{bmatrix}
            a(s) & b(s)^\ast \\ b(s) & c(s)
        \end{bmatrix} \otimes I_{2N_B},}
    \end{equation}
    \lhf{for \lhh{scalars} $a(s) \leq 0$, $b(s) \in \mathbb{C}$, and $c(s) \geq 0$ if}
    \begin{equation*} 
        \lhf{\mathcal{DW}(Z_i(s)) \ \cap \ \mathcal{DW}(-Y_N(s)^{-1}) = \varnothing}
    \end{equation*}
    \lhf{where $\mathcal{DW}(\cdot)$ is the DW shell defined in \cref{def:dw_shell}}.
\end{corollary}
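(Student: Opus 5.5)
The plan is to turn both quadratic constraints into affine inequalities on DW shells and then obtain the multiplier from a separating hyperplane. In the \emph{bus step}, with $\Pi$ as in \eqref{eq:iqc_dw_shell} restricted to bus $\nu_i$, direct expansion gives
\begin{equation*}
\begin{bmatrix} Z_i \\ I_2\end{bmatrix}^\ast \begin{bmatrix} a & b^\ast \\ b & c\end{bmatrix}\otimes I_2 \begin{bmatrix} Z_i \\ I_2\end{bmatrix} = a Z_i^\ast Z_i + b Z_i + b^\ast Z_i^\ast + c I_2 .
\end{equation*}
Taking $x^\ast(\cdot)x$ with $\lVert x\rVert=1$ and writing $(w,r)=(x^\ast Z_i x,\lVert Z_i x\rVert^2)$, \eqref{eq:iqc_decentralised_buses} becomes
\begin{equation*}
f(w,r):=a r+2\,\mathrm{Re}(b w)+c>0 \quad \text{on } \mathcal{DW}(Z_i(s)).
\end{equation*}
Here $f$ is an affine functional on $\mathbb{C}\times\mathbb{R}\cong\mathbb{R}^3$, and every affine functional on $\mathbb{R}^3$ has this form for some scalars $a,c\in\mathbb{R}$ and $b\in\mathbb{C}$.

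In the \emph{network step}, I put $M=-Y_N(s)^{-1}$ and substitute $u=My$ into \eqref{eq:iqc_decentralised_network}. Since $-\tau Y_N M y=\tau y$, the quadratic form at $\tau=1$ becomes $a\lVert My\rVert^2+2\,\mathrm{Re}(b\,y^\ast M y)+c\lVert y\rVert^2$. So \eqref{eq:iqc_decentralised_network} at $\tau=1$ is equivalent to $f\le 0$ on $\mathcal{DW}(M)$. Once $a\le 0$ and $c\ge 0$ are established, \cref{cor:iqc_convex} extends \eqref{eq:iqc_decentralised_network} from $\tau=1$ to all $\tau\in[0,1]$. The conclusion then follows from \cref{lemma:iqc_decentralised}, which uses the same $\Pi$ for every $\tau$.

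The \emph{separation step} proceeds as follows. The shells are convex and compact: this is the cited convexity of the DW shell from \cite{lestas_LargeScale_12}, and compactness follows from continuity on the unit sphere. Disjointness therefore gives an affine $f$ with $f>0$ on one shell and $f<0$ on the other. Multiplying $f$ by $-1$ if necessary fixes the orientation, so that $f>0$ on $\mathcal{DW}(Z_i)$ and $f<0$ on $\mathcal{DW}(M)$. Because $\Pi$ is the common matrix $\begin{bsmallmatrix} a & b^\ast\\ b& c\end{bsmallmatrix}\otimes I_{2N_B}$, one hyperplane must serve every bus. I would handle this by separating $\mathcal{DW}(M)$ from $\mathcal{DW}(Z_B(s))$. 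Since $Z_B$ is block diagonal, $\mathcal{DW}(Z_B)$ is the convex hull of $\bigcup_i \mathcal{DW}(Z_i)$, so this is the natural reading of the hypothesis (disjointness from this hull), and a single $f$ positive on the hull is positive on each $\mathcal{DW}(Z_i)$.

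The step I expect to be the main obstacle is securing the sign constraints $a\le 0$ and $c\ge 0$, since a generic separating hyperplane fixes only its orientation and not these signs. My first attempt will use the geometry that every DW shell lies in the region $r\ge |w|^2$ (by Cauchy--Schwarz). The sign constraints are equivalent to requiring $f$ to be nonnegative at the origin and nonincreasing in the unbounded direction $r\to+\infty$. I would show that if $a>0$, the hyperplane can be rotated about its intersection with the paraboloid to reach $a=0$ while keeping strict separation, and similarly for the constant term. Failing that, I would fall back on the \emph{homotopy argument}: require separation of $\mathcal{DW}(Z_i)$ from the scaled shells $\{(w/\tau,r/\tau^2)\}$ of $M/\tau$, which corresponds to \eqref{eq:iqc_decentralised_network} at each $\tau$ directly. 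I would then argue, as in the DW-shell results of \cite{lestas_LargeScale_12}, that a separator valid for this whole family can be chosen with $a\le 0$ and $c\ge 0$.
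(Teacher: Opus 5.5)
Your bus and network steps are correct. With $f(w,r)=ar+2\,\mathrm{Re}(bw)+c$, \eqref{eq:iqc_decentralised_buses} is $f>0$ on $\mathcal{DW}(Z_i(s))$, and \eqref{eq:iqc_decentralised_network} at $\tau=1$ is $f\le 0$ on $\mathcal{DW}(-Y_N(s)^{-1})$. This is what the paper's one-line proof (Lestas' Theorem~3.1 plus \cref{cor:iqc_convex}) relies on.

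The genuine gap is the step you flagged, and it cannot be closed, because disjointness alone does not imply the conclusion. Take $N_B=1$, $Y_N(s)=yI_2$ with $y\neq 0$, and $Z_1(s)=-\tfrac{2}{y}I_2$. With $m=-1/y$, the two shells are the single points $(m,|m|^2)$ and $(2m,4|m|^2)$, so they are disjoint. But $L_N(s)=-2I_2$, so $-1\in\sigma(\tfrac12 L_N(s))$. Explicitly, set $\rho=\mathrm{Re}(\bar b y)$ and $q=|y|^2$. Then \eqref{eq:iqc_decentralised_buses} reads $q^{-1}(4a-4\rho+cq)>0$, while \eqref{eq:iqc_decentralised_network} at $\tau=\tfrac12$ reads $\tfrac14(4a-4\rho+cq)\le 0$. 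So no multiplier of the form \eqref{eq:iqc_dw_shell} works for all $\tau$, whatever the signs of $a$ and $c$.

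Geometrically, $h(t)=f(tm,t^2|m|^2)$ is concave when $a\le 0$, so $h(0)=c\ge 0$ and $h(2)>0$ force $h(1)>0$. The network point sits between the origin and the bus point on this lifted ray. Hence no plane with the origin on the bus side and $r\to+\infty$ on the network side can separate them, and your rotation argument must fail.

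This is not a pointwise artefact. Take a resistive load $Y_N=R^{-1}I_2$ and the stable $Z_1(s)=-\tfrac{2R}{s+1}I_2$. The shells are disjoint at every $s\in\Gamma_N^{\,\epsilon}$ (they meet only at $s=1$), yet $(I+L_N(s))^{-1}=\tfrac{s+1}{s-1}I_2$ is unstable.

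What is missing is therefore a hypothesis, not a proof idea: the orientation of the plane must be assumed. That is, one must assume there exist $a\le 0$, $b$ and $c\ge 0$ with $f>0$ on $\mathcal{DW}(Z_i(s))$ and $f\le 0$ on $\mathcal{DW}(-Y_N(s)^{-1})$; this is the reading suggested by the sentence after the corollary. Under that hypothesis, your first two steps plus \cref{cor:iqc_convex} already form a complete proof, along the paper's route. The paper's own proof does not address the orientation either: \cref{cor:iqc_convex} needs $a\le 0$ and $c\ge 0$, and disjointness does not supply them.

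Your fallback is the other legitimate repair, but it is also a change of hypothesis. Since $\bigcup_{\tau\in(0,1]}\mathcal{DW}(-(\tau Y_N(s))^{-1})$ is not convex, you must postulate a single plane separating $\mathcal{DW}(Z_i(s))$ from all these shells. Then \eqref{eq:iqc_decentralised_network} holds at each $\tau$ directly, $a\le 0$ follows by letting $\tau\to 0$, and $c\ge 0$ is not needed.

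Two smaller points:
\begin{itemize}
\item A $2\times 2$ shell is in general an ellipsoid surface and not convex; for $\begin{bmatrix}0&1\\0&0\end{bmatrix}$ it is the sphere $|w|^2+(r-\tfrac12)^2=\tfrac14$. Your appeal to convexity therefore fails for $Z_i(s)$, and separation must be argued with convex hulls.
\item The statement is per bus, so a common hyperplane is not needed for it, although it is what \cref{lemma:iqc_decentralised} requires.
\end{itemize}
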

\begin{proof}
    \lhf{This follows from \cite[Theorem 3.1]{lestas_LargeScale_12}} \lhh{and \cref{cor:iqc_convex}}. 
\end{proof}

\lhf{Due to the convexity of the DW shell, the scalars $a(s)$, $b(s)$, $c(s)$ parametrise a plane in three-dimensional space that strictly separates the DW shells of $Z_i(s)$ and $-Y_N(s)^{-1}$ \cite{lestas_LargeScale_12}. For fixed networks, \cref{cor:dw_shell} provides a graphical method to check condition \eqref{eq:iqc_decentralised_network}--\eqref{eq:iqc_decentralised_buses} at a particular $s \in \Gamma_N^{\, \epsilon}$, which can facilitate the identification of multipliers.
}
\begin{remark}
    \lhf{As we will see in \cref{cor:pos_real,cor:fixed_net}, many standard results in the literature, such as the positive-real, small-gain, and small-phase conditions \cite{huang_GainPhase_24}, can be written as constraints with multipliers of the form \eqref{eq:iqc_dw_shell}. The DW shell separation condition has been further exploited in \cite{huang_GeometricDecentralized_25,leng_GeometricDecentralized_26} to include graphical verification using two-dimensional projected versions of the DW shell for networks with homogeneous R/X ratios, such as scaled relative graph (SRG) separation. However, the full-block nature of \eqref{eq:multiplier_decentralised} contains additional degrees of freedom compared to \eqref{eq:iqc_dw_shell}, meaning \cref{lemma:iqc_decentralised} offers further generality to design conditions that cannot be captured by a separating plane in three dimensions, while still retaining decentralised constraints.}
\end{remark}

We now specialise \cref{lemma:iqc_decentralised} to derive a positive-real constraint on the bus dynamics. 

\subsubsection{Positive-Real Condition}
The following condition results from applying \cref{lemma:iqc_decentralised} to the grid system $[Z_B(s),Y_N(s)]$ under \lhf{a positive-real assumption. As noted in \cref{sec:model_network}, this holds if all blocks in $Y_N(s)$ are modelled as \eqref{eq:line_admittance}--\eqref{eq:shunt_admittance} and the dynamics of any non-passive loads are included in the bus impedances}. 

\begin{corollary} \label{cor:pos_real}
    \lhf{Assume $Y_N(s)$ is positive-real (see \cref{def:positive_real})}. 
    Then, for a particular $s \in \Gamma_{N}^{\, \epsilon}$, \eqref{eq:loci_condition} holds if inequality \eqref{eq:iqc_decentralised_buses} is satisfied at every $\nu_i \in \mathcal{V}_B$ with 
    \begin{align}
        \Pi_{11}^i(s) = 0_2, && \Pi_{12}^i(s) = I_2, && \Pi_{22}^i(s) = 0_2.
    \end{align} 
\end{corollary}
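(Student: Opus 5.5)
We will apply \cref{lemma:iqc_decentralised} with the block-diagonal multiplier built from $\Pi_{11}^i(s)=0_2$, $\Pi_{12}^i(s)=I_2$, $\Pi_{22}^i(s)=0_2$. In the notation of \eqref{eq:multiplier_decentralised} this is
\[
\Pi(s)=\begin{bmatrix}0 & I_{2N_B}\\ I_{2N_B} & 0\end{bmatrix},
\]
which is independent of $\tau$. With this multiplier the bus condition \eqref{eq:iqc_decentralised_buses} expands to $Z_i(s)+Z_i(s)^\ast>0$, and it is assumed to hold at every bus. It therefore remains only to verify the network condition \eqref{eq:iqc_decentralised_network} for every $\tau\in[0,1]$.

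\textbf{Reducing the network condition.} Expanding \eqref{eq:iqc_decentralised_network} with this $\Pi(s)$ gives
\[
-\tau\bigl(Y_N(s)+Y_N(s)^\ast\bigr)\leq 0.
\]
There are two ways to see that checking $\tau=1$ suffices. Directly, the expression is linear in $\tau$ and vanishes at $\tau=0$. Alternatively, $\Pi_{11}=0\le0$ and $\Pi_{22}=0\ge0$, so \cref{cor:iqc_convex} applies. Either way, the requirement becomes $Y_N(s)+Y_N(s)^\ast\geq 0$ at the given $s\in\Gamma_N^{\,\epsilon}$.

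\textbf{Using positive-realness.} By \cref{def:positive_real}, positive-realness of $Y_N(s)$ gives $Y_N(s)+Y_N(s)^\ast\ge0$ at all $s\in\mathbb{C}_+$ and at all points of analyticity on $j\mathbb{R}$. We then check each part of the contour.
\begin{itemize}
\item On the indentation $\Gamma_\epsilon$ and on the large arc $\Gamma_R$ (for finite $R$), $s\in\mathbb{C}_+$, so the inequality holds immediately.
\item On the imaginary-axis segments $\Gamma_{j\omega}^{\epsilon\pm}$, it holds wherever $Y_N$ is analytic. Under \cref{assump:open_loop} the contour avoids imaginary-axis poles; lossless lines introduce poles at $\pm j\omega_0$, and these are handled by the indentations described in the footnote, which lie in $\mathbb{C}_+$.
\end{itemize}
Hence \eqref{eq:iqc_decentralised_network} holds for all $\tau\in[0,1]$. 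Combined with \eqref{eq:iqc_decentralised_buses}, \cref{lemma:iqc_decentralised} then yields \eqref{eq:loci_condition} at $s$.

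\textbf{Main obstacle.} The only delicate point is the behaviour of the contour at infinity and at points where $Y_N$ fails to be analytic. Capacitive shunts make $Y_N$ improper, so the condition on the $\Gamma_R$ limit must be read pointwise for finite $s$, consistent with the footnote on the compactified plane. Near poles, the argument relies on the contour being indented into $\mathbb{C}_+$, where positive-realness gives the inequality directly. The pointwise statement is made at a fixed $s$ on the contour, so the inequality is needed only there, and by the above it holds.
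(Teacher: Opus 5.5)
Your proposal is correct and follows the paper's own proof: with $\Pi_{12}^i(s)=I_2$ and $\Pi_{11}^i(s)=\Pi_{22}^i(s)=0_2$, the network condition at $\tau=1$ reduces to $Y_N(s)+Y_N(s)^\ast\geq 0$. Positive-realness supplies this inequality, and \cref{cor:iqc_convex,lemma:iqc_decentralised} then finish the argument. Your part-by-part check of the contour, and the direct remark that $-\tau(Y_N+Y_N^\ast)$ is linear in $\tau$ and vanishes at $\tau=0$, simply spell out what the paper gets from its one-line appeal to \cref{def:positive_real}.
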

\begin{proof}
    By \cref{def:positive_real}, $Y_N(s) + Y_N(s)^\ast \geq 0$ for all $s \in \Gamma_N^{\, \epsilon}$, which can be written in the form \eqref{eq:iqc_decentralised_network} for $\tau = 1$. The result then follows from \cref{cor:iqc_convex,lemma:iqc_decentralised}. 
\end{proof}

The condition in \cref{cor:pos_real} can be viewed as a \lhf{frequency-wise} positive-real constraint on the bus dynamics at a particular $s \in \Gamma_{N}^{\, \epsilon}$. We also have the following graphical interpretation of the conditions in \cref{cor:pos_real}. 
\begin{corollary} \label{cor:pos_real_nr}
    \lhf{Assume $Y_N(s)$ is positive-real}. 
    Then, for a particular $s \in \Gamma_{N}^{\, \epsilon}$, \eqref{eq:loci_condition} holds if the following condition is satisfied at $\nu_i \in \mathcal{V}_B$: 
    \begin{equation*} 
            \mathcal{W}(Z_i(s)) \subset \mathbb{C}_+,
    \end{equation*}
    where $\mathcal{W}(\cdot)$ is the numerical range in \cref{def:numerical_range}. 
\end{corollary}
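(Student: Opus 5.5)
The plan is to reduce the statement directly to \cref{cor:pos_real}, using \cref{lemma:numerical_range_pos_def} to translate the numerical-range condition into the matrix inequality required there. We read the condition as holding at every $\nu_i \in \mathcal{V}_B$, matching the hypothesis of \cref{cor:pos_real}.

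First, evaluate inequality \eqref{eq:iqc_decentralised_buses} with the multiplier of \cref{cor:pos_real}, that is, $\Pi_{11}^i(s) = 0_2$, $\Pi_{12}^i(s) = I_2$ and $\Pi_{22}^i(s) = 0_2$. Expanding the quadratic form gives
\begin{equation*}
    \begin{bmatrix} Z_i(s) \\ I_2 \end{bmatrix}^\ast
    \begin{bmatrix} 0_2 & I_2 \\ I_2 & 0_2 \end{bmatrix}
    \begin{bmatrix} Z_i(s) \\ I_2 \end{bmatrix}
    = Z_i(s) + Z_i(s)^\ast ,
\end{equation*}
so \eqref{eq:iqc_decentralised_buses} reduces to $Z_i(s) + Z_i(s)^\ast > 0$.

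Second, apply \cref{lemma:numerical_range_pos_def} to the matrix $A = Z_i(s) \in \mathbb{C}^{2\times 2}$. It states that $Z_i(s) + Z_i(s)^\ast > 0$ if and only if $\mathcal{W}(Z_i(s)) \subset \mathbb{C}_+$. The hypothesis $\mathcal{W}(Z_i(s)) \subset \mathbb{C}_+$ at every bus therefore gives the local inequality \eqref{eq:iqc_decentralised_buses} at every bus with the positive-real multiplier.

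Third, since $Y_N(s)$ is assumed positive-real, \cref{cor:pos_real} applies and yields \eqref{eq:loci_condition} at the given $s \in \Gamma_N^{\,\epsilon}$.

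There is no substantive obstacle. The only care needed is the correspondence between the ordering of the multiplier blocks and the resulting expression $Z_i + Z_i^\ast$, which the expansion above settles.
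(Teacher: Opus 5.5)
Your proposal is correct and matches the paper's proof, which simply cites \cref{lemma:numerical_range_pos_def} together with \cref{cor:pos_real}. Your explicit expansion of the quadratic form to $Z_i(s)+Z_i(s)^\ast > 0$, and reading the numerical-range condition as holding at every bus, fill in exactly the details that one-line citation leaves implicit.
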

\begin{proof}
    This follows from \cref{lemma:numerical_range_pos_def,cor:pos_real}. 
\end{proof}

As \lhf{$Y_N(s)$ composed of \eqref{eq:line_admittance}--\eqref{eq:shunt_admittance} is positive-real} \lhg{for any topology and for all network parameters}, the constraint in \cref{cor:pos_real} (equivalently \cref{cor:pos_real_nr}) can be checked independently by each bus in the network, without knowledge of the dynamics of any other bus or the full network admittance matrix, enabling decentralised verification and plug-and-play functionality. \lhg{However, on its own, it is a conservative condition that is typically not satisfied at all frequency ranges.}

\subsubsection{Infinity-Norm Condition}
\lhb{%
As in the case of DC grids \cite{laib_DecentralizedStability_23}, a matrix infinity-norm condition can be derived for AC grids that allows for plug-and-play functionality. 
}%
\begin{corollary} \label{cor:inf_norm}
    \lhe{Let \cref{assump:connected} hold} and let each block of $Y_N(s)$ in \eqref{eq:network_admittance_blocks} be modelled as \eqref{eq:line_admittance} \lhe{or \eqref{eq:shunt_admittance}} and for each $\nu_i \in \mathcal{V}_B$, define
    \begin{equation*} 
        \gamma_i(s) = \lVert Y_i(s) + \textstyle\sum_{\nu_k \in \mathcal{N}_i} Y_{ik}(s) \rVert_\infty + \sum_{\nu_k \in \mathcal{N}_i} \lVert Y_{ik}(s)\rVert_\infty,
    \end{equation*}
    where $\lVert \cdot \rVert_\infty$ is the induced matrix $\infty$-norm (maximum absolute row-sum \cite{horn_MatrixAnalysis_85}).
    Then, for a particular $s \in \Gamma_{N}^{\, \epsilon}$, \eqref{eq:loci_condition} holds if inequality \eqref{eq:iqc_decentralised_buses} is satisfied at every $\nu_i \in \mathcal{V}_B$ with 
    \begin{align*}
        \Pi_{11}^i(s) = -\gamma_i(s)I_2, && \Pi_{12}^i(s) = 0_2, && \Pi_{22}^i(s) = \gamma_i(s)^{-1}I_2.
    \end{align*}
\end{corollary}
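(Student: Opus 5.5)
The plan is to apply \cref{lemma:iqc_decentralised} with the diagonal multiplier $\Pi(s) = \mathrm{diag}(-\Gamma(s), \Gamma(s)^{-1})$, where $\Gamma(s) := \oplus_{\nu_i \in \mathcal{V}_B} \gamma_i(s) I_2$. First I note that $\gamma_i(s) > 0$: since $\mathcal{N}_i \neq \varnothing$ by \cref{assump:connected}, $\gamma_i(s)$ includes the term $\lVert Y_{ik}(s)\rVert_\infty$ of some line. This term is nonzero at every $s \in \Gamma_N^{\,\epsilon}$ where it is finite, because $Y_{ik}(s)$ is the inverse of a matrix. Hence $\Gamma(s)^{-1}$ is well defined. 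With this choice, the bus inequality \eqref{eq:iqc_decentralised_buses} is exactly the condition stated in the corollary. Moreover, $\Pi_{11} = -\Gamma \leq 0$ and $\Pi_{22} = \Gamma^{-1} \geq 0$, so by \cref{cor:iqc_convex} it is enough to verify the network inequality \eqref{eq:iqc_decentralised_network} at $\tau = 1$. That inequality reads $Y_N(s)^\ast \Gamma(s)^{-1} Y_N(s) \leq \Gamma(s)$, or equivalently
\begin{equation*}
    \lVert \Gamma(s)^{-1/2} Y_N(s) \Gamma(s)^{-1/2} \rVert_2 \leq 1 .
\end{equation*}

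The main step, and the one I expect to be the main obstacle, is to derive this spectral-norm bound from the row-sum quantities $\gamma_i(s)$. I would use the Schur test. For a complex matrix $B$, if there is a positive vector $q$ with $\sum_j |b_{pj}| q_j \leq q_p$ for every row $p$ and $\sum_p |b_{pj}| q_p \leq q_j$ for every column $j$, then $\lVert B \rVert_2 \leq 1$. I would take $B = \Gamma^{-1/2} Y_N \Gamma^{-1/2}$ and choose $q$ with entry $\sqrt{\gamma_i}$ on both scalar indices belonging to bus $\nu_i$. The $\sqrt{\gamma_j}$ factors then cancel, and the row condition for a scalar row $p$ of block-row $i$ reduces to
\begin{equation*}
    \textstyle\sum_j |(Y_N)_{pj}| \leq \gamma_i .
\end{equation*}
Using the block structure \eqref{eq:network_admittance_blocks}, this row sum is at most
\begin{equation*}
    \lVert Y_i + \textstyle\sum_{\nu_k\in\mathcal{N}_i} Y_{ik}\rVert_\infty + \sum_{\nu_k\in\mathcal{N}_i}\lVert Y_{ik}\rVert_\infty = \gamma_i(s),
\end{equation*}
so the row condition holds.

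For the column condition, I would exploit the specific structure of the models. Every block in \eqref{eq:line_admittance} and \eqref{eq:shunt_admittance} has the rotational form $\begin{bsmallmatrix} \alpha & -\beta \\ \beta & \alpha \end{bsmallmatrix}$, and this form is preserved under inversion and under addition. For such a block, the absolute column sums equal the absolute row sums, i.e.\ $\lVert \cdot \rVert_1 = \lVert \cdot \rVert_\infty$. Since in addition $Y_{ji} = Y_{ij}$, the absolute column sum of any column in block-column $j$ of $Y_N$ equals the absolute row sum computed for block-row $j$. It is therefore again bounded by $\gamma_j(s)$, and the column condition follows in the same way as the row condition.

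Combining the two conditions, the Schur test gives $\lVert B\rVert_2 \leq 1$, so \eqref{eq:iqc_decentralised_network} holds at $\tau=1$. By \cref{cor:iqc_convex} it then holds for all $\tau \in [0,1]$. \Cref{lemma:iqc_decentralised} then yields \eqref{eq:loci_condition} at the given $s$ whenever the bus inequalities hold at every $\nu_i$. Finally, I would remark that each $\gamma_i(s)$ depends only on the lines and loads incident to $\nu_i$, which is what makes the resulting condition local.
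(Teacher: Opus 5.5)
Your proof is correct and follows the paper's argument. It uses the same diagonal multiplier, the same reduction to $\tau=1$ via \cref{cor:iqc_convex}, and obtains $Y_N(s)^\ast\Gamma(s)^{-1}Y_N(s)\leq\Gamma(s)$ from the fact that every row and column of $Y_N(s)$ has absolute sum at most the $\gamma_i(s)$ of the corresponding bus. The only difference is that you package this through the weighted Schur test, whereas the paper bounds $\rho(\gamma^{-1}Y_N^\ast\gamma^{-1}Y_N)\leq\lVert\gamma^{-1}Y_N^\ast\rVert_\infty\lVert\gamma^{-1}Y_N\rVert_\infty\leq 1$; the two routes are equivalent, and your explicit column-sum justification via the rotational block form and $Y_{ij}=Y_{ji}$ fills in the step the paper covers with ``similarly''.
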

\begin{proof}
    \lhb{The corollary is proved in Appendix \ref{app:inf_norm_proof}.}
\end{proof}
\lhb{%
\cref{cor:inf_norm} thus provides a locally verifiable condition on the bus impedance that depends only on the parameters of the lines adjacent to bus $\nu_i$ and any local load/shunt. Therefore, each bus can independently check the condition over a specified frequency range without knowledge of the dynamics of other buses or the full network admittance matrix.
}%

\subsection{Decentralised Extended-Subsystem Conditions} \label{sec:stab_extended}
In this section, using ideas from \cite{lestas_NetworkStability_11, pates_ScalableDesign_17, laib_DecentralizedStability_23}, we demonstrate how the grid system $[Z_B(s),Y_N(s)]$ can be reformulated into an equivalent system composed of a block-diagonal set of subsystems that each contain the dynamics of a single bus and all the lines, loads, and shunts associated with that bus, connected in feedback with an interconnection matrix. We then show that quadratic constraints can be imposed on these local subsystems that collectively guarantee \eqref{eq:loci_condition} holds at a particular $s \in \Gamma_{N}^{\, \epsilon}$. 
\lhf{These conditions depend only on local dynamics and parameters, making them suitable for plug-and-play operation. }

\subsubsection{Reformulated System}
We first define some notation that will be useful in the derivation and the subsequent sections.
\begin{itemize}

    \item \lhe{Recall the edge set $\mathcal{E}_N = \mathcal{E}_P \cup \mathcal{E}_0$ from \cref{sec:model_topology}, containing the $N_P$ power lines and the loads/shunts present in the network, and let $N_E := |\mathcal{E}_N|$.} Define the \textit{extended incidence matrix} \lhe{$B_E \in \mathbb{R}^{N_B \times N_E}$ as the incidence matrix of $(\mathcal{V}_N, \mathcal{E}_N)$ with the row corresponding to $\nu_0$ removed, i.e.,}
    \begin{equation*} 
        B_E := \begin{bmatrix} B_N & S_0 \end{bmatrix},
    \end{equation*}
    where $B_N$ is defined in \eqref{eq:incidence_matrix} \lhe{and the columns of $S_0 \in \mathbb{R}^{N_B \times |\mathcal{E}_0|}$ are the standard basis vectors $e_i \in \mathbb{R}^{N_B}$ for each load/shunt $(\nu_i,\nu_0) \in \mathcal{E}_0$.} Let $\mathcal{B}_E := B_E \otimes I_2$, with $i^{\mathrm{th}}$ block-row $\mathcal{B}_E^{i\bullet}$ (of dimension $2 \times 2N_E$), \lhe{and let $Y_E(s) := \oplus_{(\nu_i,\nu_j) \in \mathcal{E}_N} Y_{ij}(s)$, with the block order following the columns of $B_E$ (recall that $Y_{ij}(s) \equiv Y_i(s)$ for loads/shunts).}

    \item \lhe{For each $\nu_i \in \mathcal{V}_B$, let $m_i$ denote the number of edges in $\mathcal{E}_N$ incident to $\nu_i$, and let $K_i \in \mathbb{R}^{N_E \times m_i}$ collect the standard basis vectors $e_j \in \mathbb{R}^{N_E}$ associated with these edges, i.e.,}
    \begin{equation*}
        K_i := \begin{bmatrix}
            e_{j_1} & e_{j_2} & \cdots  & e_{j_{m_i}}
        \end{bmatrix},
    \end{equation*}
    where $\{j_1, j_2,\ldots, j_{m_i}\}$ are the indices associated with the non-zero columns of $B_E^{i\bullet}$.
    Then, \lhg{for any $M \in \mathbb{C}^{N_E \times N_E}$,} $K_i^T M K_i$ is the sub-matrix of $M$ whose rows and columns correspond to the edges at $\nu_i$. Define $\mathcal{K}_i := K_i \otimes I_2$ and
    \begin{equation} \label{eq:K_mat_def}
        \mathcal{K} := \begin{bmatrix}
            \mathcal{K}_1 & \mathcal{K}_2 & \cdots & \mathcal{K}_{N_B}
        \end{bmatrix}.
    \end{equation}
    \lhe{Note that
    \begin{equation} \label{eq:nE_def}
        n_E := \mathcal{K}\mathcal{K}^T = \Bigl( \oplus_{(\nu_i,\nu_j) \in
        \mathcal{E}_N}\, n_{ij} \Bigr) \otimes I_2,
    \end{equation}
    where $n_{ij} = 2$ if $(\nu_i,\nu_j) \in \mathcal{E}_P$ (each line touches two buses) and $n_{ij} = 1$ if $(\nu_i,\nu_j) \in \mathcal{E}_0$.}

    \item For each $\nu_i \in \mathcal{V}_B$, define
    \begin{equation} \label{eq:E_sys_def}
        E_i(s) := Z_{E,i}(s) Y_{E,i}(s),
    \end{equation}
    where
    \begin{subequations} \label{eq:ZE_LE_def}
        \begin{align}
            Z_{E,i}(s) &:= \mathcal{K}_i^T (\mathcal{B}_E^{i\bullet})^T
            Z_i(s) \mathcal{B}_E^{i\bullet} \mathcal{K}_i, \label{eq:ZE_LE_def_Z} \\
            Y_{E,i}(s) &:= \mathcal{K}_i^T Y_E(s) \mathcal{K}_i, \label{eq:ZE_LE_def_Y}
        \end{align}
    \end{subequations}
    with $Z_i(s)$ defined in \eqref{eq:bus_iv_relationship}, and let
    $E(s) := \oplus_{\nu_i \in \mathcal{V}_B} E_i(s)$.
\end{itemize}

The structure of $E_i(s)$ --- the bus impedance $Z_i(s)$ in signed, repeated blocks, multiplied by the block-diagonal admittances of all branches at $\nu_i$ --- is illustrated in \cref{fig:tikz_extended}.
We note that each $E_i(s) \in \mathbf{RH}_{\infty}^{2m_i \times 2m_i}$ by \cref{assump:open_loop}.A.\footnote{
\lhe{Note that the blocks of $E_i(s)$ contain transfer functions that appear within blocks of $L_N(s)$ when taking the product $Z_B(s)Y_N(s)$ (up to a sign), so $E_i(s)$ is proper and stable if $L_N(s)$ is proper and stable.}
} 
With this notation, we can now prove the following result. 

\begin{figure}[t]
    \centering
    \begin{subfigure}{0.48\textwidth}
        \centering
        \input{Figures/tikz_extended_graph}
        \caption{Example graph with 4 buses and 3 power lines.}
        \label{fig:tikz_extended_graph}
    \end{subfigure}
    
    \begin{subfigure}{0.48\textwidth}
        \centering
        \input{Figures/tikz_extended_block_diag}
        \caption{The structure of $Z_{E,i}(s)$ and $Y_{E,i}(s)$ for node $\nu_2$.}
        \label{fig:tikz_extended_block_diag}
    \end{subfigure}
    \caption{Example graph to show the structure of $Z_{E,i}(s)$ and $Y_{E,i}(s)$, as defined in \eqref{eq:ZE_LE_def}, for a bus connected to two power lines and one load. 
    }
    \label{fig:tikz_extended}
\end{figure}

\begin{lemma} \label{lemma:reform_extended}
    Consider the grid system $[Z_B(s),Y_N(s)]$ \lhe{described in \cref{sec:model} and} illustrated in \cref{fig:tikz_model_iv}.
    For a particular $s \in \Gamma_{N}^{\, \epsilon}$, the condition \eqref{eq:loci_condition} holds if and only if
    \begin{equation} \label{eq:loci_condition_extended}
        -1 \notin \sigma(\tau  E(s) \mathcal{K}^T\mathcal{K}), \quad \forall \tau \in [0,1],
    \end{equation}
    where $E(s)$ and $\mathcal{K}$ \lhh{are} defined above.  
\end{lemma}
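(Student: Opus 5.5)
We will show that $L_N(s)$ and $E(s)\mathcal{K}^T\mathcal{K}$ have the same nonzero eigenvalues. Since $-1/\tau \neq 0$, this gives \eqref{eq:loci_condition} $\Leftrightarrow$ \eqref{eq:loci_condition_extended} at each $s$ and $\tau$. The tool is the identity $\sigma(AB)\setminus\{0\} = \sigma(BA)\setminus\{0\}$, equivalently $\det(I+\tau AB) = \det(I+\tau BA)$ (Sylvester's determinant identity). The plan is to factor $L_N(s) = Z_B(s)Y_N(s)$ as a product whose cyclic permutation is exactly $E(s)\mathcal{K}^T\mathcal{K}$.

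\textbf{Step 1: factor the network admittance.} From \eqref{eq:network_admittance} and the definition of $B_E$, we have $Y_N(s) = \mathcal{B}_N Y_P(s)\mathcal{B}_N^T + Y_0(s) = \mathcal{B}_E Y_E(s)\mathcal{B}_E^T$. This holds because the $S_0$ columns contribute $e_ie_i^T\otimes Y_i(s)$, which recovers $Y_0(s)$ (if no shunt is present, the term is simply absent).

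\textbf{Step 2: localise the incidence matrix.} Row $i$ of $B_E$ is supported on the edges at $\nu_i$, so $K_iK_i^T$ acts as a projector on those columns and
\begin{equation*}
\mathcal{B}_E^{i\bullet} = \mathcal{B}_E^{i\bullet}\mathcal{K}_i\mathcal{K}_i^T .
\end{equation*}
Write $\mathcal{D} := \oplus_{\nu_i\in\mathcal{V}_B} \mathcal{B}_E^{i\bullet}\mathcal{K}_i$, which is block diagonal with blocks of size $2\times 2m_i$. Then $\mathcal{B}_E = \mathcal{D}\mathcal{K}^T$, where $\mathcal{K}$ is as in \eqref{eq:K_mat_def}. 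Next, because $Y_E(s)$ is block diagonal over edges, we have $\mathcal{K}_i^T Y_E(s) = Y_{E,i}(s)\mathcal{K}_i^T$. Stacking over $i$ gives $\mathcal{K}^T Y_E(s) = Y_{E,\oplus}(s)\mathcal{K}^T$, where $Y_{E,\oplus}(s) := \oplus_i Y_{E,i}(s)$.

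\textbf{Step 3: cyclic permutation.} Combining the previous steps,
\begin{equation*}
L_N = Z_B\,\mathcal{D}\mathcal{K}^T Y_E \mathcal{K}\mathcal{D}^T .
\end{equation*}
Cycling the leading factor $Z_B\mathcal{D}$ to the end shows that the nonzero spectrum of $L_N$ equals that of
\begin{equation*}
\mathcal{K}^T Y_E \mathcal{K}\mathcal{D}^T Z_B \mathcal{D} = Y_{E,\oplus}\,\mathcal{K}^T\mathcal{K}\,\mathcal{D}^TZ_B\mathcal{D}.
\end{equation*}
Since $Z_B$ is block diagonal, $\mathcal{D}^TZ_B\mathcal{D} = \oplus_i (\mathcal{B}_E^{i\bullet}\mathcal{K}_i)^T Z_i\,\mathcal{B}_E^{i\bullet}\mathcal{K}_i = \oplus_i Z_{E,i} =: Z_{E,\oplus}$ by \eqref{eq:ZE_LE_def_Z}. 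Cycling once more, moving $Z_{E,\oplus}$ to the front, yields $Z_{E,\oplus}Y_{E,\oplus}\mathcal{K}^T\mathcal{K}$. Because both factors are block diagonal with matching blocks, $Z_{E,\oplus}Y_{E,\oplus} = E(s)$, and the product equals $E(s)\mathcal{K}^T\mathcal{K}$.

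\textbf{Step 4: conclude.} For each $\tau\in[0,1]$, $\det(I+\tau L_N(s)) = \det(I+\tau E(s)\mathcal{K}^T\mathcal{K})$, which gives the equivalence at the particular $s$.

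The main obstacle is the bookkeeping in Step 2. One has to justify that $\mathcal{K}_i^TY_E = Y_{E,i}\mathcal{K}_i^T$, which relies on $Y_E$ being block diagonal so that there is no cross-edge coupling. One also has to confirm that the repeated-edge structure of $\mathcal{K}$ (each line appearing at two buses) is exactly what produces $\mathcal{K}^T\mathcal{K}$ rather than the identity.
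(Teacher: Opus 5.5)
Your proof is correct and follows essentially the same route as the paper. Both arguments factor $Y_N = \mathcal{B}_E Y_E \mathcal{B}_E^T$, use the identities $\mathcal{B}_E^{i\bullet}\mathcal{K}_i\mathcal{K}_i^T = \mathcal{B}_E^{i\bullet}$ and the commutation of $\mathcal{K}_i\mathcal{K}_i^T$ with the block-diagonal $Y_E$, and apply the equality of nonzero spectra of $AB$ and $BA$ twice. Your factorisation $\mathcal{B}_E = \mathcal{D}\mathcal{K}^T$ together with $\mathcal{K}^T Y_E = (\oplus_i Y_{E,i})\mathcal{K}^T$ is a more compact packaging of the paper's sum-over-buses manipulation, which arrives at $\mathcal{K}E\mathcal{K}^T$ before the final cyclic step.
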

\begin{proof}
    The lemma is proved in Appendix \ref{app:extended_reform_proof}. 
\end{proof}

This result indicates that stability of the grid system $[Z_B(s),Y_N(s)]$ can equivalently be determined by analysing the feedback system $[E(s),\mathcal{K}^T \mathcal{K}]$.

\subsubsection{Stability Conditions on Reformulated System}
The system $E(s)$ is a block-diagonal system (with structure of each block similar to the example in \cref{fig:tikz_extended}), while $\mathcal{K}^T\mathcal{K}$ is a static matrix with entries of $0$ and $1$ corresponding to the sparsity of the network. This sparsity can be exploited to design decentralised conditions \lhf{with the plug-and-play property} that can be used to certify that \eqref{eq:loci_condition_extended} (and therefore \eqref{eq:loci_condition}) holds.

\begin{lemma} \label{lemma:iqc_ext}
    Consider the reformulated grid system $[E(s), \mathcal{K}^T \mathcal{K}]$ \lhe{defined above}. For a particular $s \in \Gamma_{N}^{\, \epsilon}$ and for each $(\nu_i,\nu_j) \in \mathcal{E}_N$, let $\pi_{12}^{ij}(s) \in \mathbb{C}^{2\times2}$ and $\pi_{22}^{ij}(s) \in \mathbb{H}^2$, with $\pi_{22}^{ij}(s) \geq 0$, be multiplier blocks such that the following matrix inequality holds: 
    \begin{equation} \label{eq:iqc_ext_condition}
        -\pi_{12}^{ij}(s) - \pi_{12}^{ij}(s)^\ast + n_{ij}\pi_{22}^{ij}(s) \leq 0,
    \end{equation}
    \lhe{where $n_{ij}$ is as in \eqref{eq:nE_def}.}
    Let $\pi_{12}(s) = \oplus_{(\nu_i,\nu_j) \in \mathcal{E}_N} \pi_{12}^{ij}(s)$ and $\pi_{22}(s) = \oplus_{(\nu_i,\nu_j) \in \mathcal{E}_N} \pi_{22}^{ij}(s)$. Then, for each $\nu_i \in \mathcal{V}_B$, take
    \begin{subequations} \label{eq:mult_ext_bus_level}
        \begin{align} 
            \Pi_{12}^i(s):= \mathcal{K}_i^T \pi_{12}(s) \mathcal{K}_i,  \\  
            \Pi_{22}^i(s):= \mathcal{K}_i^T \pi_{22}(s) \mathcal{K}_i,
        \end{align}
    \end{subequations}
    i.e., the blocks of $\Pi_{12}^i(s)$ and $\Pi_{22}^i(s)$ are selected to include the multiplier terms associated with the edges connected to $\nu_i \in \mathcal{V}_B$. 
    Then, condition \eqref{eq:loci_condition_extended} holds if, for every $\nu_i \in \mathcal{V}_B$,
    \begin{equation} \label{eq:iqc_ext_local}
        \begin{bmatrix}
            E_i(s) \\ I_{2m_i}
        \end{bmatrix}^\ast \begin{bmatrix}
            0 & \ \Pi_{12}^i(s)^\ast \\
            \Pi_{12}^i(s) &  \Pi_{22}^i(s)
        \end{bmatrix} \begin{bmatrix}
            E_i(s) \\ I_{2m_i}
        \end{bmatrix} > 0.
    \end{equation} 
\end{lemma}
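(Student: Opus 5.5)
The plan is to apply \cref{lemma:iqc} directly to the reformulated feedback system $[E(s),\mathcal{K}^T\mathcal{K}]$, with $P(s)=E(s)$ and $K(s)=\mathcal{K}^T\mathcal{K}$. The multiplier will be block diagonal with
\begin{equation*}
\Pi_{11}(s)=0,\qquad \Pi_{12}(s)=\oplus_{\nu_i\in\mathcal{V}_B}\Pi_{12}^i(s),\qquad \Pi_{22}(s)=\oplus_{\nu_i\in\mathcal{V}_B}\Pi_{22}^i(s),
\end{equation*}
where the blocks are given by \eqref{eq:mult_ext_bus_level}.

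\textbf{Step 1: the condition on $E(s)$.} Because $E(s)=\oplus_i E_i(s)$ and every block of $\Pi$ has the same block-diagonal partition, the quantity in \eqref{eq:iqc_conditions_top} is itself block diagonal. Its $i$th diagonal block is exactly the left-hand side of \eqref{eq:iqc_ext_local}. Hence \eqref{eq:iqc_ext_local} holding at every $\nu_i$ gives \eqref{eq:iqc_conditions_top}.

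\textbf{Step 2: an algebraic identity on $\mathcal{K}$.} This is the key step and the one I expect to need most care. Since $\pi_{12}(s)$ is block diagonal over edges, it commutes with the projection $\mathcal{K}_i\mathcal{K}_i^T$ onto the edges at $\nu_i$. Therefore
\begin{equation*}
\mathcal{K}_i^T\pi_{12}(s)=\mathcal{K}_i^T\pi_{12}(s)\mathcal{K}_i\mathcal{K}_i^T=\Pi_{12}^i(s)\mathcal{K}_i^T .
\end{equation*}
Stacking this over $i$ gives $\Pi_{12}(s)\mathcal{K}^T=\mathcal{K}^T\pi_{12}(s)$, and in the same way $\Pi_{22}(s)\mathcal{K}^T=\mathcal{K}^T\pi_{22}(s)$.

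By \eqref{eq:nE_def}, $\mathcal{K}\mathcal{K}^T=n_E$ is block diagonal with scalar multiples of $I_2$. It therefore commutes with $\pi_{22}(s)$. Combining these facts,
\begin{equation*}
\mathcal{K}^T\mathcal{K}\,\Pi_{22}(s)\,\mathcal{K}^T\mathcal{K}=\mathcal{K}^T n_E\,\pi_{22}(s)\,\mathcal{K}.
\end{equation*}

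\textbf{Step 3: the condition on $\mathcal{K}^T\mathcal{K}$.} Using Step 2 and the symmetry of $\mathcal{K}^T\mathcal{K}$, the left-hand side of \eqref{eq:iqc_conditions_btm} becomes
\begin{equation*}
\mathcal{K}^T\Bigl(-\tau\bigl(\pi_{12}(s)+\pi_{12}(s)^\ast\bigr)+\tau^2 n_E\,\pi_{22}(s)\Bigr)\mathcal{K}.
\end{equation*}
The inner matrix is block diagonal over edges. At $\tau=1$ its blocks are exactly the left-hand sides of \eqref{eq:iqc_ext_condition}, so they are negative semidefinite. At $\tau=0$ the inner matrix is zero.

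Because $\pi_{22}(s)\ge 0$, the expression is convex in $\tau$. This is the same argument as \cref{cor:iqc_convex}, and it also follows directly from that corollary since $\Pi_{11}=0$ and $\Pi_{22}\ge0$. So the inner matrix is negative semidefinite for all $\tau\in[0,1]$, and the congruence by $\mathcal{K}$ preserves this.

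\textbf{Step 4: conclusion.} Both conditions of \cref{lemma:iqc} now hold for every $\tau\in[0,1]$. Hence $-1\notin\sigma(\tau E(s)\mathcal{K}^T\mathcal{K})$ for all $\tau\in[0,1]$, which is \eqref{eq:loci_condition_extended}.
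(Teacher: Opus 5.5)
Your proposal is correct and follows essentially the same route as the paper's proof. It uses the same multiplier with $\Pi_{11}=0$, reduces the $E(s)$ condition to \eqref{eq:iqc_ext_local} by block-diagonality, relates the bus-wise and edge-wise multipliers through $\mathcal{K}$, and uses $\mathcal{K}\mathcal{K}^T=n_E$ to reduce the $\mathcal{K}^T\mathcal{K}$ condition to the edge-wise inequalities \eqref{eq:iqc_ext_condition}. The differences are cosmetic: you carry $\tau$ through explicitly rather than invoking \cref{cor:iqc_convex} at $\tau=1$, and the cross terms strictly need the identity for $\pi_{12}(s)^\ast$, namely $\Pi_{12}(s)^\ast\mathcal{K}^T=\mathcal{K}^T\pi_{12}(s)^\ast$, which your argument gives verbatim.
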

\begin{proof}
    The lemma is proved in Appendix \ref{app:iqc_ext_proof}. 
\end{proof}
\Cref{lemma:iqc_ext} suggests the following procedure 
\lhe{to verify \eqref{eq:loci_condition_extended}:}
\begin{enumerate}
    \item At each line, load, and shunt $(\nu_i,\nu_j)\in \mathcal{E}_N$, choose local matrices $\pi_{12}^{ij}(s)$ and $\pi_{22}^{ij}(s)$ that satisfy \eqref{eq:iqc_ext_condition}. These can depend on local parameters to each line, or be set to global values as part of a network-wide grid code. 
    \item Then, for each bus $\nu_i \in \mathcal{V}_B$, check whether \eqref{eq:iqc_ext_local} is satisfied using the multiplier terms associated with the adjoining edges $\pi_{12}^{ij}(s)$ and $\pi_{22}^{ij}(s)$.
\end{enumerate}
Crucially, \lhf{as the matrices} $\pi_{12}^{ij}(s)$ and $\pi_{22}^{ij}(s)$ \lhf{are local to each line}, \cref{lemma:iqc_ext} can be used as a decentralised stability condition that enables plug-and-play functionality. 

\subsubsection{Example Conditions}
We now provide a specific example of a constraint
\lhf{for use in \cref{lemma:iqc_ext}.}

\begin{corollary} \label{cor:ext_stability_simp}
    Let $\bar{\eta}(s) \in \mathbb{R}$. Then, for a particular $s \in \Gamma_{N}^{\, \epsilon}$, \eqref{eq:loci_condition_extended} holds if inequality \eqref{eq:iqc_ext_local} is satisfied at every $\nu_i \in \mathcal{V}_B$ with \lhf{multipliers given by \eqref{eq:mult_ext_bus_level}, where, at every $(\nu_i,\nu_j) \in \mathcal{E}_N$},
    \begin{subequations} \label{eq:iqc_ext_terms_simp}
        \begin{align}  
            & \pi_{12}^{ij}(s)  = \left( 1 + j\bar{\eta}(s)  \right) \sqrt{Y_{ij}(s)^\ast} \sqrt{Y_{ij}(s)}, \\
            & \pi_{22}^{ij}(s)  = 2n_{ij}^{-1} \sqrt{Y_{ij}(s)^\ast} \sqrt{Y_{ij}(s)},
        \end{align}
    \end{subequations}
    \lhe{where $n_{ij}$ is as in \eqref{eq:nE_def}}.
\end{corollary}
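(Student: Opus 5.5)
The plan is to invoke \cref{lemma:iqc_ext} directly. It then suffices to check that the edge-wise blocks in \eqref{eq:iqc_ext_terms_simp} are admissible: $\pi_{22}^{ij}(s)$ must be Hermitian and positive semidefinite, and the edge inequality \eqref{eq:iqc_ext_condition} must hold at every $(\nu_i,\nu_j)\in\mathcal{E}_N$. Once these are established, \cref{lemma:iqc_ext} applied with the bus-level multipliers \eqref{eq:mult_ext_bus_level} gives \eqref{eq:loci_condition_extended} whenever \eqref{eq:iqc_ext_local} holds at every bus, which is exactly the claim.

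\textbf{Admissibility of $\pi_{22}^{ij}(s)$.} Write $M_{ij}(s) := \sqrt{Y_{ij}(s)^\ast}\sqrt{Y_{ij}(s)} = (\sqrt{Y_{ij}(s)})^\ast\sqrt{Y_{ij}(s)}$, using the convention $\sqrt{A^\ast}:=(\sqrt{A})^\ast$ from the notation section. This is a Gram matrix, so it is Hermitian and positive semidefinite. Since $n_{ij}\in\{1,2\}$ is positive, $\pi_{22}^{ij}(s) = 2n_{ij}^{-1}M_{ij}(s)$ is Hermitian and $\geq 0$, as required.

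\textbf{Edge inequality.} Because $M_{ij}(s)$ is Hermitian, we have $\pi_{12}^{ij}(s)^\ast = (1-j\bar\eta(s))M_{ij}(s)$. The hypothesis $\bar\eta(s)\in\mathbb{R}$ is used exactly here, so that conjugation only flips the sign of the imaginary coefficient. Then
\begin{equation*}
-\pi_{12}^{ij} - (\pi_{12}^{ij})^\ast + n_{ij}\pi_{22}^{ij} = -(1+j\bar\eta)M_{ij} - (1-j\bar\eta)M_{ij} + 2M_{ij} = 0 \leq 0.
\end{equation*}
This holds with equality for every edge type, whether a line ($n_{ij}=2$) or a load or shunt ($n_{ij}=1$). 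The $n_{ij}^{-1}$ scaling in $\pi_{22}^{ij}$ is chosen precisely to absorb the factor $n_{ij}$ appearing in \eqref{eq:iqc_ext_condition}.

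\textbf{Main obstacle: existence of $\sqrt{Y_{ij}(s)}$.} The only nontrivial point is that the square root $\sqrt{Y_{ij}(s)}$ must exist at the given $s$. I would argue existence as follows. First, $Y_{ij}(s)$ in \eqref{eq:line_admittance} and \eqref{eq:shunt_admittance} has the commuting form $\alpha I + \beta J$, so its eigenvalues are $\alpha \pm j\beta$. Away from isolated singular points these eigenvalues are nonzero and distinct, so the matrix is diagonalisable and admits a (primary) square root. At any remaining exceptional points, I would either take $\sqrt{Y_{ij}(s)}$ to be any existing root or note that the corollary implicitly presumes existence, as in the notation ``where it exists''. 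Within the corollary itself, existence of a root is all that is used, not uniqueness.
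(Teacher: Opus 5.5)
Your proposal is correct and matches the paper's proof, which simply notes that the blocks in \eqref{eq:iqc_ext_terms_simp} satisfy \eqref{eq:iqc_ext_condition} and then applies \cref{lemma:iqc_ext}; your explicit checks (equality in the edge inequality, and $\pi_{22}^{ij}\geq 0$ as a Gram matrix) fill in exactly that step. Your concern about exceptional points is unnecessary: any $\alpha I+\beta J$ is diagonalisable in the fixed eigenbasis of $J$, so a square root exists wherever \eqref{eq:line_admittance}--\eqref{eq:shunt_admittance} are defined, and in any case your verification never uses $(\sqrt{Y_{ij}})^2=Y_{ij}$.
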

\begin{proof}
    As \eqref{eq:iqc_ext_terms_simp} satisfy \eqref{eq:iqc_ext_condition}, \cref{lemma:iqc_ext} applies.
\end{proof}

\lhf{We now interpret \lhf{this constraint} graphically.}

\begin{corollary} \label{cor:ext_stability_nr}
    Let $\bar{\eta}(s) \in \mathbb{R}$ and let $\mathcal{L}_{\bar{\eta}(s)} \subset \mathbb{C}$ be the subset of the complex plane strictly to the right of a line through $-1$ defined by
    \begin{equation*} 
        \mathcal{L}_{\bar{\eta}(s)} = \{z = a+jb \ | \ a - \bar{\eta}(s)b > -1 \}.
    \end{equation*}
    \lhf{Assume $Y_{E,i}(s)$ in \eqref{eq:ZE_LE_def_Y} is invertible. Then, the condition \eqref{eq:iqc_ext_local} holds at $s$ with multiplier blocks \eqref{eq:iqc_ext_terms_simp}} if and only if 
    \begin{equation} \label{eq:ext_nr}
        \mathcal{W}\left(C_{E,i}(s)\right) \subset \mathcal{L}_{\bar{\eta}(s)},
    \end{equation}
    where 
    \begin{equation} \label{eq:ext_nr_mat}
        C_{E,i}(s) = \sqrt{n_i} \sqrt{Y_{E,i}(s)} Z_{E,i}(s) \sqrt{Y_{E,i}(s)} \sqrt{n_i},
    \end{equation}
    $\mathcal{W}(\cdot)$ is the numerical range in \cref{def:numerical_range}, $Z_{E,i}(s)$ is defined in \eqref{eq:ZE_LE_def_Z}, and $n_i := \mathcal{K}_i^T n_E \mathcal{K}_i$, with $n_E$ in \eqref{eq:nE_def}. 
\end{corollary}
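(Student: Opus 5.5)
The plan is to rewrite the quadratic form in \eqref{eq:iqc_ext_local} as a congruence transformation of the Hermitian part of a shifted and rotated version of $C_{E,i}(s)$. We then apply \cref{lemma:numerical_range_pos_def}.

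First we record the local structure. Since $\pi_{12}(s)$ and $\pi_{22}(s)$ are block diagonal over edges, \eqref{eq:mult_ext_bus_level} gives
\[
\Pi_{12}^i(s) = (1+j\bar\eta)\, S_i^\ast S_i, \qquad \Pi_{22}^i(s) = 2\, S_i^\ast n_i^{-1} S_i,
\]
where $S_i := \mathcal{K}_i^T \bigl(\oplus \sqrt{Y_{ij}(s)}\bigr)\mathcal{K}_i$. We choose the square root of the block-diagonal matrix $Y_{E,i}(s)$ blockwise, so that $S_i = \sqrt{Y_{E,i}(s)}$. The matrix $n_i$ is a positive diagonal scalar-block matrix, so it commutes with $S_i$ and $S_i^\ast$. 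We then expand the left-hand side of \eqref{eq:iqc_ext_local} with $E_i = Z_{E,i} Y_{E,i} = Z_{E,i} S_i^2$:
\[
(1+j\bar\eta) S_i^\ast S_i Z_{E,i} S_i^2 + (1-j\bar\eta) (S_i^\ast)^2 Z_{E,i}^\ast S_i^\ast S_i + 2 S_i^\ast n_i^{-1} S_i .
\]

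Next we factor this expression. Invertibility of $Y_{E,i}(s)$ gives invertibility of $S_i$. We pull out $S_i^\ast \sqrt{n_i}^{-1}$ on the left and its adjoint $\sqrt{n_i}^{-1} S_i$ on the right, using the commutation with $\sqrt{n_i}$. The expression then becomes
\[
S_i^\ast \sqrt{n_i}^{-1} \bigl[ (1+j\bar\eta) C + (1-j\bar\eta) C^\ast + 2I \bigr] \sqrt{n_i}^{-1} S_i ,
\qquad C = \sqrt{n_i}\, S_i Z_{E,i} S_i \sqrt{n_i} = C_{E,i}(s).
\]
The outer factors are invertible, so by Sylvester congruence the left-hand side of \eqref{eq:iqc_ext_local} is positive definite if and only if $M + M^\ast > 0$, where $M := (1+j\bar\eta)C + I$.

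By \cref{lemma:numerical_range_pos_def}, $M + M^\ast > 0$ holds if and only if $\mathcal{W}(M) \subset \mathbb{C}_+$. The numerical range transforms affinely, so $\mathcal{W}(M) = (1+j\bar\eta)\mathcal{W}(C) + 1$. For $z = a + jb \in \mathcal{W}(C)$ we have $\mathrm{Re}\bigl((1+j\bar\eta)z + 1\bigr) = a - \bar\eta b + 1$. This is positive exactly when $z \in \mathcal{L}_{\bar\eta(s)}$, which gives the equivalence with \eqref{eq:ext_nr}.

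The main obstacle is the bookkeeping in the second step: making $\sqrt{Y_{E,i}}$, $\mathcal{K}_i^T(\cdot)\mathcal{K}_i$ and $\sqrt{n_i}$ consistent with each other. This requires that the square roots be taken blockwise per edge, and that $n_i$ be scalar on each $2\times2$ edge block, so that it commutes with everything. I would state these as explicit conventions before doing the algebra.
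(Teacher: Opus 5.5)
Your proposal is correct and follows essentially the same route as the paper: both use the congruence by $\sqrt{n_i}^{-1}\sqrt{Y_{E,i}(s)}$ to reduce \eqref{eq:iqc_ext_local} to $(1+j\bar\eta)C_{E,i}+(1-j\bar\eta)C_{E,i}^\ast+2I>0$. The paper then evaluates this form at unit vectors directly, while you invoke \cref{lemma:numerical_range_pos_def} together with the affine shift of the numerical range; these are the same computation, and your explicit blockwise convention for $\sqrt{Y_{E,i}(s)}$ (so that it matches the $\sqrt{Y_{ij}(s)}$ in the multipliers and commutes with $n_i$) makes precise a step the paper uses only implicitly.
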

\begin{proof}
    The corollary is proved in Appendix \ref{app:ext_stability_nr_proof}.
\end{proof}

The corollary demonstrates that the condition \eqref{eq:iqc_ext_local} with blocks given in \eqref{eq:iqc_ext_terms_simp} is equivalent to the condition that the numerical range of the matrix $C_{E,i}(s)$ lies strictly to the right of a line passing through the point $-1$. The angle of the line is determined by the parameter $\bar{\eta}(s)$ (setting $\bar{\eta}(s)=0$ results in a vertical line). This graphical interpretation provides a simple method for finding feasible values of $\bar{\eta}(s)$ \lhf{to facilitate the identification of multipliers for \eqref{eq:iqc_ext_local}}.

\lhf{Moreover,} the role of the $\sqrt{Y_{ij}(s)}$ factors in \eqref{eq:iqc_ext_terms_simp} also becomes clear via the graphical interpretation of \cref{cor:ext_stability_nr}: the matrix $\sqrt{Y_{E,i}(s)} \,Z_{E,i}(s) \, \sqrt{Y_{E,i}(s)}$ is typically \lhg{closer to Hermitian} than $E_i(s) = Z_{E,i}(s)Y_{E,i}(s)$,
\lhg{so its numerical range lies closer to the convex hull of its eigenvalues.}
\lhe{Condition \eqref{eq:ext_nr} is therefore less conservative than if applied to $E_i(s)$ directly.}

\lhf{We now state a less conservative condition that allows us to exploit the fact that \eqref{eq:iqc_ext_condition} is a constraint applied at each edge.}

\begin{corollary} \label{cor:ext_stability}
     For each $(\nu_i,\nu_j) \in \mathcal{E}_N$, let $\eta_{ij}(s) \in \mathbb{R}$ and $\gamma_{ij}(s) \in \mathbb{R}$. Then, for a particular $s \in \Gamma_{N}^{\, \epsilon}$, \eqref{eq:loci_condition_extended} holds if inequality \eqref{eq:iqc_ext_local} is satisfied at every $\nu_i \in \mathcal{V}_B$ with \lhf{multipliers given by \eqref{eq:mult_ext_bus_level}, where, at every $(\nu_i,\nu_j) \in \mathcal{E}_N$},
    \begin{subequations} \label{eq:iqc_ext_terms}
        \begin{align}  
            &\begin{aligned}
                \pi_{12}^{ij}(s) = \sqrt{Y_{ij}(s)^\ast} & \Bigl[ (1 + j\eta_{ij}(s) ) I_2 \Bigr. \\
                &\Bigl. + \gamma_{ij}(s) J \Bigr] \sqrt{Y_{ij}(s)},
            \end{aligned} \\
            & \pi_{22}^{ij}(s)  = \frac{2}{n_{ij}} \sqrt{Y_{ij}(s)^\ast} \sqrt{Y_{ij}(s)},
        \end{align}
    \end{subequations}
    where $J = \begin{bsmallmatrix} 0 & -1 \\ 1 & 0 \end{bsmallmatrix}$, \lhe{and $n_{ij}$ is as in \eqref{eq:nE_def}}.
\end{corollary}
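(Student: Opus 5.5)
The plan mirrors the proof of \cref{cor:ext_stability_simp}. I will show that the blocks \eqref{eq:iqc_ext_terms} satisfy the hypotheses of \cref{lemma:iqc_ext}, namely $\pi_{22}^{ij}(s)\geq 0$ and the edge-wise inequality \eqref{eq:iqc_ext_condition}, and then invoke \cref{lemma:iqc_ext} directly.

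\textbf{Positivity of $\pi_{22}^{ij}$.} Write $S:=\sqrt{Y_{ij}(s)}$, so that $\sqrt{Y_{ij}(s)^\ast}=S^\ast$ by the paper's convention. Then $\pi_{22}^{ij}(s)=\frac{2}{n_{ij}}S^\ast S$. This is Hermitian and positive semidefinite, since $n_{ij}\in\{1,2\}$ is positive.

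\textbf{The edge inequality \eqref{eq:iqc_ext_condition}.} Since $\pi_{12}^{ij}(s)$ is a congruence $S^\ast M S$ with $M:=(1+j\eta_{ij})I_2+\gamma_{ij}J$, and $\eta_{ij},\gamma_{ij}$ are real, I compute
\[
M+M^\ast=(1+j\eta_{ij})I_2+(1-j\eta_{ij})I_2+\gamma_{ij}(J+J^T)=2I_2 .
\]
The imaginary scalar part cancels, and the $J$ term cancels because $J$ is real and skew-symmetric. Hence
\[
-\pi_{12}^{ij}-\pi_{12}^{ij\,\ast}+n_{ij}\pi_{22}^{ij}=-S^\ast(M+M^\ast)S+2S^\ast S=0\leq 0 .
\]
So \eqref{eq:iqc_ext_condition} holds with equality at every edge.

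\textbf{Conclusion.} The blocks \eqref{eq:iqc_ext_terms} therefore satisfy all hypotheses of \cref{lemma:iqc_ext}. Assembling the bus-level multipliers through \eqref{eq:mult_ext_bus_level}, \cref{lemma:iqc_ext} gives that \eqref{eq:loci_condition_extended} holds whenever \eqref{eq:iqc_ext_local} is satisfied at every bus.

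There is no substantial obstacle here. The only point needing care is that $\sqrt{Y_{ij}(s)}$ must exist at the given $s$, and the statement implicitly assumes this, as \cref{cor:ext_stability_simp} does. Given that, the cancellation of the $J$ term is exactly what makes the extra freedom $\gamma_{ij}$ admissible. I also note that \cref{cor:ext_stability_simp} is recovered as the special case $\gamma_{ij}=0$, $\eta_{ij}=\bar\eta$.
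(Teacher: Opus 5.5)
Your proposal is correct and takes the same route as the paper, whose proof simply asserts that the blocks \eqref{eq:iqc_ext_terms} satisfy \eqref{eq:iqc_ext_condition} and then invokes \cref{lemma:iqc_ext}. You supply the omitted verification: $\pi_{22}^{ij}(s)=\frac{2}{n_{ij}}S^\ast S\geq 0$, and $M+M^\ast=2I_2$ because the $\gamma_{ij}J$ term cancels by skew-symmetry, so \eqref{eq:iqc_ext_condition} holds with equality.
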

\begin{proof}
    As \eqref{eq:iqc_ext_terms} satisfy \eqref{eq:iqc_ext_condition}, \cref{lemma:iqc_ext} applies.
\end{proof}

When using \eqref{eq:iqc_ext_terms}
\lhf{in \cref{lemma:iqc_ext}}, condition \eqref{eq:iqc_ext_local} imposes a constraint on the local extended subsystem $E_i(s)$ that depends only on the dynamics of local lines, loads and shunts, and on the scalars $\eta_{ij}(s)$ and $\gamma_{ij}(s)$, which \lhf{are allowed to} differ at each edge \lhf{(as \eqref{eq:iqc_ext_local} using \eqref{eq:iqc_ext_terms} are linear in the scalars $\eta_{ij}(s)$, $\gamma_{ij}(s)$, suitable values can be found at each $s \in \Gamma_N^{\,\epsilon}$ by solving a local linear matrix inequality \cite{boyd_LinearMatrix_94})}. 
\lhf{If these are infeasible, further flexibility is available by exploiting the full-block structure of each $\pi_{12}^{ij}(s)$ and $\pi_{22}^{ij}(s)$.}
The condition \eqref{eq:iqc_ext_local} therefore facilitates plug-and-play operation as the addition of a new line to the network only impacts the stability conditions imposed on any buses connected to that line. 

\lhf{Finally, we note that the sparsity of $Y_N(s)$ was the key feature that allowed us to derive \cref{lemma:reform_extended,lemma:iqc_ext}. As $\widehat{Z}_N(s) = Y_N(s)^{-1}$ is in general non-sparse, conditions on extended subsystems (and the plug-and-play benefits they offer) cannot be used in the admittance representation.}

\begin{remark} \label{remark:distributed_conds}
    The system $E(s)$ derived in this section consists of a set of extended bus subsystems that couple the dynamics of each bus to its neighbouring lines, loads and shunts. Similarly, the methods used here can be applied to create an alternative decomposition of the grid consisting of subsystems that instead couple each line to its two neighbouring buses, along with any local loads/shunts (see \cite{hallinan_PerformanceStability_25} for details).
    \lhf{Conditions imposed on such distributed subsystems} may be less conservative than the decentralised constraints derived here and therefore more suitable in some applications. 
\end{remark}

\subsection{Local Bus Conditions for Fixed Networks} \label{sec:stab_fixed}
In this section, 
\lhc{we show that many existing results in the literature are special cases of \cref{lemma:iqc_decentralised}. In each case,} 
the multiplier parameters depend on global network parameters or centralised knowledge of the properties of $Y_N(s)$, \lhe{so these conditions are suitable for fixed network topologies only}. 

\begin{corollary} \label{cor:fixed_net}
    Consider the grid system $[Z_B(s),Y_N(s)]$ with fixed network admittance $Y_N(s)$ defined in \eqref{eq:network_admittance_blocks}. Then, for a particular $s \in \Gamma_{N}^{\, \epsilon}$, condition \eqref{eq:loci_condition} holds if inequality \eqref{eq:iqc_decentralised_buses} is satisfied at every $\nu_i \in \mathcal{V}_B$ with any of the following multiplier blocks: 
    \begin{enumerate}[i)]
        \item \textbf{Frequency-wise small-gain \cite{huang_GainPhase_24}:}
        \begin{align*}
            \Pi_{11}^i(s) = -\bar{\sigma}(s)^2 I_2,  && \Pi_{12}^i(s) = 0_2, && \Pi_{22}^i(s) = I_2,
        \end{align*}
        where $\bar{\sigma}(s) := \lVert Y_N(s) \rVert_2$ (the induced matrix $2$-norm of $Y_N(s)$ at $s \in \mathbb{C}$). 

        \item \textbf{Frequency-wise small-phase \cite{huang_GainPhase_24}:}
        \lhf{\lhg{if} $Y_N(s)$ is positive-real (see \cref{def:positive_real}), then \lhg{the following multipliers can be used:}}
        \begin{align*}
            \Pi_{11}^i(s) = 0_2,  && \Pi_{12}^i(s) = e^{-j\alpha(s)}I_2, && \Pi_{22}^i(s) = 0_2,
        \end{align*}
        where $\alpha(s) \in [-\frac{\pi}{2} - \underline{\phi}(s),\frac{\pi}{2} - \bar{\phi}(s)]$, with $\bar{\phi}(s) := \sup_{z \in \mathcal{W}(Y_N(s))} \angle z$ and $\underline{\phi}(s) = \inf_{z \in \mathcal{W}(Y_N(s))} \angle z$ (here, $\angle z$ is the argument of $z \in \mathbb{C}$ and $\mathcal{W}(\cdot)$ is the numerical range in \cref{def:numerical_range}). 

        \item \textbf{Rotated positive-real \cite{vorobev_DecentralizedStability_19,chen_ExtendedFrequencyDomain_24}:}
        \lhf{\lhg{if} there exists an $\omega_r > 0$ such that for all $s \in \{j\omega: |\omega| \leq \omega_r\}$,
        \begin{equation} \label{eq:rotated_pos_real}
            (I_{N_B} \otimes J)Y_N(s) - Y_N(s)^\ast (I_{N_B} \otimes J)  \geq 0,
        \end{equation}
        then} \lhg{the following multipliers can be used:}
        \begin{align*}
            \Pi_{11}^i(s) = 0_2,  && \Pi_{12}^i(s) = -J, && \Pi_{22}^i(s) = 0_2,
        \end{align*}
        where $J = \begin{bsmallmatrix}
            0 & -1 \\ 1 & 0
        \end{bsmallmatrix}$ and $s = j\omega$ with $|\omega| \leq \omega_r$. 

        \item \textbf{Dynamic weighted passivity \cite{gorbunov_DynamicPassivity_26}:}
        \begin{align*}
            \Pi_{11}^i(s) = 0_2,  && \Pi_{12}^i(s) = m(s), && \Pi_{22}^i(s) = 0_2,
        \end{align*}
        where $m(s) \in \mathbf{RH}_{\infty}^{2\times 2}$ is a transfer function such that $(I_{N_B} \otimes m(s))^\ast Y_N(s)$ is positive-real. 
    \end{enumerate}
\end{corollary}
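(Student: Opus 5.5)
Each of the four cases reduces to \cref{lemma:iqc_decentralised}, used together with \cref{cor:iqc_convex}. In every case the given multiplier blocks are the same at every bus (or the same scalar multiple of the identity), so the block-diagonal multiplier \eqref{eq:multiplier_decentralised} becomes a Kronecker product $\Pi(s) = \begin{bsmallmatrix}\pi_{11} & \pi_{12}^\ast\\ \pi_{12} & \pi_{22}\end{bsmallmatrix}$ with each block of the form $I_{N_B}\otimes(\cdot)$. Each case also satisfies $\Pi_{11}^i(s)\le 0$ and $\Pi_{22}^i(s)\ge 0$: in case i) $-\bar\sigma^2 I\le 0$ and $I\ge 0$, and in the other cases both blocks are zero. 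By \cref{cor:iqc_convex}, it therefore suffices to verify the network inequality \eqref{eq:iqc_decentralised_network} only at $\tau=1$. The bus inequality \eqref{eq:iqc_decentralised_buses} is assumed, so once the network inequality is established the result follows from \cref{lemma:iqc_decentralised}.

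At $\tau=1$ the network inequality \eqref{eq:iqc_decentralised_network} expands to
$\Pi_{11} - \Pi_{12}^\ast Y_N - Y_N^\ast \Pi_{12} + Y_N^\ast \Pi_{22} Y_N \le 0$. I check this case by case.
\begin{enumerate}[i)]
\item Here the expression is $-\bar\sigma(s)^2 I + Y_N^\ast Y_N$. It is $\le 0$ because $\lVert Y_N(s)\rVert_2 = \bar\sigma(s)$.
\item Here the expression is $-(e^{-j\alpha}Y_N + e^{j\alpha}Y_N^\ast)$, so I need $e^{-j\alpha}Y_N$ to have positive-semidefinite Hermitian part. By \cref{lemma:numerical_range_pos_def}, this holds if and only if $e^{-j\alpha}\mathcal{W}(Y_N(s)) \subset \bar{\mathbb{C}}_+$. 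Positive-realness gives $\mathcal{W}(Y_N)\subset\bar{\mathbb{C}}_+$, so arguments in $\mathcal{W}(Y_N)$ lie in $[\underline\phi,\bar\phi]\subseteq[-\pi/2,\pi/2]$. The stated range of $\alpha$ ensures every point of $\mathcal{W}(Y_N)$, rotated by $-\alpha$, has argument in $[-\pi/2,\pi/2]$.
\item Here $\Pi_{12}=-(I\otimes J)$, and $J^\ast=-J$. The expression becomes $(I\otimes J)^\ast Y_N + Y_N^\ast(I\otimes J) = -\bigl[(I\otimes J)Y_N - Y_N^\ast (I\otimes J)\bigr]$, which is $\le 0$ by \eqref{eq:rotated_pos_real} on the stated frequency range.
\item Here the expression is $-\bigl[(I\otimes m)^\ast Y_N + Y_N^\ast (I\otimes m)\bigr]$. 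This is $\le 0$ by the assumed positive-realness of $(I\otimes m)^\ast Y_N$, which by \cref{def:positive_real} makes the Hermitian part positive semidefinite at the relevant points.
\end{enumerate}

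The main obstacle is the sign and angle bookkeeping in case ii). I need to confirm that the stated interval for $\alpha(s)$ is exactly the right one, by checking the endpoint rotations $\bar\phi - \alpha \le \pi/2$ and $\underline\phi - \alpha \ge -\pi/2$, so that the rotated numerical range stays in $\bar{\mathbb{C}}_+$. I will also need to note that positive-realness is used only on the contour $\Gamma_N^{\,\epsilon}$. There it applies at points of analyticity, which are all points of the contour by \cref{assump:open_loop}. On the indentation and large arc, where $s\in\mathbb{C}_+$, it holds directly by \cref{def:positive_real}. The same care is needed in case iv), where $m$ evaluated at $s$ and at $\bar s$ must be handled consistently with the conjugate transpose.
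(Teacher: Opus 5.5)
Your route matches the paper's: in each case you verify the network inequality \eqref{eq:iqc_decentralised_network} at $\tau=1$ and conclude via \cref{cor:iqc_convex,lemma:iqc_decentralised}. Cases i), iii) and iv) are correct as written. Case ii), however, has a conjugation slip that makes the step fail as planned. Since $\Pi_{12}^i(s)=e^{-j\alpha(s)}I_2$, you have $\Pi_{12}^\ast=e^{j\alpha}I$. The $\tau=1$ expression is therefore $-\Pi_{12}^\ast Y_N-Y_N^\ast\Pi_{12}=-\bigl(e^{j\alpha}Y_N+e^{-j\alpha}Y_N^\ast\bigr)$, not $-\bigl(e^{-j\alpha}Y_N+e^{j\alpha}Y_N^\ast\bigr)$. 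You conjugated correctly in iii) and iv), but not here. The requirement is thus $\mathcal{W}(e^{j\alpha}Y_N(s))\subset\bar{\mathbb{C}}_+$, a rotation by $+\alpha$, not $-\alpha$.

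This is not cosmetic. Your endpoint checks $\bar\phi-\alpha\le\frac{\pi}{2}$ and $\underline\phi-\alpha\ge-\frac{\pi}{2}$ describe the interval $[\bar\phi-\frac{\pi}{2},\underline\phi+\frac{\pi}{2}]$. This coincides with the stated interval $[-\frac{\pi}{2}-\underline\phi,\frac{\pi}{2}-\bar\phi]$ only when $\underline\phi=-\bar\phi$; otherwise one endpoint of the stated interval violates your check. For example, if $\mathcal{W}(Y_N(s))=\{e^{j\pi/4}\}$, the statement allows $\alpha=-\frac{3\pi}{4}$, but then $e^{-j\alpha}e^{j\pi/4}=-1\notin\bar{\mathbb{C}}_+$. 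Carried out as proposed, your verification would refute the claim rather than confirm it. With the correct sign, the rotated arguments lie in $[\underline\phi+\alpha,\bar\phi+\alpha]$, and the conditions $\underline\phi+\alpha\ge-\frac{\pi}{2}$, $\bar\phi+\alpha\le\frac{\pi}{2}$ are exactly the stated range. \cref{lemma:numerical_range_pos_def} then gives $e^{j\alpha}Y_N+Y_N^\ast e^{-j\alpha}\ge 0$, which is precisely the inequality the paper's proof uses.
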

\begin{proof}
    The corollary is proved in Appendix \ref{app:fixed_net_proof}. 
\end{proof}

\lhf{The property \eqref{eq:rotated_pos_real} is based on the results of \cref{lemma:rotated_admittance} for networks composed of blocks modelled as \eqref{eq:line_admittance}. However, as discussed in \cref{sec:model_network}, \eqref{eq:rotated_pos_real} may not hold in the presence of shunts modelled as \eqref{eq:shunt_admittance}, and therefore this property depends on the exact network configuration.}

\lhe{Consequently, all the parameters $\bar{\sigma}(s)$, $\alpha(s)$, $\omega_r$, and $m(s)$ require}
centralised knowledge of the properties of $Y_N(s)$. As these parameters may change any time the network is modified, they are only suitable for fixed network topologies and therefore cannot be used for plug-and-play functionality.
\lhf{Nevertheless, such conditions may be useful in cases where the topology does not change often.}

\lhe{The conditions in \cref{cor:fixed_net} correspond to a specific choice of the multiplier blocks in \eqref{eq:iqc_decentralised_buses}, with i)--iii) employing blocks proportional to $I_2$ or $J$. \Cref{lemma:iqc_decentralised} instead permits arbitrary $2\times 2$ blocks. Conic combinations of the conditions above are the simplest example of this additional freedom.}

\begin{corollary} \label{cor:conic}
    \lhc{\lhe{Consider the grid system $[Z_B(s),Y_N(s)]$ with fixed network admittance $Y_N(s)$ defined in \eqref{eq:network_admittance_blocks}.}
    Assume, for a particular $s \in \Gamma_{N}^{\, \epsilon}$, there exist scalars $c_k(s) \geq 0$, $k \in \{1,\ldots,5\}$, such that \eqref{eq:iqc_decentralised_network} holds at $\tau = 1$ with 
    \begin{subequations} \label{eq:conic_condition}
        \begin{align}
            \Pi_{11}^i(s) &= [-c_2(s) \gamma_i(s) - c_3(s) \bar{\sigma}(s)^2 ] I_2, \\
            \Pi_{12}^i(s) &= [c_1(s) + c_4(s) e^{-j\alpha(s)}]I_2 - c_5(s) J, \\
            \Pi_{22}^i(s) &= [c_2(s)\gamma_i(s)^{-1} + c_3(s)] I_2,
        \end{align}
    \end{subequations}
    where $\gamma_i(s)$, $\bar{\sigma}(s)$ and $\alpha(s)$ are as defined in \cref{cor:inf_norm,cor:fixed_net}.
    Then, condition \eqref{eq:loci_condition} holds if inequality \eqref{eq:iqc_decentralised_buses} is satisfied at every $\nu_i \in \mathcal{V}_B$.}
\end{corollary}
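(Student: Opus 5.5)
The plan is to reduce the statement to \cref{lemma:iqc_decentralised} combined with \cref{cor:iqc_convex}. The multiplier defined by \eqref{eq:conic_condition} is already block-diagonal with $2\times 2$ blocks indexed by $\nu_i \in \mathcal{V}_B$, so it has the form \eqref{eq:multiplier_decentralised}. Moreover, \eqref{eq:iqc_decentralised_buses} is assumed at every bus. Hence \eqref{eq:loci_condition} follows from \cref{lemma:iqc_decentralised} once the network inequality \eqref{eq:iqc_decentralised_network} is known for every $\tau \in [0,1]$, not only at $\tau = 1$.

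To obtain all $\tau$, I would use \cref{cor:iqc_convex} and check its two sign hypotheses on the aggregated blocks $\oplus_i \Pi_{11}^i(s)$ and $\oplus_i \Pi_{22}^i(s)$.

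First, $\Pi_{11}^i(s) = -[c_2(s)\gamma_i(s) + c_3(s)\bar{\sigma}(s)^2] I_2$. Here $c_2, c_3 \geq 0$, $\bar{\sigma}(s)^2 \geq 0$, and $\gamma_i(s)$ is a sum of induced $\infty$-norms, so $\gamma_i(s) \geq 0$. Therefore each block is negative semidefinite, and so is the direct sum.

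Second, $\Pi_{22}^i(s) = [c_2(s)\gamma_i(s)^{-1} + c_3(s)] I_2 \geq 0$. This needs $\gamma_i(s) > 0$ whenever $c_2(s) > 0$, so that $\gamma_i(s)^{-1}$ is defined. \cref{cor:inf_norm} already uses $\gamma_i(s)^{-1}$, which implicitly assumes this. It also holds in substance: by \cref{assump:connected} every bus has at least one incident line, and $Y_{ik}(s)$ in \eqref{eq:line_admittance} is an inverse matrix, so it is nonzero. When $c_2(s) = 0$, I would read the term as $0$.

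With both hypotheses in place, \cref{cor:iqc_convex} gives \eqref{eq:iqc_decentralised_network} for all $\tau\in[0,1]$. The $\Pi_{12}$ block plays no role in this convexity argument: the expression is affine in it and it is simply carried along. Applying \cref{lemma:iqc_decentralised} then yields \eqref{eq:loci_condition} at the given $s$.

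The only delicate point is the positivity and well-definedness of $\gamma_i(s)$ in $\Pi_{22}^i$, and I would settle it with the connectivity argument above. I would also remark why it suffices to assume \eqref{eq:iqc_decentralised_network} at $\tau=1$ directly, rather than deriving it from the individual conditions in \cref{cor:inf_norm,cor:fixed_net}. Each of those conditions gives a network inequality with its own multiplier, and the left-hand side of \eqref{eq:iqc_decentralised_network} is linear in $\Pi$. So a nonnegative combination of valid network inequalities is again valid. However, the small-phase and rotated conditions hold only under extra hypotheses, such as positive-realness or \eqref{eq:rotated_pos_real}, which is why the statement assumes the combined inequality outright.
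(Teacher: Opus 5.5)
Your proposal is correct and matches the paper's proof: since $c_k(s)\ge 0$, the blocks satisfy $\oplus_i\Pi_{11}^i(s)\le 0$ and $\oplus_i\Pi_{22}^i(s)\ge 0$, so \cref{cor:iqc_convex} extends \eqref{eq:iqc_decentralised_network} from $\tau=1$ to all $\tau\in[0,1]$, and \cref{lemma:iqc_decentralised} then gives \eqref{eq:loci_condition}. Your check that $\gamma_i(s)>0$, via \cref{assump:connected} and the nonzero line admittances, makes explicit a well-definedness point the paper leaves implicit.
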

\begin{proof}
    \lhc{As $c_k(s) \!\geq\! 0$, \cref{cor:iqc_convex,lemma:iqc_decentralised} apply.}
\end{proof}

\lhe{Both \eqref{eq:iqc_decentralised_network} (at $\tau=1$) and
\eqref{eq:iqc_decentralised_buses} \lhf{using \eqref{eq:conic_condition}} are linear in the scalars $c_k(s)$, so suitable values can be found at each $s \in \Gamma_N^{\,\epsilon}$ by solving a linear matrix inequality \cite{boyd_LinearMatrix_94}. Further flexibility can also be obtained by allowing the scalars to differ at each $\nu_i \in \mathcal{V}_B$ (provided such a multiplier satisfies \eqref{eq:iqc_decentralised_network}).}

\lhe{Using such a conic combination greatly reduces conservatism compared to using each of the multipliers on their own, as shown in \cref{sec:case_study} where \eqref{eq:conic_condition} is used to certify stability in frequency ranges where every individual condition fails.
Note that \eqref{eq:iqc_decentralised_network} may hold with the blocks \eqref{eq:conic_condition} even when the individual condition does not --- for example, it may hold with $c_5(s) > 0$
beyond $\omega_r$ as the contributions of the remaining terms
absorb the indefinite rotated term --- further reducing conservatism.}

\begin{remark}
    Similarly to \cref{cor:pos_real_nr,cor:ext_stability_nr}, the small-phase condition has the graphical interpretation that $\mathcal{W}(e^{j\alpha(s)}Y_N(s)) \subset \bar{\mathbb{C}}_+$ and $\mathcal{W}(e^{-j\alpha(s)}Z_B(s)) \subset \mathbb{C}_+$. This can be interpreted as a MIMO small-phase condition \cite{chen_PhaseTheory_24,huang_GainPhase_24}. 
\end{remark}

\subsection{Combined Stability Criteria} \label{sec:stab_comb}
We now combine the results of the previous sections to define a set of criteria that certify the small-signal stability of the closed-loop grid system $[Z_B(s),Y_N(s)]$. 

\begin{proposition} \label{prop:stability}
    Consider the grid system $[Z_B(s),Y_N(s)]$ described in \cref{sec:model} and illustrated in \cref{fig:tikz_model_iv}, where $Z_B(s)$ and $Y_N(s)$ are given in \eqref{eq:kvl_matrix_form} and \eqref{eq:network_admittance}, respectively. Let \cref{assump:connected,assump:open_loop}.A hold. Then $[Z_B(s),Y_N(s)]$ is stable under \cref{def:stability} if
    there exists an $\bar{\epsilon}>0$ such that for all $\epsilon \in (0,\bar{\epsilon}]$
    at least one of the following conditions is satisfied at each \lhf{$s = j\omega$, $\omega \geq \epsilon$}: 
    \begin{itemize}
        \item \lhf{for every $\tau \in [0,1]$,} there exists a multiplier $\Pi(s) \in \mathbb{C}^{4N_B \times 4N_B}$ of the form \eqref{eq:multiplier_decentralised} \lhe{such that} \eqref{eq:iqc_decentralised_network} holds and \eqref{eq:iqc_decentralised_buses} holds at each bus $\nu_i \in \mathcal{V}_B$.
        \item for each $(\nu_i,\nu_j) \in \mathcal{E}_N$, there exists $\pi_{12}^{ij}(s) \in \mathbb{C}^{2 \times 2}$ and $\pi_{22}^{ij}(s) \in \mathbb{H}^2$ with $\pi_{22}^{ij}(s) \geq 0$ such that \eqref{eq:iqc_ext_condition} is satisfied and, at each bus $\nu_i \in \mathcal{V}_B$, \eqref{eq:iqc_ext_local} holds.
        \item condition \eqref{eq:loci_condition} holds.
    \end{itemize}
\end{proposition}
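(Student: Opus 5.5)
The plan is to reduce everything to \cref{lemma:nyquist_sufficient} with the linear homotopy $L_\tau(s)=\tau L_N(s)$. That lemma requires three things. Its standing hypotheses follow from \cref{assump:open_loop}.A: $L_N(s)\in\mathbf{RH}_\infty\subseteq\mathbf{RH}_{\infty,0}$, well-posedness, and $(I+L_N(s))^{-1}$ having at most one pole at the origin. Well-posedness follows since $L_N$ is proper, so $I+L_N(\infty)$ is generically invertible; I would take well-posedness as part of the model setup, as the paper implicitly does. Properness of $L_N$ is what matters, since $Y_N$ itself may be improper. It then remains to verify \eqref{eq:loci_condition} at every $s\in\Gamma_N^{\,\epsilon}$ for all $\epsilon\in(0,\bar\epsilon]$. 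The loci $\sigma(\tau L_N(s))$ depend continuously on $\tau$ and satisfy $\sigma(L_0)=\{0\}$, so the homotopy requirement of the lemma is met.

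\textbf{Step 1 (pointwise implications).} Fix $s=j\omega$, $\omega\ge\epsilon$. Each of the three bullet conditions implies \eqref{eq:loci_condition} at $s$:
\begin{itemize}
\item The first bullet gives it by \cref{lemma:iqc_decentralised}.
\item The second bullet gives \eqref{eq:loci_condition_extended} by \cref{lemma:iqc_ext}, which is equivalent to \eqref{eq:loci_condition} by \cref{lemma:reform_extended}.
\item The third bullet is \eqref{eq:loci_condition} itself.
\end{itemize}

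\textbf{Step 2 (extension to the rest of the contour).} The proposition only imposes conditions at $s=j\omega$ with $\omega\ge\epsilon$. The remaining pieces of $\Gamma_N^{\,\epsilon}$ are handled as follows.

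\emph{Negative frequencies.} Since $Z_B$ and $Y_N$ are real-rational, $L_N(-j\omega)=\overline{L_N(j\omega)}$. Hence $\sigma(\tau L_N(-j\omega))$ is the complex conjugate of $\sigma(\tau L_N(j\omega))$, and $-1$ is real, so \eqref{eq:loci_condition} transfers from $\omega$ to $-\omega$.

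\emph{Large semicircle $\Gamma_R$.} Since $L_N$ is proper, $L_N(s)\to L_N(\infty)$ as $R\to\infty$ on the compactified plane. The point $s=\infty$ is the limit of $j\omega$ as $\omega\to\infty$, and I would argue that the condition at $j\omega$ for all large $\omega$ yields $-1\notin\sigma(\tau L_N(\infty))$. This is delicate because strict avoidance at every finite $\omega$ need not survive the limit. I would handle it by appealing to well-posedness: $I+\tau L_N(\infty)$ is nonsingular for $\tau\in[0,1]$, which is the condition needed for the closed loop to be well-posed along the homotopy. By continuity this extends to all $s\in\Gamma_R$ for large $R$.

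\emph{Indentation $\Gamma_\epsilon$.} The points of $\Gamma_\epsilon$ are $\epsilon e^{j\varphi}$. Because $L_N\in\mathbf{RH}_\infty$ has no pole at the origin, it is continuous at $0$. The conditions at $\omega=\epsilon$ for every $\epsilon\in(0,\bar\epsilon]$ therefore control a neighbourhood of $0$, except at $s=0$ itself. There $\sigma(\tau L_N(0))\ni-1$ would correspond to the allowed single closed-loop pole at the origin. This case is excluded from the Nyquist count by the indentation together with the hypothesis of \cref{assump:open_loop}.A. I would argue that, for $\epsilon$ small enough, $-1\notin\sigma(\tau L_N(\epsilon e^{j\varphi}))$ follows from the fact that the eigenvalue crossing $-1$, if any, occurs only at $s=0$ exactly. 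This uses analyticity of $\det(I+\tau L_N(s))$: its zeros are isolated, so there exists $\bar\epsilon$ such that no zeros lie in $0<|s|\le\bar\epsilon$ for each $\tau$. Uniformity in $\tau$ then comes from compactness of $[0,1]$.

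\textbf{Step 3.} With \eqref{eq:loci_condition} holding on the whole of $\Gamma_N^{\,\epsilon}$ for all $\epsilon\in(0,\bar\epsilon]$, \cref{lemma:nyquist_sufficient} yields stability in the sense of \cref{def:stability}.

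The main obstacle is Step 2: extending conditions stated only on $\omega\ge\epsilon$ to the indentation and the infinite arc, with $\bar\epsilon$ chosen uniformly in $\tau$. I would first try the isolated-zeros argument applied to the polynomial-in-$\tau$, analytic-in-$s$ function $\det(I+\tau L_N(s))$. If that is awkward, I would instead absorb these arcs into the hypotheses, treating the contour points in the limit sense used in \cref{thm:nyquist}.
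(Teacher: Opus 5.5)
\textbf{Verdict: genuine gap at the indentation $\Gamma_\epsilon$.} Your overall route is the paper's. The three bullets give \eqref{eq:loci_condition} on $\Gamma_{j\omega}^{\epsilon+}$; you correctly route the second bullet through \cref{lemma:reform_extended}, which the paper leaves implicit. Conjugate symmetry then covers $\Gamma_{j\omega}^{\epsilon-}$, properness covers $\Gamma_R$, the origin hypothesis covers $\Gamma_\epsilon$, and \cref{lemma:nyquist_sufficient} concludes. Your indentation argument fails at three points:
\begin{enumerate}[i)]
\item Identifying $-1\in\sigma(\tau L_N(0))$ with the permitted closed-loop pole at the origin is valid only for $\tau=1$. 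For $\tau<1$, \cref{assump:open_loop}.A says nothing.
\item The imaginary-axis conditions do not ``control a neighbourhood of $0$''. They say nothing about, for example, the real point $s=\epsilon$ of $\Gamma_\epsilon$.
\item Isolated zeros plus compactness of $[0,1]$ do not give a uniform $\bar\epsilon$. The zero-free punctured radius of $\det(I+\tau L_N(s))$ around $s=0$ need not be bounded below in $\tau$. For $\tau$ near any $\tau_0$ with $\det(I+\tau_0 L_N(0))=0$, the zeros converge to $s=0$ and can cross every small semicircle.
\end{enumerate}

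This cannot be repaired from the stated hypotheses. Take a single bus with an RL load, so $Y_N=Y_1$, and let $Z_1(s)=-\tfrac{2}{s+1}Y_1(s)^{-1}$, which is proper and stable. Then $L_N(s)=-\tfrac{2}{s+1}I_2$.
\begin{itemize}
\item \Cref{assump:connected,assump:open_loop}.A hold, since $(I+L_N)^{-1}=\tfrac{s+1}{s-1}I_2$ has no pole at $0$.
\item The third bullet holds at every $\omega>0$, and also at $\omega=\infty$, because $-2\tau/(1+j\omega)$ has nonzero imaginary part for $\tau>0$.
\item Yet the closed loop has a pole at $s=1$. Here $\det(I+\tau L_N(s))$ vanishes at $s=2\tau-1$, so \eqref{eq:loci_condition} fails at $s=\epsilon\in\Gamma_\epsilon$ with $\tau=(1+\epsilon)/2$, for every $\epsilon$.
\end{itemize}
So an extra hypothesis at the origin is needed, at least $-1\notin\sigma(\tau L_N(0))$ for $\tau\in[0,1)$. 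The paper's final sentence (``at most one pole, so for $\epsilon$ small enough \ldots\ \eqref{eq:loci_condition} holds there'') makes the same leap and only justifies $\tau=1$.

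Your $\Gamma_R$ step has a milder version of the same problem. Well-posedness makes $I+L_N(\infty)$ invertible, but not $I+\tau L_N(\infty)$ for every $\tau$. For example, $L_N=\tfrac{1-2s}{s+1}I_2$ satisfies the condition at every finite $\omega\ge 0$ but has a closed-loop pole at $s=2$. The condition at $s=\infty$ must therefore be read into the hypothesis, as the paper implicitly does and as your fallback proposes.
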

\begin{proof}
    \lhf{Each condition guarantees \eqref{eq:loci_condition} on $\Gamma_{j\omega}^{\epsilon+}$ by \cref{cor:iqc_convex,lemma:iqc_decentralised,lemma:iqc_ext}. Since $L_N(s)$ is real-rational, $\sigma(\tau L_N(-j\omega)) = \sigma(\tau L_N(j\omega))^\ast$ and \eqref{eq:loci_condition} also holds on $\Gamma_{j\omega}^{\epsilon-}$. On the outer arc $\Gamma_R$, properness of $L_N(s)$ gives $\sigma(\tau L_N(s)) \to \sigma(\tau L_N(\infty))$ as $R \to \infty$, which coincides with the limit of the loci along $\Gamma_{j\omega}^{\epsilon+}$ as $\omega \to \infty$, so \eqref{eq:loci_condition} extends to $\Gamma_R$. Finally, \lhh{at $s=0$,} $(I+L_N(s))^{-1}$ has at most \lhg{one} pole, so for $\epsilon$ small enough the characteristic loci avoid $-1$ on $\Gamma_\epsilon$ and \eqref{eq:loci_condition} holds there. Hence stability follows from \cref{lemma:nyquist_sufficient}.}
\end{proof}

The result follows the general form of \cref{lemma:iqc_decentralised,lemma:iqc_ext} \lhf{applied over the frequency domain}. \lhe{In practice, it is}
\cref{cor:pos_real,cor:inf_norm,cor:ext_stability_simp,cor:ext_stability,cor:fixed_net,cor:conic} that provide
\lhe{implementable conditions and form the basis of grid codes.}
Where decentralised conditions are infeasible,
\lhe{\cref{prop:stability} includes the centralised condition \eqref{eq:loci_condition} to cover problematic regions of the Nyquist contour.}

\Cref{prop:stability} requires \cref{assump:open_loop}.A to hold, which \lhe{in practice requires} each $Z_i(s)$ to be stable. As we will show in Part II of this paper, this assumption may not hold in many common network configurations. As discussed in \cref{sec:model_closed_loop}, if each $\widehat{Y}_i(s) = Z_i(s)^{-1}$ is stable, we can instead apply the Nyquist criterion to the open-loop transfer function $\widehat{L}_N(s) = \widehat{Y}_B(s)\widehat{Z}_N(s)$, with $\widehat{Y}_B(s)$ and $\widehat{Z}_N(s)$ defined in \eqref{eq:inverse_systems}. We now provide an alternative stability result for the \lhe{admittance representation}. 

\begin{proposition} \label{prop:stability_inv}
    Consider the grid system $[\widehat{Y}_B(s),\widehat{Z}_N(s)]$ described in \cref{sec:model}, where $\widehat{Y}_B(s)$ and $\widehat{Z}_N(s)$ are given in \eqref{eq:inverse_systems}. Let \cref{assump:connected,assump:open_loop}.B hold. Then $[\widehat{Y}_B(s),\widehat{Z}_N(s)]$ is stable under \cref{def:stability} if
    there exists an $\bar{\epsilon}>0$ such that for all $\epsilon \in (0,\bar{\epsilon}]$
    at least one of the following conditions is satisfied at each \lhf{$s = j\omega$, $\omega \geq \epsilon$}: 
    \begin{itemize}
        \item \lhf{for every $\tau \in [0,1]$,} there exists a multiplier $\Pi(s) \in \mathbb{C}^{4N_B \times 4N_B}$ of the form \eqref{eq:multiplier_decentralised} such that
        \begin{subequations}
        \begin{equation} \label{eq:iqc_decentralised_network_inv}
            \begin{bmatrix} 
                I_{2N_B} \\ -\tau \widehat{Z}_N(s)
            \end{bmatrix}^\ast \Pi(s) \begin{bmatrix}
                I_{2N_B} \\ -\tau \widehat{Z}_N(s)
            \end{bmatrix} \leq 0,
        \end{equation}
        \begin{equation} \label{eq:iqc_decentralised_buses_inv}
            \begin{bmatrix}
                \widehat{Y}_i(s) \\ I_2
            \end{bmatrix}^\ast \begin{bmatrix}
                \Pi_{11}^i(s) & \ \Pi_{12}^i(s)^\ast \\
                \Pi_{12}^i(s) &  \Pi_{22}^i(s)
            \end{bmatrix} \begin{bmatrix}
                \widehat{Y}_i(s) \\ I_2
            \end{bmatrix} > 0,
        \end{equation}
        \end{subequations}
        \lhh{where \eqref{eq:iqc_decentralised_buses_inv} holds at every $\nu_i \in \mathcal{V}_B$},
        and $\widehat{Y}_i(s) = Z_i(s)^{-1}$.
        
        \item for $\widehat{L}_N(s) = \widehat{Y}_B(s)\widehat{Z}_N(s)$, we have
        \begin{equation} \label{eq:loci_condition_inv}
            -1 \notin \sigma(\tau \widehat{L}_N(s)), \quad \forall \tau \in [0,1].
        \end{equation}
    
    \end{itemize}
\end{proposition}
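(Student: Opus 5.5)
The argument mirrors the proof of \cref{prop:stability}, applied to the admittance representation $[\widehat{Y}_B(s),\widehat{Z}_N(s)]$ with open loop $\widehat{L}_N(s)=\widehat{Y}_B(s)\widehat{Z}_N(s)$. Under \cref{assump:open_loop}.B, $\widehat{L}_N(s)\in\mathbf{RH}_{\infty}^{2N_B\times 2N_B}\subseteq\mathbf{RH}_{\infty,0}^{2N_B\times 2N_B}$. Moreover, $(I+\widehat{L}_N(s))^{-1}$ has at most one pole at the origin. Well-posedness follows from properness of $\widehat{L}_N$ together with $I+\widehat{L}_N(\infty)$ being invertible, which we take to be implied by the loci condition at $\omega\to\infty$. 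Hence the hypotheses of \cref{lemma:nyquist_sufficient} hold with the linear homotopy $\widehat{L}_\tau=\tau\widehat{L}_N$. It therefore suffices to show that \eqref{eq:loci_condition_inv} holds on the whole indented contour $\Gamma_N^{\,\epsilon}$ for all small $\epsilon$.

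\textbf{Positive imaginary axis.} Take $s=j\omega$ with $\omega\geq\epsilon$. If the second bullet holds, \eqref{eq:loci_condition_inv} holds directly. If the first bullet holds, we apply \cref{lemma:iqc} with $P=\widehat{Y}_B$, $K=\widehat{Z}_N$ and the block-diagonal multiplier \eqref{eq:multiplier_decentralised}. In that case \eqref{eq:iqc_conditions_btm} is exactly \eqref{eq:iqc_decentralised_network_inv}. Since $\widehat{Y}_B=\oplus_i\widehat{Y}_i$ and the multiplier is block diagonal, \eqref{eq:iqc_conditions_top} decomposes into the per-bus inequalities \eqref{eq:iqc_decentralised_buses_inv}; a block-diagonal Hermitian matrix is positive definite if and only if every block is. This is the admittance-domain analogue of \cref{lemma:iqc_decentralised}, and it gives $-1\notin\sigma(\tau\widehat{L}_N(j\omega))$ for every $\tau\in[0,1]$. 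Note that $\widehat{Z}_N(j\omega)$ exists for every such $\omega$, because $Y_N$ is nonsingular by \cref{assump:open_loop}.B; otherwise $\widehat{L}_N$ would have a pole on the axis.

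\textbf{Rest of the contour.} On $\Gamma_{j\omega}^{\epsilon-}$, real-rationality gives $\sigma(\tau\widehat{L}_N(-j\omega))=\overline{\sigma(\tau\widehat{L}_N(j\omega))}$, and $-1$ is real, so the condition transfers by conjugation. On $\Gamma_R$, properness gives $\sigma(\tau\widehat{L}_N(Re^{j\varphi}))\to\sigma(\tau\widehat{L}_N(\infty))$ uniformly in $\varphi$ and $\tau$. This limit coincides with the limit along the positive imaginary axis, where $-1$ is excluded.

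\textbf{Main obstacle.} This step needs a uniform gap: $\det(I+\tau\widehat{L}_N(j\omega))$ must stay bounded away from zero as $\omega\to\infty$. A pointwise condition for each finite $\omega$ does not automatically guarantee this. I would argue it by continuity: $\widehat{L}_N$ is rational and proper, so $\widehat{L}_N(j\omega)$ extends continuously to $\omega=\infty$ on the compactified axis, and the pointwise hypothesis is read as including the endpoint $\omega=\infty$. The same convention is used in the proof of \cref{prop:stability}, and I would state it explicitly here.

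\textbf{Indentation.} On $\Gamma_\epsilon$, we use that $(I+\widehat{L}_N(s))^{-1}$ has at most one pole at $s=0$. Since $\widehat{L}_N$ is analytic near $0$, $\det(I+\tau\widehat{L}_N(s))$ is a polynomial in $\tau$ with coefficients analytic in $s$. Its zeros are therefore isolated from the small semicircle for $\epsilon$ sufficiently small, provided none lies on the semicircle in the limit. Combining the four pieces, \eqref{eq:nyquist_condition} holds for all $\epsilon\in(0,\bar\epsilon]$, and \cref{lemma:nyquist_sufficient} yields stability under \cref{def:stability}.
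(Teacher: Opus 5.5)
Your route is the paper's own. The paper proves this statement by analogy with \cref{prop:stability}: \cref{lemma:iqc} with the block-diagonal multiplier on $\Gamma_{j\omega}^{\epsilon+}$, conjugation on $\Gamma_{j\omega}^{\epsilon-}$, properness on $\Gamma_R$, and the pole hypothesis at the origin on $\Gamma_\epsilon$. Your first three steps are fine. Your convention that the hypothesis also covers $\omega=\infty$ is needed for both well-posedness and $\Gamma_R$, and it makes precise what the paper leaves implicit. The indentation step, however, is a genuine gap, and the paper's sentence ``for $\epsilon$ small enough the characteristic loci avoid $-1$ on $\Gamma_\epsilon$'' has the same defect. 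That $(I+\widehat{L}_N(s))^{-1}$ has at most one pole at $s=0$ constrains only $\tau=1$, whereas \eqref{eq:nyquist_condition} is needed for every $\tau\in[0,1]$. Suppose, for instance, that $\widehat{L}_N(0)$ has a simple eigenvalue $\lambda_0<-1$. Since $\widehat{L}_N$ is real-rational, the real matrix $\widehat{L}_N(\epsilon)$ (note $s=\epsilon\in\Gamma_\epsilon$) has a real eigenvalue near $\lambda_0$. Hence $-1\in\sigma(\tau\widehat{L}_N(\epsilon))$ for some $\tau\in(0,1)$, for every small $\epsilon$. Your proviso ``none lies on the semicircle in the limit'' is exactly this, and nothing in the hypotheses excludes it. The failure is not confined to the proof. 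Take $N_B=1$ with a resistive load, $Y_N(s)=R^{-1}I_2$, and a device with $\widehat{Y}_1(s)=-\frac{2}{R(s+1)}I_2$. Then $\widehat{L}_N(s)=-\frac{2}{s+1}I_2\in\mathbf{RH}_\infty^{2\times 2}$, and $(I+\widehat{L}_N(s))^{-1}=\frac{s+1}{s-1}I_2$ has no pole at the origin, so \cref{assump:open_loop}.B holds. Since $-2\tau/(1+j\omega)$ is never real for $\omega\neq 0$ and $\widehat{L}_N(\infty)=0$, the second bullet holds at every $\omega>0$ and at $\omega=\infty$. Yet the closed loop has poles at $s=1$; consistently, $\tau\widehat{L}_N(\epsilon)=-I_2$ at $\tau=(1+\epsilon)/2$. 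Realising the same loop gain as $Z_1(s)=-\frac{2R}{s+1}I_2$ defeats \cref{prop:stability} in the same way.

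What is missing is a hypothesis at $\omega=0$ for $\tau<1$. One sufficient repair is to assume $-1\notin\sigma(\tau\widehat{L}_N(0))$ for all $\tau\in[0,1)$, i.e.\ $\sigma(\widehat{L}_N(0))\cap(-\infty,-1)=\varnothing$. Then replace your local perturbation argument by a global zero count. The function $\det(I+\tau\widehat{L}_N(s))$ is analytic on $\bar{\mathbb{C}}_+$ and at infinity. For $\tau\in[0,1)$ it has no zeros on $j\mathbb{R}\cup\{\infty\}$: the original hypothesis covers $\omega\neq 0$, your convention covers $\omega=\infty$, and the new assumption covers $\omega=0$. By the argument principle, its number of zeros in $\mathbb{C}_+$ is therefore constant on $[0,1)$, equal to its value $0$ at $\tau=0$. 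Hurwitz's theorem then excludes zeros in $\mathbb{C}_+$ at $\tau=1$ as well. Since $\Gamma_\epsilon\setminus\{\pm j\epsilon\}\subset\mathbb{C}_+$, \eqref{eq:nyquist_condition} holds on $\Gamma_\epsilon$ for every $\epsilon>0$, and the count in fact gives $P_{cl}=0$ directly. The pole hypothesis is then used only at $s=0$ itself. Without some such condition at the origin, neither your argument nor the paper's can be completed, because the statement as written is false.
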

\begin{proof}
    \lhe{The proof is analogous to \cref{prop:stability}}.
\end{proof}

Notably, using the \lhe{admittance representation} of \cref{prop:stability_inv} does not include conditions on extended subsystems like those derived in \cref{sec:stab_extended}, reducing the flexibility of the stability framework. This is because the matrix $\widehat{Z}_N(s) = Y_N(s)^{-1}$ is in general non-sparse and introduces non-local coupling, meaning the network reformulation of \cref{lemma:reform_extended} cannot be performed. In Part II of this paper, we show that the bus impedances can often be stabilised through a dynamic loop transformation that translates the system to an alternative representation at low frequencies, meaning the extended stability conditions can still be applied in such cases. 

We now discuss some implications of the proposed stability framework.
\subsubsection{Reduced conservatism}
    \lhe{The conditions of \cref{prop:stability,prop:stability_inv} can be mixed along the Nyquist contour, with each condition applied over the frequency ranges where it is feasible. As demonstrated in \cref{sec:case_study}, this mixing is essential in practice: no single condition holds over the entire frequency range, but their combination certifies stability. Moreover, the full-block multipliers of \cref{lemma:iqc_decentralised} strictly generalise the conditions reported in the literature (\cref{cor:fixed_net,cor:conic}), and the extended-subsystem reformulation of \cref{lemma:iqc_ext} offers further flexibility beyond bus-level conditions.}

\subsubsection{Locality}
    \lhe{Because the conditions of \cref{lemma:iqc_decentralised,lemma:iqc_ext} act on individual buses or extended subsystems,}
    stability can be established from a bottom-up perspective, where each bus certifies its local inequalities and coordinates with the grid operator only where centralised checks are required. \lhe{The verification effort per bus is independent of}
    network size and accommodates heterogeneous device models while minimising the need for centralised computation or global model knowledge.

\subsubsection{Grid code formulation} 
    \Cref{prop:stability,prop:stability_inv} can be used to generate a ``grid code'' --- a set of design rules that must be met prior to interconnection. The grid operator specifies, \textit{a priori}, the frequency ranges over which each condition must be satisfied, and the \lhe{device} operator
    \lhe{verifies compliance using only the local dynamics, and remains}
    free to design and modify local control schemes provided
    \lhe{the stipulated conditions are not violated}.

\subsubsection{Plug-and-play operation}
    \lhe{\lhf{The conditions presented as corollaries over \crefrange{sec:stab_decentralised}{sec:stab_fixed}} differ in how network modifications affect them.}
    The decentralised conditions of \cref{sec:stab_decentralised,sec:stab_extended} enable plug-and-play operation: when a new bus is added, it must certify that it meets the grid code
    \lhe{through} local subsystem analysis. \lhe{The addition of a new line or load triggers re-verification, via \cref{cor:inf_norm,cor:ext_stability_simp,cor:ext_stability}, only at the buses it connects, and the multiplier parameters of \cref{cor:ext_stability} can be set locally per edge or globally as part of the grid code.} 
    By contrast, the conditions of \cref{cor:fixed_net,cor:conic} \lhe{depend on global network parameters,}
    \lhe{so any modification} triggers
    \lhe{re-verification across the entire} network.
    They are therefore not suited to plug-and-play use. 

\subsubsection{Practical implementation}
    \lhe{To design grid codes using}
    \cref{lemma:iqc_decentralised,lemma:iqc_ext}, the graphical interpretation provided by \lhf{\cref{cor:dw_shell,cor:pos_real_nr,cor:ext_stability_nr}} can be used to \lhe{identify feasible}
    frequency ranges and multiplier parameters.
    Where these conditions are infeasible, flexibility is
    \lhe{first available through the edge-wise} parameters $\eta_{ij}(s)$ and $\gamma_{ij}(s)$ of \cref{cor:ext_stability},
    \lhe{and then through} 
    \lhf{the scalars $c_k(s)$ of \cref{cor:conic}.} 
    \lhe{Since} 
    \lhf{the relevant constraints} are linear matrix inequalities (LMIs), 
    \lhe{suitable parameters can be computed at discrete frequency values via convex optimisation} \cite{boyd_LinearMatrix_94}. 
    Once the grid code is assigned, each bus validates 
    its conditions over the designated frequency bands, 
    \lhe{either} over a discrete frequency grid or via the generalised KYP lemma \cite{iwasaki_GeneralizedKYP_05}. 

\section{Case Study} \label{sec:case_study}
The proposed stability framework was tested on a modified 9-bus system adapted from \cite{anderson_PowerSystem_03}, shown in \cref{fig:9bus_diagram}, in which the three synchronous generators are replaced by grid-forming inverters.
The lines, loads, and shunts were modelled as admittances $Y_{ij}(s)$ using \eqref{eq:line_admittance} \lhe{and \eqref{eq:shunt_admittance}}, with \lhf{non-homogeneous} parameters from \cite{anderson_PowerSystem_03}. An additional line ($R = \SI{0.02}{\pu}$, $X = \SI{0.1}{\pu}$) connects the network to an infinite bus through a switch. \lhe{Kron reduction was then used to form $Y_N(s)$}. 

\begin{figure}[t]
\centering
\input{Figures/9bus_diagram}
\caption{Modified 9-bus system adapted from \cite{anderson_PowerSystem_03}. The three synchronous generators have been replaced by grid-forming inverters and a switchable connection to a stiff infinite bus has been added.}
\label{fig:9bus_diagram}
\end{figure}

The inverters were modelled as shown in \cref{fig:inv_block_diagram} \lhf{(including dynamic, $\delta_i$-dependent reference-frame transformation \eqref{eq:dq_transform})}, with parameters given in Appendix~\ref{app:case_study_parameters}, and the system was linearised about its equilibrium point in the common $DQ$ reference frame. Although the closed-loop system was stable \lhe{(verified by centralised eigenvalue analysis)}, the inverter impedances $Z_i(s)$ at $\nu_1$ and $\nu_2$ each contained one unstable pole (as will be shown in Part II, this is because these buses have a positive reactive power output), while all the bus admittances $\widehat{Y}_i(s) = Z_i(s)^{-1}$ were stable. 

\lhf{Therefore, using the results of Part I, only \cref{prop:stability_inv} can be used to conclude stability via the admittance representation with non-sparse $\widehat{Z}_N(s) = Y_N(s)^{-1}$. This limits our ability to test conditions from \cref{sec:stab_extended} that can be used for plug-and-play operation, and we must rely on fixed-network conditions from \cref{sec:stab_fixed}. This limitation will be addressed in Part II of this paper, which removes the requirement for each $Z_i(s)$ to be stable so that conditions on extended subsystems can be used. Nevertheless, this case study allows us to test when fixed-network properties apply, as such conditions may be useful in scenarios where the grid topology does not change often, such as in microgrids.}

\subsubsection{Infinite Bus Connection}
The switch in \cref{fig:9bus_diagram} \lhf{was first closed}, \lhe{connecting the network to} an infinite bus. \lhe{\lhf{This formed a} stiff voltage source whose voltage defines the angle of the common reference frame (its $Q$-component is fixed to $0$), \lhf{so the closed-loop network did not exhibit a pole at $s=0$}}. 
In addition, the \lhe{active} power of $\nu_2$ was decreased to \SI{1.5}{\pu} \lhf{(compared to the setpoint in \cite{anderson_PowerSystem_03} of \SI{1.63}{\pu})} \lhe{so that}
power could be exchanged with the infinite bus. 

\begin{figure}[t]
    \centering
    \includegraphics[width=0.48\textwidth]{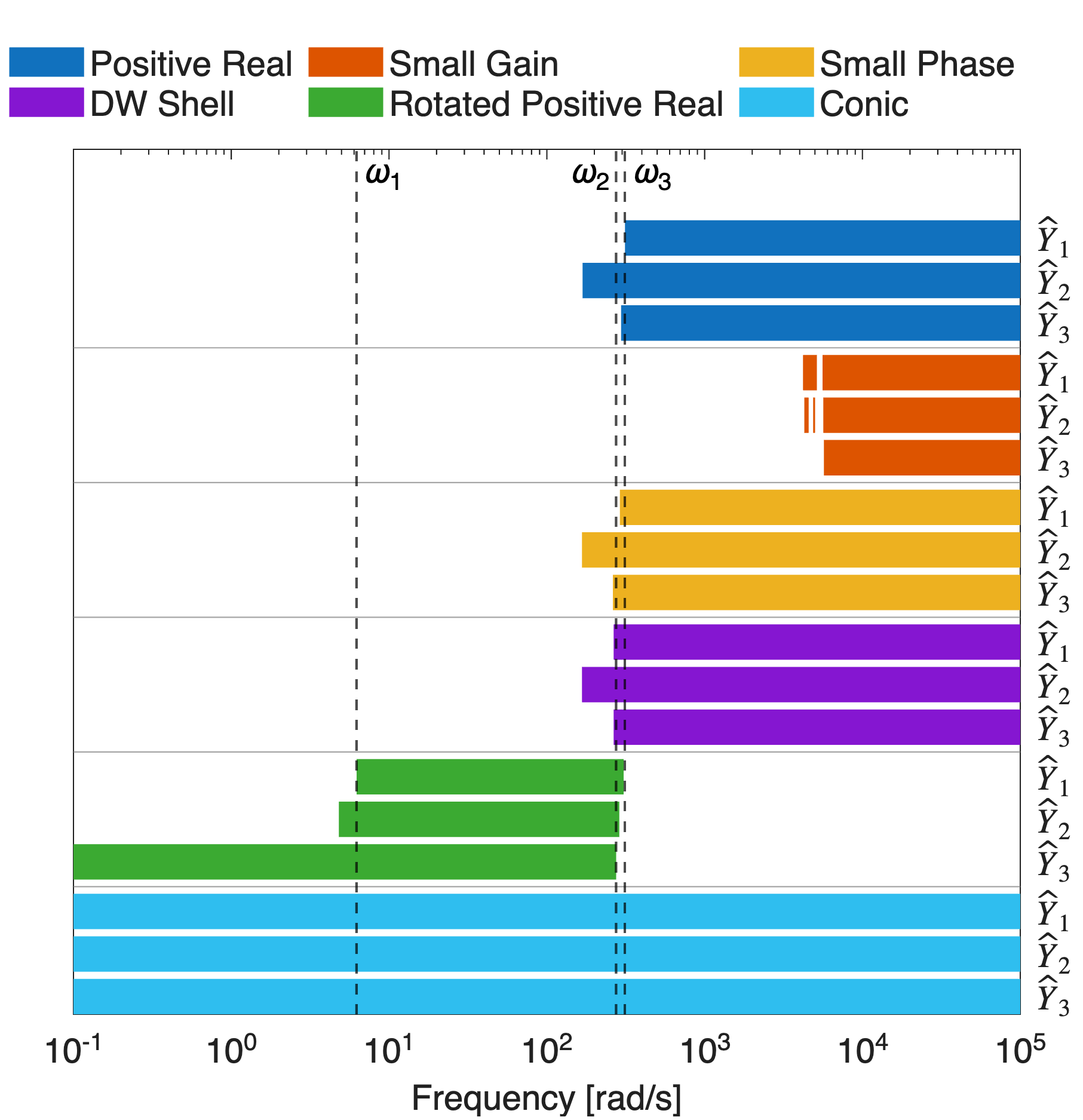}
    \caption{Frequency ranges over which the conditions of \cref{cor:pos_real,cor:dw_shell,cor:fixed_net,cor:conic} were satisfied in the case where the switch in \cref{fig:9bus_diagram} was closed.
    Here, $\omega_1 = \SI{6.2}{\radian\per\second}$, $\omega_2 = \SI{275}{\radian\per\second}$, and $\omega_3 = \SI{312}{\radian\per\second}$ represent boundaries between which conditions were satisfied by all buses.}
    \label{fig:case_study_inf}
\end{figure}

We then determined the frequency bands over which decentralised conditions of the form \eqref{eq:iqc_decentralised_buses_inv} held for each of the inverters: 
the positive-real condition of \cref{cor:pos_real}; the small-gain, small-phase, and rotated positive-real conditions of \cref{cor:fixed_net}; the DW shell separation condition of \cref{cor:dw_shell}; and the conic combination condition of \cref{cor:conic}, \lhe{each
applied to the admittance representation.\footnote{
\lhe{For the admittance representation, the conditions of \cref{cor:pos_real,cor:dw_shell,cor:fixed_net,cor:conic} apply with $Y_N(s)$ replaced by $\widehat{Z}_N(s)$ and $Z_i(s)$ by $\widehat{Y}_i(s)$: $\widehat{Z}_N(s)$ is positive-real whenever $Y_N(s)$ is strictly positive-real, and
\eqref{eq:rotated_pos_real} implies $\widehat{Z}_N(j\omega)^\ast (I_{N_B}\otimes J) - (I_{N_B}\otimes J)\widehat{Z}_N(j\omega)  \geq 0$ for $|\omega|\leq\omega_r$, so the rotated positive-real multiplier becomes $\Pi_{12}^i(s) = J$ \lhg{and the term involving $J$ in \eqref{eq:conic_condition} changes sign}.}
}} 

For this particular network \lhe{(with closed switch)}, \eqref{eq:rotated_pos_real} held below $\omega_r = \omega_0= \SI{377}{\radian\per\second}$. As shown in \cref{fig:case_study_inf}, a condition of the form \eqref{eq:iqc_decentralised_buses_inv} held at all buses across the entire frequency spectrum. 
The positive-real condition was satisfied above $\omega_3 = \SI{312}{\radian\per\second}$, while the rotated positive-real condition was satisfied between $\omega_1 = \SI{6.2}{\radian\per\second}$ and $\omega_2 = \SI{275}{\radian\per\second}$. Between $\omega_2$ and $\omega_3$, the DW shell separation condition was satisfied.
\lhf{Across the entire frequency spectrum, including below $\omega_1$ where no other single condition held}, the conic combination of \cref{cor:conic} was satisfied, demonstrating \lhe{the reduced conservatism offered by} the full-block \lhe{multiplier} structure. \lhf{This is because conic combinations allow us to use a weighted combination of constraints inspired by graphical properties that work well at high frequencies (such as the positive-real or small-gain conditions) and other conditions that work well at low frequencies (like the rotated positive-real property).} 
\lhg{In particular, the condition of \cref{cor:conic} held with $c_3(j\omega)=0.8$, $c_5(j\omega) = 0.2$, and all other $c_k(j\omega) = 0$ in \eqref{eq:conic_condition} for $\omega < \omega_1$}.

Therefore, conditions of the form \eqref{eq:iqc_decentralised_buses_inv} held at all buses over the entire frequency spectrum, and stability was concluded from \cref{prop:stability_inv} \lhe{using decentralised checks alone}. 

\subsubsection{No Infinite Bus Connection}
We then \lhe{repeated the tests with the switch open} \lhf{and with the active power of $\nu_2$ restored to \SI{1.63}{\pu}}
\lhf{Without the infinite bus, there was no absolute angle reference when defining $\omega_0$ in \eqref{eq:delta_dot}, so the closed-loop system had a pole at $s=0$ (see \cref{sec:model_closed_loop}).}

The results are shown in \cref{fig:case_study_no_inf}. \lhf{In this case,} a condition of the form \eqref{eq:iqc_decentralised_buses_inv} \lhf{only} held at all buses above $\omega_1 = \SI{8.4}{\radian\per\second}$. 
In particular, the rotated positive-real condition held between $\omega_1$ and $\omega_2 =\SI{92}{\radian\per\second}$ (with the switch open, \eqref{eq:rotated_pos_real} only held below $\omega_r = \omega_2$), while the positive-real condition held above $\omega_3 = \SI{315}{\radian\per\second}$. Only the conic combination \eqref{eq:conic_condition} held at all frequencies between $\omega_2$ and $\omega_3$, \lhg{where a combination of the positive-real and rotated positive-real conditions was used (recall that $c_5(s) > 0$ can remain feasible beyond $\omega_r$)}.

\begin{figure}[t]
    \centering
    \includegraphics[width=0.48\textwidth]{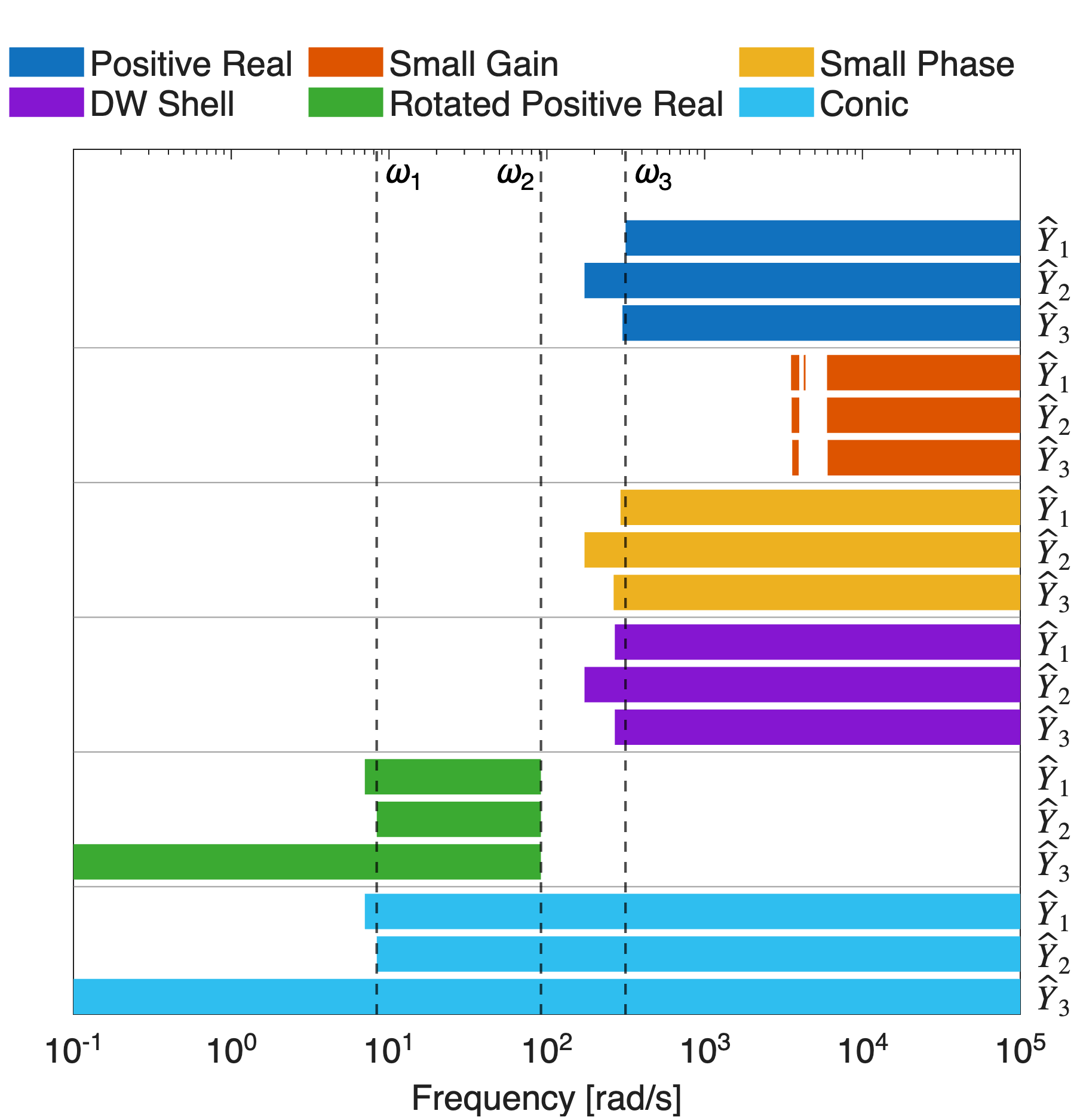}
    \caption{\lhe{As \cref{fig:case_study_inf}, with the switch in \cref{fig:9bus_diagram} open}. Here, $\omega_1 = \SI{8.4}{\radian\per\second}$, $\omega_2 = \SI{92}{\radian\per\second}$, and $\omega_3 = \SI{315}{\radian\per\second}$.}
    \label{fig:case_study_no_inf}
\end{figure}

Below $\omega_1$, centralised analysis was required. It was verified directly that \eqref{eq:loci_condition_inv} held in this range, so stability \lhe{follows from} \cref{prop:stability_inv}. 
\lhf{We believe the infeasibility of decentralised conditions at low frequencies is a fundamental feature of models lacking a stiff voltage source.}
\lhe{\lhf{This is because the} characteristic loci of $\widehat{L}_N(s)$ approach $-1$ at low frequencies due to the closed-loop pole at the origin,}
\lhf{making decentralised constraints based on strict quadratic separation of $\widehat{Y}_B(s)$ and $\widehat{Z}_N(s)$ difficult to satisfy.} 

\lhf{To the best of our knowledge, this remains an issue that has not been resolved in the literature, as other validation studies include an infinite bus or simplify the bus dynamics at low frequencies using reference-frame transformations \eqref{eq:dq_transform} with steady-state (non-dynamic) $\delta_i$ (see for example \cite{huang_GainPhase_24,huang_GeometricDecentralized_25,leng_GeometricDecentralized_26}). In Part II, we directly compute the centralised low-frequency Nyquist plot and show that, under some standard assumptions, encirclements typically do not occur in this frequency range, making the infeasibility of low-frequency decentralised stability checks non-problematic.}

\section{Conclusion} \label{sec:conclusion}
Part I of this paper presented a quadratic-constraint framework for decentralised small-signal stability studies in heterogeneous AC grids that generalises many previously reported conditions and offers increased flexibility. The conditions apply to either an impedance or admittance representation of the buses in the frequency domain and, in the impedance representation, extend to subsystems comprising a bus and all its connecting lines and loads. A modified 9-bus case study demonstrated the reduced conservatism offered by the full-block multipliers and the framework's suitability as a foundation for grid codes \lhf{in fixed networks}.

The case study also exposed two limitations motivating Part II. First, two of the bus impedances were unstable, so the impedance representation and its extended-subsystem formulation could not be used, \lhf{limiting the ability to apply conditions that allow for plug-and-play functionality}. Part II demonstrates analytically how such unstable poles arise and shows that they can be stabilised by a dynamic loop transformation to a \lhh{hybrid impedance/}power representation that better captures power-flow behaviour at low frequencies, restoring the ability to apply extended-subsystem criteria. Second, without a stiff voltage source, no decentralised condition was feasible at low frequencies. Part II shows how the Nyquist analysis in this range simplifies under the power representation, and argues that this limitation is typically not restrictive in practice.

\appendix
\subsection{Proof of \cref{lemma:nyquist_sufficient}} \label{app:nyquist_sufficient_proof}
\begin{proof}
    Since $L(s) \in \mathbf{RH}_{\infty,0}^{n\times n}$, we have $P_{ol} = 0$ in \eqref{eq:nyquist_pole_count}. Fix $\epsilon \in (0,\bar{\epsilon}]$. On $\Gamma_{N}^{\,\epsilon}$, \lhf{$\sigma(L_\tau(s))$ for} $\tau \in [0,1]$ defines a homotopy of the loci from $\{0\}$ to $\sigma(L(s))$. By \eqref{eq:nyquist_condition} the point $-1$ is avoided throughout, so the number of encirclements is invariant under the homotopy and equals $0$. \lhe{As this holds for every $\epsilon \in (0,\bar{\epsilon}]$, we have $N_\Gamma = 0$, and \cref{thm:nyquist} gives $P_{cl} = 0$, i.e., $(I+L(s))^{-1}$ has no poles in $\mathbb{C}_+$. Furthermore, evaluating \eqref{eq:nyquist_condition} for $\tau = 1$ on $\Gamma_{j\omega}^{\epsilon\pm}$ for every $\epsilon \in (0,\bar{\epsilon}]$ gives $\det(I+L(j\omega)) \neq 0$ for all $\omega \neq 0$, so there are no closed-loop poles on $j\mathbb{R}\setminus\{0\}$. Together with the assumption that, \lhh{at $s=0$,} $(I+L(s))^{-1}$ has at most \lhg{one} pole, \lhg{this gives stability under \cref{def:stability}}}.
\end{proof}

\subsection{Proof of \cref{lemma:iqc}} \label{app:iqc_proof}
\begin{proof}
    Suppose $-1 \in \sigma(\tau L(s))$ \lhh{at a given} $s \in \Gamma_N^{\, \epsilon}$ and $\tau \in [0,1]$, with $L(s) = P(s)K(s)$. Therefore, the matrix $I_n + \tau P(s)K(s)$ is singular and there exists a $w \in \mathbb{C}^n \setminus \{0\}$ such that $[I_n + \tau P(s)K(s)]w = 0$. Let $z = -\tau K(s)w$, with $z\neq 0$, so that $w - P(s) z = 0$, which gives $w = P(s) z$. 
    Now, pre- and post-multiplying \eqref{eq:iqc_conditions_top} by $z^\ast$ and $z$ and \eqref{eq:iqc_conditions_btm} by $w^\ast$ and $w$ gives 
    \begin{equation*}
        \begin{aligned}
            \begin{bmatrix}
                w \\ z
            \end{bmatrix}^\ast \Pi(s)
            \begin{bmatrix}
                w \\ z
            \end{bmatrix} > 0,   && 
            \begin{bmatrix}
                w \\ z
            \end{bmatrix}^\ast \Pi(s)
            \begin{bmatrix}
                w \\ z
            \end{bmatrix} \leq 0,
        \end{aligned}
    \end{equation*}
    which is a contradiction. Therefore, satisfaction of \eqref{eq:iqc_conditions} at a particular $s \in \Gamma_N^{\, \epsilon}$ and $\tau$ \lhh{gives $-1 \notin \sigma(\tau L(s))$}.  
\end{proof}

\subsection{Proof of \cref{cor:inf_norm}} \label{app:inf_norm_proof}
\begin{proof}
    Let $Y_N^{(i,\bullet)}(s)$ denote the $2\times 2N_B$ block-row of $Y_N(s)$ formed by rows $2i-1$ and $2i$ of $Y_N(s)$ given in \eqref{eq:network_admittance_blocks}. 
    When each $Y_{ij}(s)$ is modelled as in \eqref{eq:line_admittance} \lhe{or \eqref{eq:shunt_admittance}}, both rows of $Y_N^{(i,\bullet)}(s)$ have the same \lhe{(non-zero under \cref{assump:connected})} absolute row sum, hence $\lVert Y_N^{(i,\bullet)}(s) \rVert_\infty = \gamma_i(s)$.
    Define the block-diagonal scaling $\gamma(s) = \oplus_{i \in \mathcal{V}_B} \left( \gamma_i(s) \right) \otimes I_2$. Then each scaled block-row has unit absolute row sum, so $\lVert \gamma(s)^{-1}Y_N(s) \rVert_\infty = 1$. 
    Similarly, $\lVert \gamma(s)^{-1}Y_N(s)^\ast \rVert_\infty = 1$. Now, the induced infinity norm of a matrix gives an upper bound on its spectral radius  $\rho(\cdot)$ \cite{horn_MatrixAnalysis_85}. Consequently, 
    \begin{equation*}
        \begin{split}
            &\rho \left( \gamma(s)^{-1}Y_N(s)^\ast \gamma(s)^{-1}Y_N(s) \right) \\
            &\qquad \leq \lVert\gamma(s)^{-1}Y_N(s)^\ast\gamma(s)^{-1}Y_N(s)\rVert_\infty \\
            &\qquad \leq \lVert\gamma(s)^{-1}Y_N(s)^\ast\rVert_\infty \lVert \gamma(s)^{-1}Y_N(s)\rVert_\infty \\
            &\qquad \leq 1,
        \end{split}
    \end{equation*}
    which in turn implies $Y_N(s)^\ast\gamma(s)^{-1}Y_N(s) \leq \gamma(s)$.
    This can be written in the form \eqref{eq:iqc_decentralised_network} for $\tau = 1$, so the result follows from \cref{cor:iqc_convex,lemma:iqc_decentralised}. 
\end{proof}

\subsection{Proof of \cref{lemma:reform_extended}} \label{app:extended_reform_proof}
\begin{proof}
Consider the open-loop transfer function $L_N(s)$ at $s \in \Gamma_N^{\, \epsilon}$. In what follows, we omit the argument $s$ to simplify the notation. Using \eqref{eq:network_admittance}, we have
\begin{equation*}
    \begin{aligned}
        L_N 
        & = Z_BY_N 
        = Z_B (\mathcal{B}_N Y_P \mathcal{B}_N^T +  Y_0) \\
        & = Z_B 
        \setlength{\arraycolsep}{2pt}
        \begin{bmatrix}
            \mathcal{B}_N & \mathcal{S}_0  
        \end{bmatrix}
        \begin{bmatrix}
            Y_P & \\ & Y_{E0} 
        \end{bmatrix}
        \begin{bmatrix}
            \mathcal{B}_N^T \\ \mathcal{S}_0^T 
        \end{bmatrix} \\ 
        & = Z_B \mathcal{B}_E Y_E \mathcal{B}_E^T,
    \end{aligned}
\end{equation*}
\lhe{where $\mathcal{S}_0 = S_0 \otimes I_2$ and $Y_{E0}=\oplus_{(\nu_i,\nu_0) \in \mathcal{E}_0} Y_i$.}
Therefore, the matrices $L_N$ and $\mathcal{B}_E^T Z_B \mathcal{B}_E Y_E$ share the same non-zero eigenvalues. We now state the following properties of the selection matrix $\mathcal{K}_i$ that can be easily verified: 
\begin{equation*}
    \mathcal{B}_E^{i\bullet} \mathcal{K}_i \mathcal{K}_i^T = \mathcal{B}_E^{i\bullet}, \;
    (\mathcal{K}_i \mathcal{K}_i^T)^2 = \mathcal{K}_i \mathcal{K}_i^T, \;
    \mathcal{K}_i \mathcal{K}_i^T Y_E = Y_E \mathcal{K}_i \mathcal{K}_i^T.
\end{equation*}
This means
\begin{equation*}
    \begin{aligned}
        \mathcal{B}_E^T Z_B \mathcal{B}_E Y_E 
        & = \left[ \textstyle\sum_{\nu_i \in \mathcal{V}_B} (\mathcal{B}_{E}^{i\bullet})^T Z_i \mathcal{B}_E^{i\bullet} \right] Y_E \\
        & = \textstyle\sum_{\nu_i \in \mathcal{V}_B} \mathcal{K}_i \mathcal{K}_i^T (\mathcal{B}_{E}^{i\bullet})^T Z_i \mathcal{B}_E^{i\bullet} \mathcal{K}_i \mathcal{K}_i^T Y_E \\
        & = \textstyle\sum_{\nu_i \in \mathcal{V}_B} \mathcal{K}_i \mathcal{K}_i^T(\mathcal{B}_{E}^{i\bullet})^T Z_i \mathcal{B}_E^{i\bullet} (\mathcal{K}_i \mathcal{K}_i^T)^2 Y_E \\
        & = \textstyle\sum_{\nu_i \in \mathcal{V}_B} \mathcal{K}_i \mathcal{K}_i^T(\mathcal{B}_{E}^{i\bullet})^T Z_i \mathcal{B}_E^{i\bullet} \mathcal{K}_i \mathcal{K}_i^T Y_E \mathcal{K}_i \mathcal{K}_i^T \\
        & = \textstyle\sum_{\nu_i \in \mathcal{V}_B} \mathcal{K}_i E_i \mathcal{K}_i^T \\
        & = \mathcal{K}E \mathcal{K}^T.
    \end{aligned}
\end{equation*}
Finally, this matrix has the same non-zero eigenvalues as $E \mathcal{K}^T\mathcal{K}$. Therefore, \eqref{eq:loci_condition} holds if and only if \eqref{eq:loci_condition_extended} holds at $s \in \Gamma_N^{\, \epsilon}$ for all $\tau \in [0,1]$.
\end{proof}

\subsection{Proof of \cref{lemma:iqc_ext}} \label{app:iqc_ext_proof}
\begin{proof}
Let 
\begin{equation} \label{eq:iqc_ext_deriv0}
    \Pi(s) = \begin{bmatrix}
        0 & \Pi_{12}(s)^\ast \\ \Pi_{12}(s) & \Pi_{22}(s)
    \end{bmatrix},
\end{equation}
where $\Pi_{12}(s) = \oplus_{\nu_i \in \mathcal{V}_B} \Pi_{12}^i(s)$ and $\Pi_{22}(s) = \oplus_{\nu_i \in \mathcal{V}_B} \Pi_{22}^i(s)$. By the same argument as in \cref{lemma:iqc} and \cref{cor:iqc_convex}, \eqref{eq:loci_condition_extended} holds if \eqref{eq:iqc_conditions} is satisfied for $P(s) = E(s)$, $K(s) = \mathcal{K}^T\mathcal{K}$, and $\tau = 1$ with the multiplier $\Pi(s)$ given by \eqref{eq:iqc_ext_deriv0}. Because $E(s)$ is block diagonal, \eqref{eq:iqc_conditions_top} is ensured by \eqref{eq:iqc_ext_local}. Thus it remains to ensure
\begin{equation} \label{eq:iqc_ext_deriv01}
    \begin{bmatrix}
        I \\ - \mathcal{K}^T \mathcal{K}
    \end{bmatrix}^\ast \Pi(s) \begin{bmatrix}
        I \\ - \mathcal{K}^T \mathcal{K}
    \end{bmatrix} \leq 0.
\end{equation}

In what follows, we omit the argument $s$ to simplify the notation. Equation \eqref{eq:iqc_ext_deriv01} simplifies to
\begin{equation} \label{eq:iqc_ext_deriv1}
    -\Pi_{12}^{\ast} \mathcal{K}^T\mathcal{K} - \mathcal{K}^T\mathcal{K} \Pi_{12} +\mathcal{K}^T\mathcal{K} \Pi_{22} \mathcal{K}^T\mathcal{K} \leq 0. 
\end{equation}
Now, for each $\nu_i \in \mathcal{V}_B$, we have $\mathcal{K}_i^T \mathcal{K}_i = I_{2m_i}$ and $\mathcal{K}_i \mathcal{K}_i^T \pi_{12} = \pi_{12} \mathcal{K}_i \mathcal{K}_i^T$, where $\pi_{12}$ is the block-diagonal matrix defined in \cref{lemma:iqc_ext}. Therefore,
\begin{equation*} 
    \begin{split}
            \mathcal{K} \Pi_{12} 
            & = \mathcal{K} \left( \oplus_{\nu_i \in \mathcal{V_B}} \Pi_{12}^i \right)\\
            & = \begin{bmatrix}
                \mathcal{K}_1^T \\ \vdots \\ \mathcal{K}_{N_B}^T 
            \end{bmatrix}^T \begin{bmatrix}
                \mathcal{K}_1^T \pi_{12} \mathcal{K}_1 & & \\ & \ddots & \\ &&\mathcal{K}_{N_B}^T \pi_{12} \mathcal{K}_{N_B}
            \end{bmatrix} \\
            & = \begin{bmatrix}
                \mathcal{K}_1\mathcal{K}_1^T \pi_{12} \mathcal{K}_1 & \cdots & \mathcal{K}_{N_B}\mathcal{K}_{N_B}^T \pi_{12} \mathcal{K}_{N_B}
            \end{bmatrix} \\
            & =\pi_{12} \begin{bmatrix}
                \mathcal{K}_1 \mathcal{K}_1^T \mathcal{K}_1 & \cdots & \mathcal{K}_{N_B}\mathcal{K}_{N_B}^T\mathcal{K}_{N_B}
            \end{bmatrix} \\ 
            & = \pi_{12} \begin{bmatrix}
                \mathcal{K}_1 & \cdots & \mathcal{K}_{N_B}
            \end{bmatrix} \\ 
            & = \oplus_{(\nu_i,\nu_j) \in \mathcal{E}_N}\pi_{12}^{ij} \mathcal{K}.
    \end{split}
\end{equation*}
Similarly, $\Pi_{12}^\ast \mathcal{K}^T = \mathcal{K}^T \pi_{12}^\ast$ and $\Pi_{22} \mathcal{K}^T = \mathcal{K}^T \pi_{22}$. Using~\eqref{eq:nE_def},
\begin{equation*} 
    \begin{split}
        \mathcal{K}^T\mathcal{K} \Pi_{22} \mathcal{K}^T\mathcal{K} 
        & = \mathcal{K}^T \mathcal{K} \mathcal{K}^T \left( \oplus_{(\nu_i,\nu_j)\in\mathcal{E}_N}  \pi_{22}^{ij} \right) \mathcal{K} \\
        & = \mathcal{K}^T \left( \oplus_{(\nu_i,\nu_j)\in\mathcal{E}_N} \ n_{ij}\pi_{22}^{ij} \right) \mathcal{K}.
    \end{split}
\end{equation*}
Substitution into \eqref{eq:iqc_ext_deriv1} yields
 \begin{equation*} 
    \begin{split}
        \mathcal{K}^T \left[  \oplus_{(\nu_i,\nu_j)\in\mathcal{E}_N} \left( -{\pi_{12}^{ij}}^\ast -  \pi_{12}^{ij} + n_{ij} \pi_{22}^{ij} \right) \right] \mathcal{K} \leq 0 .
   \end{split}
\end{equation*}
This holds whenever \eqref{eq:iqc_ext_condition} holds for each $(\nu_i,\nu_j) \in \mathcal{E}_N$, since pre- and post-multiplication by $\mathcal{K}^T$ and $\mathcal{K}$ preserves negative semidefiniteness. Therefore, \eqref{eq:iqc_ext_deriv01} is satisfied, which ensures \eqref{eq:loci_condition_extended} via \cref{cor:iqc_convex}. 
\end{proof}

\subsection{Proof of \cref{cor:ext_stability_nr}} \label{app:ext_stability_nr_proof}
\begin{proof}
In what follows, we omit the argument $s$ to simplify the notation. Assume \eqref{eq:iqc_ext_local} is satisfied with 
\lhf{multiplier blocks \eqref{eq:iqc_ext_terms_simp}. Using \eqref{eq:mult_ext_bus_level} gives}
\begin{align*}  
    & \Pi_{12}^i = \left( 1 + j\bar{\eta}  \right) \sqrt{Y_{E,i}^\ast} \sqrt{Y_{E,i}}, 
    & \Pi_{22}^i  = 2n_i^{-1} \sqrt{Y_{E,i}^\ast} \sqrt{Y_{E,i}}.
\end{align*}
Therefore, \eqref{eq:iqc_ext_local} simplifies to 
\begin{equation*}
    \begin{aligned}
        (1+j\bar{\eta}) \sqrt{Y_{E,i}^\ast} \sqrt{Y_{E,i}} E_i + E_i^\ast & \sqrt{Y_{E,i}^\ast} \sqrt{Y_{E,i}} (1-j \bar{\eta}) \\
        & + 2n_i^{-1} \sqrt{Y_{E,i}^\ast} \sqrt{Y_{E,i}} > 0. 
    \end{aligned}
\end{equation*}
Using \eqref{eq:E_sys_def} and factoring out $\sqrt{n_i}^{-1}\sqrt{Y_{E,i}^\ast}$ on the left and $\sqrt{Y_{E,i}}\sqrt{n_i}^{-1}$ on the right yields
\begin{equation*}
    \begin{aligned}
        (1+j\bar{\eta}) &\sqrt{n_i}\sqrt{Y_{E,i}} Z_{E,i} \sqrt{Y_{E,i}} \sqrt{n_i}  \\
        &+ \sqrt{n_i}\sqrt{Y_{E,i}^\ast}Z_{E,i}^\ast  \sqrt{Y_{E,i}^\ast}\sqrt{n_i} (1-j \bar{\eta}) 
         + 2I_{2m_i} > 0, 
    \end{aligned}
\end{equation*}
or $(1+j\bar{\eta}) C_{E,i} + C_{E,i}^\ast (1-j \bar{\eta}) + 2I > 0$ where $C_{E,i}$ is defined in \eqref{eq:ext_nr_mat}. Take $z\in \mathbb{C}^{2m_i}$ with $\lVert z \rVert = 1$. Let $z^\ast C_{E,i}z = w = a+jb$ so we have $w \in \mathcal{W}(C_{E,i})$. Now, left- and right-multiplying our matrix inequality by $z^\ast$ and $z$ gives
\begin{equation*}
    \begin{split}
       z^\ast(C_{E,i} + C_{E,i}^\ast)z + j \bar{\eta} z^\ast(C_{E,i} - C_{E,i}^\ast)z + 2  > 0  \\
       w + w^\ast + j\bar{\eta}(w - w^\ast) + 2 > 0 \\
       a - \bar{\eta}b + 1 > 0, \\
    \end{split}
\end{equation*}
which implies $w \in \mathcal{L}_{\bar{\eta}}$. This holds for all $\lVert z \rVert = 1$, so $\mathcal{W}(C_{E,i}) \subset \mathcal{L}_{\bar{\eta}}$. For the converse, reverse the argument above, and set $z = \frac{x}{\lVert x \rVert}$ for arbitrary $x \in \mathbb{C}^{2m_i}$, which yields \eqref{eq:iqc_ext_local} with $\Pi_{12}^i$ and $\Pi_{22}^i$ containing blocks \eqref{eq:iqc_ext_terms_simp}.  
\end{proof}

\subsection{Proof of \cref{cor:fixed_net}} \label{app:fixed_net_proof}
\begin{proof}
i) By definition $Y_N(s)^\ast Y_N(s) \leq \bar{\sigma}(s)^2 I$.
ii) Under the positive-real hypothesis, for each $\alpha(s) \in [-\frac{\pi}{2} - \underline{\phi}(s),\frac{\pi}{2} - \bar{\phi}(s)]$, we have $\mathcal{W}(e^{j\alpha(s)}Y_N(s)) \subset \bar{\mathbb{C}}_+$, which, by \cref{lemma:numerical_range_pos_def}, implies $e^{j\alpha(s)}Y_N(s) + Y_N(s)^\ast e^{-j\alpha(s)} \geq 0$. 
iii) For $s \in \{j\omega: |\omega| \leq \omega_r\}$, we have \eqref{eq:rotated_pos_real} by hypothesis. 
iv) Since $(I_{N_B} \otimes m(s))^\ast Y_N(s)$ is positive-real by hypothesis, $(I_{N_B} \otimes m(s))^\ast Y_N(s) + Y_N(s)^\ast(I_{N_B} \otimes m(s)) \geq 0$. 
All the above can be written in the form \eqref{eq:iqc_decentralised_network} at $\tau = 1$ and the result follows from \cref{cor:iqc_convex,lemma:iqc_decentralised}. 
\end{proof}

\subsection{Case Study Parameters} \label{app:case_study_parameters}

\begin{figure}[t]
\centering
\input{Figures/inv_block_diagram}
\caption{Block diagram representing the grid forming inverter dynamics. Parameters are listed in Appendix \ref{app:case_study_parameters}.}
\label{fig:inv_block_diagram}
\end{figure}

The parameters for the three grid-forming inverters in the case study of \cref{sec:case_study} are listed below, where values are given in the order $\{\nu_1, \nu_2, \nu_3\}$.
Filter resistance $R_f = \{0.005, 0.01, 0.008\}$~\si{\pu},
filter inductance $L_f = \{0.05, 0.06, 0.05\}$~\si{\pu},
filter capacitance $C_f = \{0.06, 0.06, 0.06\}$~\si{\pu},
grid-side inductance $L_g = \{0.15, 0.12, 0.1\}$~\si{\pu},
power filter time constant $\tau_f = \{0.008, 0.009, 0.007\}$~\SI{}{s},
active power droop gain $n_p = \{0.01, 0.01, 0.02\}$,
reactive power droop gain $n_q = \{0.02, 0.01, 0.01\}$,
voltage controller proportional gain $k_{vp} = \{2, 1.5, 1.75\}$,
voltage controller integral gain $k_{vi} = \{10, 8, 8\}$,
current controller proportional gain $k_{ip} = \{0.3, 0.5, 0.4\}$,
current controller integral gain $k_{ii} = \{10, 8, 9\}$,
and voltage feedforward filter time constant $\tau_v = \{0.03, 0.02, 0.02\}$~\SI{}{s}.

\section*{References}
\bibliographystyle{IEEEtran}
\bibliography{./Files/references.bib}
\end{document}

%% file: Files/preamble.tex
\usepackage[dvipsnames]{xcolor}
\definecolor{subsectioncolor}{rgb}{0,0,0}
\usepackage{generic}
\def\BibTeX{{\rm B\kern-.05em{\sc i\kern-.025em b}\kern-.08em
    T\kern-.1667em\lower.7ex\hbox{E}\kern-.125emX}}
\usepackage{amsmath,amssymb,amsfonts}
\usepackage{algorithmic}
\usepackage{algorithm,algorithmic}
\usepackage{textcomp}
\usepackage{bbold}
\usepackage{url}
\usepackage{cite}
\allowdisplaybreaks
\usepackage[hidelinks]{hyperref}
\usepackage[noabbrev,capitalise,nameinlink]{cleveref}
\usepackage{nicefrac, xfrac}
\usepackage[draft]{changes}
\usepackage{enumerate}

\usepackage{graphicx}
\graphicspath{ {./Figures/} }
\usepackage{subcaption}

\usepackage{siunitx}
\DeclareSIUnit\year{yr}
\DeclareSIUnit{\pu}{p.u.}
\NewDocumentCommand\SIpi{}{\text{\ensuremath{\pi}}}
\usepackage{aliascnt}
\newtheorem{theorem}{Theorem}

\newaliascnt{proposition}{theorem}
\newtheorem{proposition}[proposition]{Proposition}
\aliascntresetthe{proposition}
\crefname{proposition}{Proposition}{Propositions}

\newaliascnt{corollary}{theorem}
\newtheorem{corollary}[corollary]{Corollary}
\aliascntresetthe{corollary}
\crefname{corollary}{Corollary}{Corollaries}

\newaliascnt{lemma}{theorem}
\newtheorem{lemma}[lemma]{Lemma}
\aliascntresetthe{lemma}
\crefname{lemma}{Lemma}{Lemmas}

\newtheorem{definition}{Definition}
\newtheorem{remark}{Remark}
\newtheorem{assumption}{Assumption}
\crefname{assumption}{Assumption}{Assumptions}

\usepackage[short]{optidef}

\usepackage{tikz}
\usetikzlibrary{positioning}
\usetikzlibrary{matrix}
\usetikzlibrary {arrows.meta}
\usetikzlibrary{shapes}
\usetikzlibrary{calc}
\usetikzlibrary{math}
\usetikzlibrary{decorations.markings}
\usetikzlibrary{angles,quotes}
\usepackage[european,straightvoltages]{circuitikz}
 
\usetikzlibrary{decorations.pathmorphing,arrows.meta}

%% file: Figures/tikz_nyquist_mod.tex
\begin{tikzpicture}[>=Latex, line cap=round,scale=0.4]
\small
\def\R{3.2}      
\def\eps{0.5}   

\tikzset{
  nyqpath/.style={
    very thick,
    postaction={decorate},
    decoration={markings,
      mark=at position 0.13 with {\arrow{>}},
      mark=at position 0.34 with {\arrow{>}},
      mark=at position 0.55 with {\arrow{>}},
      mark=at position 0.9  with {\arrow{>}}
    }
  }
}

\draw[->] (-1.0,0) -- (\R+0.9,0) node[below right] {$\Re$};
\draw[->] (0,-\R-0.9) -- (0,\R+0.9) node[above left] {$\Im$};

\node[above left] at (0,\R) {$j\infty$};
\node[below left] at (0,-\R) {$-j\infty$};

\draw[nyqpath]
  (0,-\R)
    -- (0,-\eps)
    arc[start angle=-90, end angle=90, radius=\eps cm]
    -- (0,\R)
    arc[start angle=90, end angle=-90, radius=\R cm];

\node at (\R*1.2,0.5) {$\Gamma_R$};
\node at (2*\eps,0.5) {$\Gamma_{\epsilon}$};
\node at (-0.75,\R*0.75) {$\Gamma_{j\omega}^+$};
\node at (-0.75,-\R*0.75) {$\Gamma_{j\omega}^-$};


\end{tikzpicture}

%% file: Figures/tikz_model_iv.tex
\begin{tikzpicture}[scale=1]
\small

\node[matrix, draw,
    row sep=1pt, column sep=1pt,
    inner sep=1pt]
    (sys_Z) at (0,0)
    {
        \node{$Z_1(s)$}; & & \\[-8pt]
        & \node{$\ddots$}; & \\
        & & \node{$Z_{N_B}(s)$}; \\
    };
    \node at (sys_Z.south east)[right=3pt,yshift=6pt]{$Z_B(s)$};
\node[matrix, draw=black,
    below=of sys_Z,yshift=+20pt,
    row sep=1pt, column sep=1pt,
    inner sep=1pt]
    (sys_Yp) 
    {
        \node{$Y_{ij}(s)$}; & & \\[-8pt]
        & \node{$\ddots$}; & \\
        & & \node{$Y_{pq}(s)$}; \\
    };
    \node at (sys_Yp.south east)[right=3pt,yshift=1pt]{$Y_P(s)$};
\node[matrix, draw=black,
    below=of sys_Yp,yshift=+20pt,row sep=1pt, column sep=1pt,
    inner sep=1pt]
    (sys_Y0) 
    {
        \node{$Y_1(s)$}; & & \\[-8pt]
        & \node{$\ddots$}; & \\
        & & \node{$Y_{N_B}(s)$}; \\
    };
    \node at (sys_Y0.south east)[right=3pt,yshift=6pt]{$Y_0(s)$};

\node[draw, circle, minimum size=0.75cm,
            left=of sys_Yp,xshift=20pt]
            (B)  
            {$\mathcal{B}_N$};
\node[draw, circle, minimum size=0.75cm,
            right=of sys_Yp,xshift=-20pt]
            (Bt)   {$\mathcal{B}_N^T$};

\node[draw, circle, minimum size=0.05cm,
    left=of B,xshift=15pt]
    (left1) 
    {};
    \node[yshift=5pt,xshift=5pt] at (left1.north east){$+$};
    \node[yshift=-5pt,xshift=5pt] at (left1.south east){$+$};
\node[right= of Bt,xshift=-20pt] (right1){};
\node[right= of right1,xshift=-15pt] (right2){};
\node[draw, circle, minimum size=0.05cm]
    (neg) at (-4,0)
    {};
    \node[yshift=-5pt,xshift=5pt] at (neg.south east){$-$};

\node[draw,black,thin,dotted, anchor=north west,
    minimum width =7cm, minimum height =3.1cm]
    (box) at (-3.6,-0.8) {};
    \node at (box.south east)[right=3pt,yshift=6pt]{$Y_N(s)$};

\draw[-stealth] (neg.east) -- (sys_Z.west)
    node[midway,above]{$-\Delta i_B^{DQ}$};
\draw[] (sys_Z.east) -| (right2.center)
    node[pos=0.25,above]{$\Delta v_B^{DQ}$};
\draw[-stealth] (right2.center) -- (Bt.east);
\draw[-stealth] (right1.center) |- (sys_Y0.east);
\draw[-stealth] (Bt.west) -- (sys_Yp.east);
\draw[-stealth] (sys_Yp.west) -- (B.east);
\draw[-stealth] (B.west) -- (left1.east);
\draw[-stealth] (sys_Y0.west) -| (left1.south);
\draw[-stealth] (left1.west) -| (neg.south);

\end{tikzpicture}

%% file: Figures/tikz_extended_graph.tex
\begin{tikzpicture}[scale=1,
x=2.2cm, y=1.3cm]
    \small
    
    \node[rectangle, draw, thick, minimum size=0.8cm] 
        (bus_1) at (0,0) {$Z_1(s)$};
    \node[rectangle, draw, thick, minimum size=0.8cm] 
        (bus_2) at (1,0) {$Z_2(s)$};
    \node[rectangle, draw, thick, minimum size=0.8cm] 
        (bus_3) at (2,0) {$Z_3(s)$};
    \node[rectangle, draw, thick, minimum size=0.8cm] 
        (bus_4) at (3,0) {$Z_4(s)$};
    \node[ground] at (0,-1) {};
    \node[ground] at (1,-1) {};
    \node[ground] at (2,-1) {};
    \node[ground] at (3,-1) {};
    \draw[stealth-] (bus_1) -- (bus_2)
        node[midway,above]{$Y_{12}(s)$};
    \draw[stealth-]  (bus_2) -- (bus_3)
        node[midway,above]{$Y_{23}(s)$};
    \draw[stealth-]  (bus_3) -- (bus_4)
        node[midway,above]{$Y_{34}(s)$};

    \draw (0,-1) -- (bus_1)
        node[midway,above,sloped]{$Y_1(s)$};
    \draw (1,-1) -- (bus_2)
        node[midway,above,sloped]{$Y_2(s)$};
    \draw (2,-1) -- (bus_3)
        node[midway,above,sloped]{$Y_3(s)$};
    \draw (3,-1) -- (bus_4)
        node[midway,above,sloped]{$Y_4(s)$};

    \node[draw, rectangle, dotted, anchor=north west,
			minimum width=3.1cm, 	minimum height=2cm, label=$E_2(s)$]
			(sigma) at (0.3,0.5) {};

\end{tikzpicture}

%% file: Figures/tikz_extended_block_diag.tex
\small
\[
Z_{E,i}(s) = \begin{bmatrix}
    Z_2(s) & -Z_2(s) & Z_2(s)  \\
    -Z_2(s) & Z_2(s) & -Z_2(s) \\
    Z_2(s) & -Z_2(s) & Z_2(s)\\
\end{bmatrix}
\]
\[
Y_{E,i}(s) = \begin{bmatrix}
    Y_{12}(s) & & \\
    & Y_{23}(s) & \\
    & & Y_2(s) 
\end{bmatrix}
\]

%% file: Figures/9bus_diagram.tex
\small
\begin{tikzpicture}[
    scale=0.3,
    bus/.style={line width=2.2pt},
    trafo/.style={decorate,decoration={zigzag,pre length=2mm,post length=2mm,
                  segment length=1.6mm,amplitude=1mm}},
    gen/.style={circle,draw,minimum size=6mm,inner sep=0pt},
    busnum/.style={circle,draw,inner sep=1.2pt,minimum size=4.5mm},
    load/.style={-{Latex[length=3mm]},line width=1pt},
    lbl/.style={font=\small},
    every node/.style={font=\small}
  ]

  \draw (0,10) -- (11,10);
  \draw[bus] (0,9.0)  -- (0,10.8);   
  \draw[bus] (5.5,9.4) -- (5.5,11.2);
  \draw[bus] (11,9.0) -- (11,10.8);  


  \draw[load] (5.5,10.8) -- ++(2.5,0) node[right,above] {Load C};
  \node[busnum] at (5.5,8.3) {8};

  \node[gen] (INF) at (0.4,12.8) {$\sim$};
  \draw (INF) -- (2.5,12.8);
  \fill (2.5,12.8) circle (0.18);                
  \draw (2.5,12.8) -- ++(35:1.8);                
  \fill (4.3,12.8) circle (0.18);                
  \draw (4.3,12.8) -- (5.5,12.8) -- (5.5,11.2);

  \node[gen] (G2) at (-5.3,10) {2};
  \draw (G2) -- (-4.2,10);
  \draw[trafo] (-4.2,10) -- (0,10);              

  \node[gen] (G3) at (16.3,10) {3};
  \draw (G3) -- (15.2,10);
  \draw[trafo] (11,10) -- (15.2,10);             

  \draw (0,9.2) -| (2.2,2.0);
  \node[busnum] at (0.0,8) {7};
  \draw[bus] (0.5,5.6) -- (2.8,5.6);
  \node[busnum] at (4,5.6) {5};
  \draw[load] (1.0,5.6) -- ++(0,-2) node[below,left] {Load A};

  \draw (11,9.2) -| (8.8,2.0);
  \node[busnum] at (11,8.0) {9};
  \draw[bus] (8.2,5.6) -- (10.5,5.6);
  \node[busnum] at (7,5.6) {6};
  \draw[load] (10.0,5.6) -- ++(0,-2) node[below,right] {Load B};

  \draw[bus] (1.2,2.0) -- (9.8,2.0);             
  \node[busnum] at (11,2.0) {4};
  \draw[trafo] (5.5,2.0) -- (5.5,-0.6);          
  \node[gen] (G1) at (5.5,-1.7) {1};
  \draw (5.5,-0.6) -- (G1);

\end{tikzpicture}

%% file: Figures/inv_block_diagram.tex
\begin{tikzpicture}[
    blk/.style={rectangle,draw,minimum width=0.5cm,minimum height=0.5cm,
                align=center,fill=white,font=\small},
    gain/.style={rectangle,draw,minimum width=0.5cm,minimum height=0.4cm,
                 align=center,fill=white,font=\small},
    sum/.style={circle,draw,inner sep=0pt,minimum size=3.6mm},
    wire/.style={line width=0.7pt},
    sig/.style={-{Latex[length=2.2mm]},line width=0.6pt},
    res/.style={decorate,decoration={zigzag,pre length=1mm,post length=1mm,
                segment length=1.8mm,amplitude=0.9mm}},
    ind/.style={decorate,decoration={bumps,segment length=2.4mm,amplitude=1.1mm}},
    dot/.style={circle,fill,inner sep=0pt,minimum size=1.6mm},
    lbl/.style={font=\small},
    sgn/.style={font=\scriptsize,inner sep=1.5pt},
    every node/.style={font=\small}
  ]
\small
  \node[blk,minimum width=20,minimum height=20] (inv)   at (0,0)       {};
  \node[blk,minimum width=25,minimum height=25]   (tblk)  at (5.1,0)     {};
  \node[blk]                                            (pcalc) at (0.8,-3.4)  {\small Power\\[-2pt] calc.};
  \node[blk,minimum width=0.5cm,minimum height=0.5cm,
  right=of pcalc,xshift=-10]  
    (lpf) {$\tfrac{1}{\tau_f s + 1}$};
  \node[sum,
    above=of lpf,yshift=-8] 
    (sump)  {};
 \node[sum,
    below=of lpf,yshift=8] 
    (sumq)  {};
  \node[gain,
    right=of sump,xshift=-15]
    (np)  {$n_p$};
  \node[sum,
    right=of np,xshift=-10] 
    (sumw)  {};
  \node[gain,
    right=of sumw,xshift=-15]
    (int) {$\tfrac{\omega_{\mathrm{base}}}{s}$};
  \node[gain,
    right=of sumq,xshift=35]
    (nq)  {$n_q$};
  \node[sum,
    below=of nq,yshift=-20]           
    (sumv) {};
  \node[blk,minimum width=0.5cm,
    left=of sumv,xshift=15]
    (piv)  {$\mathrm{PI}_v(s)$};
  \node[sum,
    left=of piv,xshift=15]
    (sumi)  {};
  \node[sum,
    left=of sumi,xshift=5]
    (sumi2) {};
  \node[blk,minimum width=0.5cm,
    left=of sumi2,xshift=15]
    (pii) {$\mathrm{PI}_i(s)$};
  \node[sum,
    left=of pii,xshift=15]
    (sum3) {};
  \node[blk,minimum width=0.5cm,minimum height=0.5cm,
  below=of pii,xshift=0,yshift=20]  
  (ff)   {$\tfrac{1}{\tau_{v}s + 1}$};

  \coordinate (c0) at (inv.east);

  \draw[wire] (inv.south west) -- (inv.north east);
  \node at ($(inv.center)+(-0.15,0.15)$) {$=$};
  \node at ($(inv.center)+(0.15,-0.15)$) {$\sim$};

  \draw[wire] (c0) -- ++(0.2,0);
  \draw[wire,res] ($(c0)+(0.2,0)$)  -- ++(0.5,0) node[midway,above=2pt,lbl] {$R_f$};
  \draw[wire]     ($(c0)+(0.7,0)$)  -- ++(0.2,0);
  \draw[wire,ind] ($(c0)+(0.9,0)$) -- ++(0.5,0) node[midway,above=2pt,lbl] {$L_f$};
  \draw[wire]     ($(c0)+(1.4,0)$) -- ++(1.4,0);

  \draw[-{Latex[length=2mm]}] ($(c0)+(1.9,0)$) -- ++(0.3,0)
        node[midway,above=2pt,lbl] {$\mathbf{i}_f^{dq}$};

  \coordinate (capX) at ($(c0)+(2.3,0)$);
  \draw[wire] (capX) -- ++(0,-0.3);
  \draw[wire] ($(capX)+(-0.24,-0.3)$)  -- ++(0.48,0)
        node[left=1pt,yshift=-2,pos=0.1,lbl] {$C_f$};
  \draw[wire] ($(capX)+(-0.24,-0.42)$) -- ++(0.48,0);
  \draw[wire] ($(capX)+(0,-0.42)$) -- ++(0,-0.3);
  \draw[wire] ($(capX)+(-0.26,-0.72)$)  -- ++(0.52,0);
  \draw[wire] ($(capX)+(-0.17,-0.8)$) -- ++(0.34,0);
  \draw[wire] ($(capX)+(-0.08,-0.88)$) -- ++(0.16,0);

  \coordinate (vcTap) at ($(capX)+(0.6,-0.7)$);  
  \node[lbl] at ($(vcTap)+(-0.05,0.3)$) {$\mathbf{v}_c^{dq}$};
  \draw[-{Latex[length=2mm]}] ($(c0)+(3.7,0)$) -- ++(0.27,0)
        node[midway,above=1pt,lbl] {$\mathbf{i}^{dq}$}
        node[midway,below=1pt,lbl] {$\mathbf{v}^{dq}$};

  \draw[wire,ind] ($(c0)+(2.8,0)$) -- ++(0.5,0) node[midway,above=2pt,lbl] {$L_g$};
  \draw[wire]     ($(c0)+(3.3,0)$) -- (tblk.west);

  \draw[wire] (tblk.south west) -- (tblk.north east);
  \node at ($(tblk.center)+(-0.2,0.2)$) {$dq$};
  \node at ($(tblk.center)+(0.15,-0.25)$) {$DQ$};

  \draw[wire] (tblk.east) -- ++(1.2,0)
        node[midway,above=1pt,lbl] {$\mathbf{v}^{DQ}$}
        node[midway,below=1pt,lbl] {$\mathbf{i}^{DQ}$};
  \draw[-{Latex[length=2mm]}] ($(tblk.east)+(0.4,0)$) -- ++(0.27,0);

  \coordinate (pcVc) at ([yshift=1.5mm]pcalc.west);
  \coordinate (pcIf) at ([yshift=-1.5mm]pcalc.west);

  \draw[sig]  ($(pcVc)+(-0.5,0.0)$)  -- (pcVc)
    node[pos=0.0,left,lbl,yshift=2.5] {$\mathbf{i}_f^{dq}$};
  \draw[sig]  ($(ff.east)+(0.5,0)$) -- (ff.east)
    node[pos=0.0,right,lbl] {$\mathbf{v}_c^{dq}$};

  \draw[sig]  ($(pcIf)+(-0.5,0.0)$) -- (pcIf)
    node[pos=0.0,left,lbl,yshift=-2.5] {$\mathbf{v}_c^{dq}$};

  \draw[sig] ([yshift= 1.5mm]pcalc.east) -- ([yshift= 1.5mm]lpf.west)
        node[midway,above=0pt,lbl] {$P$};
  \draw[sig] ([yshift=-1.5mm]pcalc.east) -- ([yshift=-1.5mm]lpf.west)
        node[midway,below=0pt,lbl] {$Q$};

  \draw[sig] ($(lpf.north)+(0.0,0)$) -- (sump.south)
    node[pos=0.4,right=1pt,lbl] {$\bar{P}$}
    node[pos=0.9,right=0pt,sgn] {$-$};
  \draw[sig]  ($(sump.west)+(-0.5,-0.0)$)  -- (sump.west)
    node[pos=0.0,left,lbl] {$P^{\mathrm{set}}$}
    node[pos=0.9,above,sgn] {$+$};
  \draw[sig] (sump.east) -- (np.west);
  \draw[sig] (np.east) -- (sumw.west)
        node[pos=0.4,above=1pt,lbl] {$\omega$}
        node[pos=0.9,above=1pt,sgn] {$+$};
   \draw[sig]  ($(sumw.south)+(0.0,-0.5)$)  -- (sumw.south)
    node[pos=0.0,below,lbl] {$\omega_0$}
    node[pos=0.9,right,sgn] {$-$};
  \draw[sig] (sumw.east) -- (int.west);
  \draw[sig] (int.north) |- ($(tblk.south) + (0,-0.5)$) 
    node[pos=0.25,right,lbl] {$\delta$}
  -- (tblk.south) ;

  \draw[sig] ($(lpf.south)+(0.0,0)$) -- (sumq.north)
        node[pos=0.4,right,lbl] {$\bar{Q}$}
        node[pos=0.9,right,sgn] {$-$};
  \draw[sig]  ($(sumq.west)+(-0.5,0)$)  -- (sumq.west)
    node[pos=0.0,left,lbl] {$Q^{\mathrm{set}}$}
    node[pos=0.9,above,sgn] {$+$};
  \draw[sig] (sumq.east) -- (nq.west);
  \draw[sig] (nq.south) -- (sumv.north)
        node[pos=0.3,right=1pt,lbl] {$[v_{\mathrm{ref}}^d,0]^T$}
        node[pos=0.93,right=1pt,sgn] {$+$};
    \draw[sig]  ($(sumv.south)+(0,-0.5)$)  -- (sumv.south)
    node[pos=0.0,below,lbl] {$\mathbf{v}_c^{dq}$}
    node[pos=0.9,right,sgn] {$-$};

  \draw[sig] (sumv) -- (piv.east);
  \draw[sig] (piv.west) -- (sumi.east)
        node[sgn,pos=0.85,below] {$+$};
  \draw[sig] (sumi) -- (sumi2.east)
        node[pos=0.3,below=1pt,lbl] {$\mathbf{i}_{\mathrm{ref}}^{dq}$}
        node[sgn,pos=0.85,below] {$+$};
   \draw[sig] ($(sumi.north)+(0.5,0.5)$)  -| (sumi.north)        
        node[pos=0.90,right,sgn] {$+$}
        node[pos=0.0,right,lbl] {$\mathbf{i}^{dq}$};
   \draw[sig]  ($(sumi2.north)+(0.5,0.5)$) -| (sumi2.north)         
        node[pos=0.0,right,lbl] {$\mathbf{i}_f^{dq}$}
        node[pos=0.90,right,sgn] {$-$};

  \draw[sig] (sumi2) -- (pii.east);
  \draw[sig] (pii.west) -- (sum3.east)
        node[sgn,pos=0.7,below] {$+$};
  \draw[sig] (ff.west) -| (sum3.south)
        node[sgn,pos=0.9,right] {$+$};
   \draw[sig]  ($(sum3.north)+(0.5,0.5)$)  -| (sum3.north)
    node[pos=0.0,right,lbl] {$\omega J\mathbf{i}_f^{dq}$}
    node[sgn,pos=0.9,right] {$+$};

  \draw[sig] (sum3.west) -- ($(sum3.west)+(-0.2,0)$) |- (inv.west)
        node[pos=0.85,above,lbl] {$\mathbf{v}_f^{dq}$};

\node[draw,black,thin,dotted, anchor=north west,
    minimum width =7.7cm, minimum height =1.8 cm]
    (box1) at (-1.0,0.7) {};
    \node at (box1.north)[above=1pt]{\small \textbf{Filter and DQ transform}};

\node[draw,black,thin,dotted, anchor=north west,
    minimum width =7.7cm, minimum height =3.3cm]
    (box2) at (-1.0,-1.7) {};
    \node at (box2.north)[above=0pt]{\small \textbf{Droop control}};

\node[draw,black,thin,dotted, anchor=north west,
    minimum width =7.7cm, minimum height =2.5cm]
    (box3) at (-1.0,-5.7) {};
    \node at (box3.north)[above=0pt]{\small \textbf{Voltage and current control}};

\end{tikzpicture}